\newcommand{\bigO}{{\mathcal{O}}}
\newcommand{\bigOsoft}{\tilde{\mathcal{O}}}
\newcommand{\lc}{\operatorname{lc}}
\newcommand{\bideg}{\operatorname{bideg}}
\newcommand{\HO}{\mathcal H}
\newcommand{\bN}{\mathbb{N}}
\newcommand{\cM}{\mathcal M}
\def\gathen#1{{#1}}
\newtheorem{lemma}{Lemma}
\newtheorem{theorem}[lemma]{Theorem}
\newtheorem{fact}[lemma]{Fact}
\newtheorem{cor}[lemma]{Corollary}
\newtheorem{problem}{Problem}
\newtheorem{definition}{Definition} 
\begin{document}
\conferenceinfo{ISSAC XXXX,}{}
\CopyrightYear{2010}
\crdata{}

\title{Complexity of Creative Telescoping\\ for Bivariate Rational Functions%
\titlenote{\small %
We warmly thank the referees for their very helpful comments.
---
AB and FC were supported in part
by the Microsoft Research\,--\,Inria Joint Centre,
and SC and ZL by a grant of the
National Natural Science Foundation of China (No.~60821002).%
\vspace{-25pt}}}
\newfont{\authfntsmall}{phvr at 11pt}
\newfont{\eaddfntsmall}{phvr at 9pt}
\def\more-auths{%
\end{tabular}
\begin{tabular}{c}}
\numberofauthors{2}
\author{
\alignauthor {\authfntsmall Alin Bostan, Shaoshi Chen, Fr\'ed\'eric Chyzak}\\
\affaddr{Algorithms Project-Team, INRIA Paris-Rocquencourt}\\
\affaddr{78153 Le Chesnay (France)}\\
\email{\eaddfntsmall{\{alin.bostan,shaoshi.chen,frederic.chyzak\}@inria.fr}}
\alignauthor {\authfntsmall Ziming Li}\\
\affaddr{Key Laboratory of Mathematics Mechanization, Academy of Mathematics and System Sciences}\\
\affaddr{100190 Beijing (China)}\\
\email{\eaddfntsmall{zmli@mmrc.iss.ac.cn}}}

\maketitle
\begin{abstract}
The long-term goal initiated in this work is to obtain fast
algorithms and implementations for definite integration in
Almkvist and Zeilberger's framework of (differential)
creative telescoping.
Our complexity-driven approach is to
obtain tight degree bounds on the various expressions
involved in the method. To make the
problem more tractable, we restrict
to \emph{bivariate rational\/} functions. By considering this
constrained class of inputs, we are able to blend the general method
of creative telescoping with the well-known Hermite reduction. We
then use our new method to compute diagonals of rational power series
arising from combinatorics.
\end{abstract}

\vspace{1mm}
\noindent
{\bf Categories and Subject Descriptors:} \\
\noindent I.1.2 [{\bf Computing Methodologies}]: Symbolic and
Algebraic Manipulations --- \emph{Algebraic Algorithms}

\vspace{1mm}
\noindent {\bf General Terms:} Algorithms, Theory.

\vspace{1mm}
\noindent {\bf Keywords:} Hermite reduction, creative telescoping.

\begin{section}{Introduction}
The long-term goal of the research initiated in the present work is to
obtain fast algorithms and implementations for the definite integration of
general special functions, in a complexity-driven perspective.

As most special-function integrals cannot be expressed in closed form,
their evaluation cannot be based on table look-ups only, and even when
closed forms are available, they may prove to be intractable in
further manipulations.  In both cases, the difficulty can be mitigated
by representing functions by annihilating differential operators.
This motivated Zeilberger to introduce a method now known as
\emph{creative telescoping\/}~\cite{Zeilberger1990}, which
applies to a large class of special functions:
the D-finite functions~\cite{Lipshitz1989} defined by
sets of linear differential equations of any order, with polynomial
coefficients.  Zeilberger's method applies in general to multiple
integrals and sums.

\begin{figure*} \begin{center} \renewcommand{\arraystretch}{1.2}
\tabcolsep4pt
\begin{tabular}{l|l|ll|ll|ll}
  \hline
  & Method & $\deg_{D_x}(L)$ & $\deg_x(L)$ & $\deg_x(g)$ & $\deg_y(g)$ & Complexity & \\ \hline
  Minimal & Hermite reduction (new) & $\leq d_y$ & $\bigO(d_xd_y^2)$ & $\bigO(d_xd_y^2)$ &  $\bigO(d_y^2)$ & $\bigOsoft(d_xd_y^{\omega+3})$ & Las Vegas \\ \cline{2-8}
  Telescoper & Almkvist and Zeilberger & $\leq d_y$ & $\bigO(d_xd_y^2)$ & $\bigO(d_xd_y^2)$ & $\bigO(d_y^2)$ & $\bigOsoft(d_xd_y^{2\omega+2})$ & Las Vegas \\ \hline
  Nonminimal& Lipshitz elimination & $\leq6(d_x+1)(d_y+1)$ & $\bigO(d_x d_y)$ & $\bigO(d_x^2d_y)$ & $\bigO(d_xd_y^2)$ & $\bigO(d_x^{3\omega} d_y ^{3\omega})$ & deterministic \\\cline{2-8}
  Telescoper & Cubic size & $\leq6d_y$ & $\bigO(d_xd_y)$ & $\bigO(d_xd_y)$ & $\bigO(d_y^2)$ & $\bigO(d_x^{\omega}d_y^{3\omega})$ & deterministic \\\hline
\end{tabular}
\caption{Complexity of creative telescoping methods (under Hyp.~(H')), together with bounds on output}\label{fig:complexity}
\end{center}
\vskip-15pt
\end{figure*}


A sketch of Zeilberger's method is as follows.
Given a D-finite function~$f$ of the variables
$x$ and~$y$, the definite integral
$ F(x)= \int_\alpha^\beta f(x, y)\,dy$
is D-finite, and a linear differential equation satisfied by~$F$ can
be constructed~\cite{Zeilberger1990}.
To explain this, let $k$ be
a field of characteristic zero, $D_x$ and~$D_y$ be the usual
derivations on the rational-function field $k(x, y)$,
both restricting to zero on~$k$,
and let $k(x, y)\langle D_x, D_y \rangle$ be the ring of linear differential
operators over $k(x, y)$.
The heart of the method is to solve the
\emph{differential telescoping equation}
\eqref{eq:CT} below
for $L\in k[x]\langle D_x \rangle\setminus \{0\}$ and $g=R(f)$ for
some $R\in k(x, y)\langle D_x, D_y\rangle$.
The operator~$L$ is called a
\emph{telescoper\/} for~$f$, and $g$~a \emph{certificate\/} of~$L$ for~$f$.
Under the assumption
\begin{equation*}
\lim_{y\rightarrow\alpha}g(x,y)=\lim_{y\rightarrow\beta}g(x,y)
\quad\text{for $x$ in some domain},
\vspace{-0.1cm}
\end{equation*}
$L(x, D_x)$ is then proved to be an annihilator of $F$.

The main emphasis in works since the 1990's has been on finding
telescopers of order minimal over all telescopers for~$f$, which are
called \emph{minimal telescopers}.
(Two minimal telescopers differ by a multiplicative factor in~$k(x)$.)
In view of the computational
difficulty of solving~\eqref{eq:CT}, there has been special
attention to subclasses of inputs. Of particular importance is the
case of hyperexponential functions, defined by first-order
differential equations, studied by Almkvist and Zeilberger
in~\cite{Almkvist1990}. Their method is a direct differential
analogue of Zeilberger's algorithm for the recurrence
case~\cite{Zeilberger1991}.

On the other hand, very little is known about the complexity of
creative telescoping: the only related result seems to be an
analysis in~\cite{Gerhard2004} of an algorithm for hyperexponential
indefinite integration.
In order to get complexity estimates, we simplify the
problem by restricting to a smaller class of inputs, namely that of
bivariate rational functions.
Although restricted, this class already has many applications,
for instance in combinatorics,
where many nontrivial problems are encoded as diagonals of rational
formal power series, themselves expressible as integrals.
Our goal thus reads as follows.

\begin{problem}
Given $f=P/Q\in k(x, y)\setminus \{0\}$, find a pair $(L, g)$ with
$L = \sum_{i=0}^\rho \eta_i(x)D_x^i$ in $k[x]\langle D_x \rangle
\setminus \{0\}$ and $g$ in $k(x, y)$ such that
\begin{equation}\label{eq:CT}
L(x, D_x)(f) = D_y(g).
\end{equation}
\end{problem}

By considering this more constrained class of inputs, we are indeed
able to blend the general method of creative telescoping with the
well-known Hermite reduction~\cite{Hermite1872}.

Essentially two algorithms for minimal telescopers can be
found in the literature:
The classical way~\cite{Almkvist1990}
is to apply a differential analogue of Gosper's indefinite summation
algorithm, which reduces the problem to solving an auxiliary linear
differential equation for polynomial solutions.
An algorithm developed later in~\cite{GeddesLe2002} (see
also~\cite{Le2000}) performs Hermite reduction on~$f$ to get an
additive decomposition of the form $f = D_y(a) + \sum_{i=1}^m
u_i/v_i$, where~the $u_i$ and $v_i$ are in $k(x)[y]$ and the~$v_i$
are squarefree. Then, the algorithm in~\cite{Almkvist1990} is
applied to each~$u_i/v_i$ to get a telescoper~$L_i$ minimal for it.
The least common left multiple of the $L_i$'s is then proved to be a
minimal telescoper for~$f$. This algorithm performs well only for
specific inputs (both in practice and from the complexity
viewpoint), but it inspired our Lemma~\ref{le:dGf-for-F'/F}
via~\cite{Le2000}.

As a first contribution in this article, we present a new, provably
faster algorithm for computing minimal telescopers for bivariate
rational functions.
Instead of a single use of Hermite reduction as in~\cite{Le2000}, we apply
Hermite reduction to the $D_x^i(f)$'s, iteratively for~$i=0,1,\dots$,
which yields
\begin{equation} \label{EQ:incremental}
 D_x^i (f)  =  D_y (g_i) + \frac{w_i}{w}
\end{equation}
for some factor~$w$ of the squarefree part of the denominator of~$f$.
If $\eta_0, \ldots, \eta_\rho \in k(x)$ are not all zero and such that
$\sum_{i=0}^\rho \eta_i w_i = 0$, then the operator~$\sum_{i=0}^\rho
\eta_i D_x^i$ is a telescoper for~$f$, and more specifically, the first nontrivial linear
relation obtained in this way yields a minimal telescoper for~$f$.

As a second contribution, we give the first proof of a polynomial
complexity for creative telescoping on a specific class of inputs,
namely on bivariate rational functions.
For \emph{minimal\/} telescopers, only a polynomial bound on~$d_x$ (but none
on~$d_y$) was given for special inputs in~\cite{GeddesLe2002};
more specifically, we derive complexity estimates for all
mentioned methods (see Fig.~\ref{fig:complexity}), showing that
our approach is faster.
Furthermore, we analyse the bidegrees of \emph{non minimal\/}
telescopers generated by other approaches:
Lipshitz' work~\cite{Lipshitz1988} can be
rephrased into an existence theorem for telescopers with
polynomial size; the approach followed in the recent work on
algebraic functions~\cite{BCLSS2007} leads to
telescopers of smaller degree sizes.
These are new instances of the philosophy, promoted in~\cite{BCLSS2007},
that relaxing minimality
can produce smaller outputs.

A third contribution is a fast Maple implementation~\cite{OurSoft},
incorporating a
careful implementation of the original Hermite reduction algorithm,
making use of the special form of~$w_i/w$ in~\eqref{EQ:incremental}
and of usual modular techniques (probabilistic rank estimate) to
determine when to invoke the solver for linear algebraic equations.
Experimental results indicate that our implementation
outperforms Maple's core routine.

Note that for the fastest method we propose, denoted by~\verb+H1+ in
Tables~\ref{tab:random}--\ref{tab:algos}, we chose to output the
certificate as a mere sum of (small) rational functions, without any
form of normalisation. This choice seems to be uncommon for
creative-telescoping algorithms, but a motivation is how the
certificate is used in practice: Very often, like for applications
to diagonals in \S\,\ref{sec:implementation}, the certificate is
actually not needed. In other applications, the next step of the
method of creative telescoping is to integrate~\eqref{eq:CT} between
$\alpha$ and~$\beta$, leading to $L(F)(x) = g(x, \alpha) - g(x,
\beta)$. Therefore, only evaluations of the certificate are really
needed, and normalisation can be postponed to after specialising at
$\alpha$ and~$\beta$.

The end of this section, \S\,\ref{sec:background}, provides classical
complexity results, notation, and hypotheses that will be used
throughout.  We then study Hermite reduction over~$k(x)$ in
\S\,\ref{sec:Hermite-reduction}, proving output degree bounds and a
low-complexity algorithm.  This is then applied in
\S\,\ref{sec:minimal-order} to derive our new algorithm for creative
telescoping, and to compare its complexity with that of Almkvist and
Zeilberger's approach.
For nonminimal telescopers, we show the existence
of some of lower arithmetic size in \S\,\ref{sec:nonminimal-order}:
cubic for nonminimal order instead of quartic for minimal order.
See the summary in Figure~\ref{fig:complexity}, where the low
complexity of algorithms for minimal telescopers relies on Storjohann
and Villard's algorithms~\cite{Storjohann2005}, thus inducing a
\emph{certified\/} probabilistic feature.
We apply our results to the
calculation of diagonals in \S\,\ref{sec:implementation}, and describe our
implementation and comment on execution timings
in~\S\,\ref{sec:implementation}.

\begin{subsection}{Background on complexity --- Notation}
\label{sec:background} We recall basic notation and complexity facts
for later use. Let $k$ be again a field of characteristic zero.
Unless otherwise specified, all complexity estimates are given in
terms of arithmetical operations in $k$, which we denote by ``ops''.
Let $k[x]_{\leq d}^{m\times n}$ be the set of $m\times n$ matrices
with coefficients in~$k[x]$ of degree at most~$d$. Let
$\omega\in[2,3]$ be a feasible exponent of matrix multiplication, so
that two matrices from~$k^{n\times n}$ can be multiplied using
$\bigO(n^{\omega})$ ops.
Facts \ref{EvaInter} and~\ref{le:polymatrix} below
show the complexity of multipoint evaluation, rational
interpolation,
and algebraic operations on polynomial matrices using fast
arithmetic, where the notation $\bigOsoft(\cdot)$ indicates
cost estimates with hidden logarithmic
factors~\cite[Def.~25.8]{MCA2003}.

\begin{fact}\label{EvaInter}
For $p\in k[x]$ of degree less than~$n$, pairwise
distinct $u_0,\dots, u_{n-1}$ in~$k$, and $v_0, \dots, v_{n-1}\in k$,
we have:
\vspace{-0.15cm}
\begin{enumerate}
\item[(i)] Evaluating $p$ at the $u_i$'s takes $\bigOsoft(n)$ ops.
\vspace{-0.2cm}
\item[(ii)] For $m\in \{1, \dots, n\}$, constructing
$f=s/t\in k(x)$ with $\deg_x(s)<m$ and $\deg_x(t)\le n-m$ such that
$t(u_i)\neq 0$ and $f(u_i)=v_i$ for $0\le i\le n-1$ takes
$\bigOsoft(n)$~ops.
\end{enumerate}
\end{fact}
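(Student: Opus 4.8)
\medskip
\noindent\emph{Proof sketch.}
Both parts should follow from the classical divide-and-conquer arsenal of fast polynomial arithmetic, in which a product or a Euclidean division in degree~$n$ costs $\bigOsoft(n)$ ops~\cite{MCA2003}. For part~(i) the plan is to invoke \emph{fast multipoint evaluation}: first build the subproduct tree of the linear forms $x-u_0,\dots,x-u_{n-1}$, a binary tree of depth $\bigO(\log n)$ whose products at any fixed level have total degree~$n$, hence built level by level in $\bigOsoft(n)$; then descend it, replacing at each internal node the residue of~$p$ modulo that node's product polynomial by its two residues modulo the node's children, through two Euclidean divisions costing $\bigOsoft(d)$ at a node of degree~$d$; summing over the $\bigO(\log n)$ levels stays $\bigOsoft(n)$, and the leaves return $p\bmod(x-u_i)=p(u_i)$.

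For part~(ii) I would set up \emph{Cauchy (rational) interpolation} through the extended Euclidean algorithm. Reusing the subproduct tree, compute $M=\prod_i(x-u_i)$ and, by fast Lagrange interpolation, the unique $P\in k[x]$ with $\deg P<n$ and $P(u_i)=v_i$, both in $\bigOsoft(n)$. Then run the extended Euclidean algorithm on $(M,P)$, producing remainders $r_0=M,r_1=P,r_2,\dots$ of strictly decreasing degree with cofactors $s_jM+t_jP=r_j$, $\gcd(s_j,t_j)=1$, and the standard degree identity $\deg t_j=n-\deg r_{j-1}$. Take $\ell$ to be the unique index with $\deg r_\ell<m\le\deg r_{\ell-1}$, so that $\deg r_\ell<m$ and $\deg t_\ell\le n-m$ come for free; the \emph{fast} extended Euclidean algorithm (half-gcd) extracts exactly the triple $(r_\ell,s_\ell,t_\ell)$ in $\bigOsoft(n)$ ops. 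Finally set $h=\gcd(r_\ell,t_\ell)$ and output $s=r_\ell/h$, $t=t_\ell/h$, which satisfy $\deg s<m$, $\deg t\le n-m$ and $\gcd(s,t)=1$.

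The verification I would then carry out is that $f=s/t$ interpolates the data: from $r_\ell=s_\ell M+t_\ell P$ and $M(u_i)=0$ one reads off $r_\ell(u_i)=t_\ell(u_i)v_i$; and since $h$ divides $s_\ell M=r_\ell-t_\ell P$ while $\gcd(h,s_\ell)=1$ (because $h\mid t_\ell$ and $\gcd(s_\ell,t_\ell)=1$), $h$ divides the squarefree~$M$, so $\gcd(s,t)=1$ forces $t(u_i)\neq0$, whence $f(u_i)=v_i$ away from the roots of~$h$; the finitely many remaining points, and the fact that the $f$ so produced is the unique solution up to a scalar with the prescribed degree bounds whenever one exists --- the algorithm otherwise reporting infeasibility --- I would take from the standard theory of rational-function reconstruction. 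Adding up the $\bigOsoft(n)$ costs of the multiplications, divisions, gcd, evaluation and interpolation used yields the claimed bound. I expect the main obstacle to be exactly this bookkeeping around the Euclidean remainder sequence: isolating the correct index~$\ell$, controlling $\deg t_\ell$, and arguing that cancelling the gcd against the squarefree modulus keeps~$t$ non-vanishing at every~$u_i$; the rest reduces to a direct appeal to the $\bigOsoft(n)$ cost of the fast polynomial primitives.
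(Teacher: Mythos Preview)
Your sketch is correct and aligns with the paper's treatment: the paper does not give a proof of this fact at all but simply points to \cite[Cor.~10.8, 5.18, 11.6]{MCA2003}, and what you have written is precisely the content behind those references (subproduct tree for~(i), Cauchy interpolation via the fast extended Euclidean algorithm for~(ii)). Your bookkeeping around the index~$\ell$, the degree of~$t_\ell$, and the squarefreeness of~$M$ is the right shape; the one logical slip is the clause ``$\gcd(s,t)=1$ forces $t(u_i)\neq0$,'' which is not literally true on its own---but you immediately and correctly defer the residual verification (existence, uniqueness, and non-vanishing of~$t$ at the nodes when a solution exists) to the standard rational-reconstruction theory in~\cite{MCA2003}, which is exactly what the paper does by citation.
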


\begin{fact}\label{le:polymatrix}
For $M$ in $k[x]_{\leq d}^{m\times n}$, \ $d>0$, we have:
\vspace{-0.15cm}
\begin{enumerate}
  \item[(i)] If $M=\begin{pmatrix}M_1&M_2\end{pmatrix}$ is an invertible $n\times n$ matrix with $M_i\in k[x]_{\leq d_i}^{n\times n_i}$,
where $i=1,2$ and $n_1+n_2=n$, then the degree of $\det(M)$ is at most
$n_1d_1+n_2d_2$.
\vspace{-0.2cm}
  \item[(ii)] If $M=\begin{pmatrix}M_1&M_2\end{pmatrix}$ is not of full rank and with $M_i\in k[x]_{\leq d_i}^{m\times n_i}$, where $i=1,2$ and $n_1+n_2=n$, then there exists a nonzero $u\in
  k[x]^n$ with coefficients of degree at most~$n_1d_1+n_2d_2$ such that $Mu=0$.
\vspace{-0.2cm}
  \item[(iii)]The rank $r$ and a basis of the null space of $M$ can be computed using
  $\bigOsoft(nmr^{\omega-2}d)$~ops.
\end{enumerate}
\end{fact}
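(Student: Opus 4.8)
The plan is to handle (i) and (ii) by elementary linear algebra over the field $k(x)$, using nothing beyond the multilinearity of the determinant in its columns, and to obtain (iii) as a black box from the polynomial-matrix nullspace algorithm of Storjohann and Villard~\cite{Storjohann2005}.

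For (i), I would expand $\det(M)$ by the Leibniz formula, $\det(M)=\sum_{\sigma\in S_n}\operatorname{sgn}(\sigma)\prod_{j=1}^n M_{\sigma(j),j}$. Each summand is a product of $n$ entries, exactly one taken from each column of $M$; the $n_1$ columns coming from $M_1$ contribute entries of degree at most $d_1$ and the $n_2$ columns coming from $M_2$ contribute entries of degree at most $d_2$, so every summand has degree at most $n_1d_1+n_2d_2$, hence so does $\det(M)$. Invertibility is not used: the same bound holds for any square matrix whose first $n_1$ columns have degree $\le d_1$ and whose last $n_2$ columns have degree $\le d_2$, and more generally for the determinant of any submatrix of $M$ built from $r$ distinct columns, since among those columns at most $n_1$ can come from $M_1$ and at most $n_2$ from $M_2$ (trivially, $M_1$ having only $n_1$ columns and $M_2$ only $n_2$). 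I will record this remark for use in (ii).

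For (ii), set $r:=\operatorname{rank}(M)\le n-1$ (as $M$ is not of full rank), and choose row indices $I$ and column indices $J$, each of size $r$, with the submatrix $M_{I,J}$ invertible over $k(x)$; then the columns of $M$ indexed by $J$ form a $k(x)$-basis of the column space of $M$. Pick $j_0\notin J$ and write $M_{\bullet,j_0}=\sum_{j\in J}\lambda_j M_{\bullet,j}$ with $\lambda_j\in k(x)$; restricting to the rows in $I$ gives the square system $M_{I,J}\,(\lambda_j)_{j\in J}=M_{I,j_0}$, which Cramer's rule solves as $\lambda_j=\det(N_j)/\det(M_{I,J})$, where $N_j$ is $M_{I,J}$ with its column $j$ replaced by $M_{I,j_0}$. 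Then $u\in k[x]^n$ defined by $u_{j_0}=\det(M_{I,J})$, $u_j=-\det(N_j)$ for $j\in J$, and $u_\ell=0$ otherwise, is nonzero (since $u_{j_0}=\det(M_{I,J})\neq 0$) and satisfies $Mu=\det(M_{I,J})\bigl(M_{\bullet,j_0}-\sum_{j\in J}\lambda_j M_{\bullet,j}\bigr)=0$. Finally $M_{I,J}$ and each $N_j$ are $r\times r$ matrices built from $r$ columns of $M$, so by the remark in (i) their determinants have degree at most $n_1d_1+n_2d_2$; hence all entries of $u$ do.

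Part (iii) I would simply cite: it is the main result of~\cite{Storjohann2005}, which computes a rank and a small nullspace basis of a polynomial matrix within the stated Las Vegas bound, the rectangular rank-$r$ case following from their framework (after a preliminary compression to an $m\times r$, resp. $r\times r$, problem if desired). The routine arguments in (i) and (ii) pose no genuine obstacle; the only delicate points are, in (ii), selecting $J$ so that it indexes a basis of the \emph{whole} column space of $M$ — this is exactly what guarantees that the Cramer construction produces a kernel vector of $M$ and not merely of its restriction to the rows $I$ — and, in (iii), verifying that the cost in~\cite{Storjohann2005} specialises to the quoted $\bigOsoft(nmr^{\omega-2}d)$ on rectangular inputs of rank $r$.
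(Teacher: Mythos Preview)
Your arguments for (i) and (ii) via the Leibniz expansion and Cramer's rule are correct and are exactly the standard elementary proofs one would expect; your citation of Storjohann--Villard for (iii) matches the paper's. The paper itself does not write out any proof of this Fact: it simply refers the reader to \cite[Cor.~10.8, 5.18, 11.6]{MCA2003} and \cite[Th.~7.3]{Storjohann2005}, so your write-up is more detailed than the paper's own treatment while being entirely in line with it.
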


\noindent (For proofs, see \cite[Cor.~10.8, 5.18, 11.6]{MCA2003}
and~\cite[Th.~7.3]{Storjohann2005}.)

\smallskip

We call \emph{squarefree factorisation\/} of~$Q\in k[x, y]\setminus k[x]$
w.r.t.~$y$ the unique product $qQ_1Q_2^2\cdots Q_m^m$
equal to~$Q$ for $q\in k[x]$ and $Q_i\in k[x,y]$
satisfying~$\deg_y(Q_m)>0$ and such that the~$Q_i$'s are primitive,
squarefree, and pairwise coprime. The \emph{squarefree part~$Q^*$
of~$Q$\/} w.r.t.~$y$ is the product $Q_1 Q_2\cdots Q_m$.
Let $Q^-$ denote the polynomial~$Q/Q^*$, and $\lc_y(Q)$ the leading
coefficient of~$Q$ w.r.t.~$y$.
The following two formulas about $Q$, $Q^*$, and $Q^-$ can be proved
by mere calculations.

\begin{fact}\label{prop:deflation}
Let $\hat{Q}_i$ denote $Q^*/Q_i$. Then we have
\vspace{-0.15cm}
\begin{enumerate}
 \item[(i)]  ${Q^*D_y(Q^-)}/Q^- = \sum_{i=1}^m (i-1) \hat{Q}_i D_y(Q_i) \in k[x,
 y]$;
\vspace{-0.2cm}
 \item[(ii)] ${D_y(Q)}/Q^- = \sum_{i=1}^m i \hat{Q}_i D_y(Q_i)\in k[x, y]$.
\end{enumerate}
\end{fact}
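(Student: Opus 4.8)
The plan is to reduce both identities to an elementary computation of logarithmic derivatives in the fraction field $k(x,y)$. Writing $Q = q\prod_{j=1}^m Q_j^{\,j}$ and $Q^* = \prod_{j=1}^m Q_j$ from the definition of the squarefree factorisation, one gets $Q^- = Q/Q^* = q\prod_{j=1}^m Q_j^{\,j-1}$. Since $q\in k[x]$ is annihilated by $D_y$, the product rule yields the two logarithmic-derivative identities
\[
\frac{D_y(Q^-)}{Q^-} = \sum_{j=1}^m (j-1)\,\frac{D_y(Q_j)}{Q_j},
\qquad
\frac{D_y(Q)}{Q} = \sum_{j=1}^m j\,\frac{D_y(Q_j)}{Q_j},
\]
valid in $k(x,y)$.

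For~(i), I would multiply the left-hand identity by $Q^*$ and substitute $Q^*/Q_j = \hat{Q}_j$, giving $Q^* D_y(Q^-)/Q^- = \sum_{j=1}^m (j-1)\hat{Q}_j D_y(Q_j)$. For~(ii), I would use $Q^* = Q/Q^-$ to rewrite $D_y(Q)/Q^- = (D_y(Q)/Q)\,Q^*$, then multiply the right-hand identity by $Q^*$ and again substitute $Q^*/Q_j = \hat{Q}_j$, obtaining $D_y(Q)/Q^- = \sum_{j=1}^m j\,\hat{Q}_j D_y(Q_j)$. Note that the term $j=1$ contributes nothing in~(i), consistently with $Q_1$ dividing $Q^-$ only to the power~$0$.

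It then remains to check that the right-hand sides lie in $k[x,y]$, and this is the only place where the hypotheses on the factorisation are used: since the $Q_i$'s are pairwise coprime and primitive, each $Q_j$ divides $Q^*$ in $k[x,y]$, so $\hat{Q}_j = Q^*/Q_j\in k[x,y]$; hence every summand, and therefore each finite sum, is a polynomial. I do not expect any genuine obstacle here — the argument is only a few lines, matching the ``mere calculations'' remark — but the one point worth stating carefully is that the product-rule manipulations take place a priori over~$k(x,y)$, and membership in $k[x,y]$ is recovered only afterwards, from the coprimality of the~$Q_i$'s.
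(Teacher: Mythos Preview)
Your proposal is correct and is precisely the kind of logarithmic-derivative computation the paper alludes to with ``mere calculations'' (the paper gives no further detail). One minor remark: you invoke coprimality and primitivity to conclude $\hat Q_j\in k[x,y]$, but this is immediate from $Q^*=\prod_i Q_i$, so $\hat Q_j=\prod_{i\neq j}Q_i$ is a polynomial by construction; the hypotheses on the $Q_i$ are not needed at that step.
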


Let $f=P/Q$ be a nonzero element in~$k(x, y)$, where~$P, Q$ are two
coprime polynomials in~$k[x, y]$. The degree of~$f$ in~$x$ is
defined to be~$\max\{\deg_x(P), \deg_x(Q)\}$, and denoted
by~$\deg_x(f)$. The degree of~$f$ in~$y$ is defined similarly. The
\emph{bidegree\/} of~$f$ is the pair~$(\deg_x(f),\deg_y(f))$, which
is denoted by~$\bideg(f)$. The bidegree~of~$f$ is said to be
\emph{bounded (above) by~$(\alpha,\beta)$}, written
$\bideg(f)\leq(\alpha,\beta)$, when~$\deg_x (f) \le\alpha$
and~$\deg_y (f) \le\beta$.

We say that~$f=P/Q$ is \emph{proper\/} if the degree of~$P$ in~$y$ is
less than that of~$Q$.
For creative telescoping, we may always assume w.l.o.g.\ that
$f=P/Q$~is proper. If not, rewrite $f = D_y(p) + \bar f$ with $p\in
k(x)[y]$ and $\bar f$~proper. A telescoper~$L$ for~$\bar f$ with
certificate~$\bar g$ is a telescoper for~$f$ with certificate~$L(p) + \bar g$.

\medskip\noindent{\bf Hypothesis (H)} \ \emph{From now on, $P$ and
$Q$ are assumed to be nonzero polynomials in $k[x, y]$ such that
$\deg_y(P)<\deg_y(Q)$, \ $\gcd(P, Q)=1$, and $Q$~is primitive
w.r.t.\ $y$.}

\medskip\noindent{\bf Notation} \ \emph{From now on, we write $(d_x,
d_y)$, $(d_x^*, d_y^*)$, and $(d_x^-, d_y^-)$ for the bidegrees of
$Q$, $Q^*$, and $Q^-$, respectively.}

\medskip\noindent The following hypothesis makes our estimates
concise.

\medskip\noindent{\bf Hypothesis (H')} \ \emph{Occasionally, we shall
require the ex\-tend\-ed hypothesis: Hypothesis~(H) and
$\deg_x(P)\le d_x$.}
\end{subsection}

\end{section}

\begin{section}{Hermite reduction}\label{sec:Hermite-reduction}
Let $K$ be a field of characteristic zero, either $k$ or~$k(x)$ in
what follows. Let $K(y)$ be the field of rational functions in~$y$
over~$K$, and $D_y$ be the usual derivation on it. For a rational
function~$f\in K(y)$, \emph{Hermite reduction\/}~\cite{Hermite1872}
computes rational functions $g$ and  $r=a/b$ in $K(y)$ satisfying
\begin{equation}\label{eq:ADP}
f = D_y(g) + r, \quad\deg_y(a)<\deg_y(b), \quad\text{$b$~is squarefree.}
\end{equation}
Horowitz and Ostrogradsky's method~\cite{Ostrogradsky1845,
Horowitz1971} computes the same decomposition as in~\eqref{eq:ADP}
by solving a linear system. For the details of those methods,
see~\cite[Chapter 2]{BronsteinBook}.
\begin{lemma}\label{le:unique}
If $f$~is proper, a pair $(g, r)$ satisfying~\eqref{eq:ADP} for
proper~$g,r$ is unique.
\end{lemma}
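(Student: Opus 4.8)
The plan is to prove uniqueness by contradiction: suppose $(g_1,r_1)$ and $(g_2,r_2)$ both satisfy~\eqref{eq:ADP} with all four of $g_1,g_2,r_1,r_2$ proper. Subtracting the two equations $f=D_y(g_i)+r_i$ yields $D_y(g_1-g_2)=r_2-r_1$. Write $h=g_1-g_2$ and $s=r_2-r_1$; both are proper rational functions in $K(y)$, and we must show $h=0$ (which then forces $s=0$ as well). The key structural input is that $s=r_2-r_1$ has a squarefree denominator: if $r_i=a_i/b_i$ with $b_i$ squarefree, then $s$ is a quotient whose denominator divides $\operatorname{lcm}(b_1,b_2)$, hence is itself squarefree. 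So the whole problem reduces to the following claim: \emph{if $h\in K(y)$ is proper and $D_y(h)$ has a squarefree denominator, then $D_y(h)$ is a polynomial — and since $h$ is proper, this polynomial is $0$, forcing $h$ to be a constant, and properness of $h$ then forces $h=0$.}

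To establish that claim I would look at the partial-fraction / squarefree structure of $h$. Write the denominator of $h$ (in lowest terms) as $\prod_j p_j^{e_j}$ with the $p_j$ distinct irreducible. A standard computation shows that differentiating produces a denominator $\prod_j p_j^{e_j+1}$ \emph{after clearing}, but more precisely: for each irreducible $p=p_j$ appearing in $h$ to multiplicity $e=e_j\geq 1$, the rational function $D_y(h)$ has $p$ appearing in its reduced denominator to multiplicity exactly $e+1$. The cleanest way to see this is locally: in the $p$-adic sense, if $h = c/p^e + (\text{lower order in } p)$ with $p\nmid c$, then $D_y(h) = -e\,c\,D_y(p)/p^{e+1} + \cdots$, and since $\gcd(p, c\,D_y(p))=1$ (because $p$ is irreducible, $K$ has characteristic zero so $D_y(p)\neq 0$ and $\deg_y D_y(p)<\deg_y p$, hence $p\nmid D_y(p)$) and $e\neq 0$ in $K$, the term $p^{e+1}$ does not cancel. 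Consequently, if $D_y(h)$ has a squarefree denominator, then $e+1\leq 1$ for every $j$, i.e.\ $h$ has no irreducible factors in its denominator at all, so $h\in K[y]$. Since $h$ is proper, $h$ has no polynomial part, so $h$ is a constant, and then properness (numerator degree strictly less than the denominator degree $0$) gives $h=0$.

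Finishing up: from $h=g_1-g_2=0$ we get $g_1=g_2$, and then $r_1=f-D_y(g_1)=f-D_y(g_2)=r_2$. The main obstacle — really the only non-routine point — is the local computation that differentiation strictly raises the multiplicity of each irreducible denominator factor by exactly one; everything hinges on $\operatorname{char} K=0$ (so $e$ is invertible and $D_y(p)\neq 0$ for $\deg_y p>0$) and on the coprimality $\gcd(p,D_y(p))=1$ for $p$ irreducible. Once that is in hand, the rest is bookkeeping with partial fractions. (Note that properness of $g$ is essential: without it one could add any polynomial in $y$ to $g$ and subtract its derivative from $r$, so the statement would be false; this is why the hypothesis is included, and it is what kills the ``$h$ is a nonzero constant or polynomial'' cases above.)
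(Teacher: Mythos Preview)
Your proof is correct, and it takes a genuinely different route from the paper's. The paper simply invokes \cite[Theorem~2.10]{Horowitz1971}: it expands~$r$ over the algebraic closure as a sum~$\sum_i \alpha_i/(y-b_i)$ of simple fractions and appeals to the uniqueness of the rational part of an antiderivative (essentially Liouville--Horowitz structure of elementary integrals). Your argument is instead a direct pole-order computation: subtracting two candidate decompositions, you show that any proper~$h$ with~$D_y(h)$ having squarefree denominator must vanish, because differentiation raises the multiplicity of each irreducible factor in the denominator by exactly one. This is more elementary and self-contained---no algebraic closure, no logarithms, no external citation---at the cost of a few more lines. The paper's version is terser but leans on a reference. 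Both are fine; yours would be preferable in a context where one wants to keep the exposition independent of integration theory. (Minor stylistic point: once you have shown~$h\in K[y]$, properness gives~$h=0$ in one step, so the detour through ``$h$ is a constant'' is unnecessary.)
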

\begin{proof}
This is a consequence of~\cite[Theorem~2.10]{Horowitz1971} after
writing~$r$ as a sum~$\sum_{i=1}^m\alpha_i/(x-b_i)$ and integrating.
\end{proof}
\begin{lemma}\label{le:ComplexityUHR}
Let~$f$ be a nonzero rational function in~$K(y)$ of degree at most~$n$ in~$y$,
then Hermite reduction on~$f$ can
be performed using~$\bigOsoft(n)$ operations in~$K$.
\end{lemma}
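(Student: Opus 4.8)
The plan is to reduce Hermite reduction of a single rational function $f\in K(y)$ with $\deg_y(f)\le n$ to a bounded number of fast polynomial operations over $K$, each costing $\bigOsoft(n)$ ops. First I would write $f=P/Q$ with $\gcd(P,Q)=1$ and, since we may assume $f$ proper, $\deg_y(P)<\deg_y(Q)\le n$. The first subtask is to compute a squarefree factorisation $Q=c\,Q_1Q_2^2\cdots Q_m^m$; using Yun's algorithm this reduces to a constant number of $\gcd$'s and exact divisions of polynomials of degree $\le n$, hence costs $\bigOsoft(n)$ ops over $K$ (via fast Euclidean algorithm and fast division, cf.\ the cost model underlying Fact~\ref{EvaInter}).

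The second subtask is the reduction step itself. Rather than peeling off one power of a repeated factor at a time (which would give a cost quadratic in the maximal multiplicity $m$), I would use the Horowitz--Ostrogradsky variant: seek $g=C/D$ and $r=A/B$ in~\eqref{eq:ADP} with $B=Q^\ast=Q_1\cdots Q_m$ the squarefree part and $D=Q/Q^\ast=Q^-$, where $\deg_y C<\deg_y D$ and $\deg_y A<\deg_y B$. Differentiating $g=C/D$ and clearing denominators turns $f=D_y(g)+r$ into the polynomial identity
\begin{equation*}
P \;=\; D\,D_y(C) - C\,D_y(D) \cdot (D/D) + A\,(Q/B),
\end{equation*}
i.e.\ an equation $P = U\,D_y(C) + C\,V + A\,W$ for explicit polynomials $U,V,W$ of degree $O(n)$ that are computable in $\bigOsoft(n)$ ops from the squarefree factorisation (using Fact~\ref{prop:deflation} to get $D_y(Q)/Q^-$ and the analogous quantity as polynomials). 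This is a linear system in the unknown coefficients of $C$ and $A$, and the point is that it is a \emph{structured} (essentially triangular, after ordering by the $Q_i$-adic valuation) system whose solution is obtained by a constant number of divisions-with-remainder and $\gcd$-type operations rather than by generic Gaussian elimination; each such operation is on polynomials of degree $O(n)$ and so costs $\bigOsoft(n)$. Concretely one can either recurse on the multiplicity structure à la Lemma~\ref{le:unique} but batching all factors together, or invoke the classical partial-fraction-free description of the Hermite reduction and solve the resulting Bézout-type identities with the fast extended Euclidean algorithm.

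The main obstacle is precisely to show that this reduction step is $\bigOsoft(n)$ and not merely polynomial: the naive Hermite loop removes one unit of multiplicity per iteration and does an extended-gcd of cost $\bigOsoft(n)$ each time, giving $\bigOsoft(mn)$, and $m$ can be as large as $n$. So the crux is to organise the computation so that the total work is a fixed number of $\bigOsoft(n)$ operations independent of $m$ — this is where one must appeal to a fast simultaneous partial-fraction decomposition (fast Chinese remaindering along the factors $Q_i^i$, each step $\bigOsoft(\deg)$, total $\bigOsoft(n)$) followed by per-term integration, or equivalently to the structured linear system above. Once the reduction step is in place, adding up the costs of squarefree factorisation, setting up the identity, solving it, and recombining $g$ and $r$ yields the claimed $\bigOsoft(n)$ bound, completing the proof.
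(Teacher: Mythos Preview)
The paper does not prove this lemma: its proof is the single line ``See \cite[Theorem~22.7]{MCA2003}.'' Your proposal is thus not a different route to the same proof but an attempt to reprove a textbook result, and it has a real gap at the decisive step.

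Your primary route, via the Horowitz--Ostrogradsky system, asserts that this system is ``essentially triangular'' and solvable by ``a constant number'' of fast gcd and division steps. That assertion is precisely the nontrivial content and is left unjustified; indeed, the paper remarks immediately after this lemma that Horowitz--Ostrogradsky costs $\bigO(n^\omega)$ operations, so without an explicit structural solver this route yields the wrong bound. Your fallback---fast partial fractions along the $Q_i^i$ followed by ``per-term integration''---is the right shape but also incomplete as stated. After splitting $P/Q=\sum_i P_i/Q_i^i$ in $\bigOsoft(n)$, the ordinary Hermite loop on $P_i/Q_i^i$ performs $i-1$ steps, and the $k$th step manipulates a numerator of degree $\Theta(k\deg Q_i)$, so the per-factor cost is $\bigOsoft(i^2\deg Q_i)$ and the total is $\bigOsoft\bigl(\sum_i i^2\deg Q_i\bigr)$; for $Q=V^n$ with $\deg V=1$ this is $\Theta(n^2)$, not $\bigOsoft(n)$. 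The ingredient you are missing, and which is the actual content of the cited theorem, is to keep the running numerator in $Q_i$-adic form (precomputing $(Q_i')^{-1}\bmod Q_i$ once), so that each of the $i-1$ Hermite steps touches only $\bigOsoft(\deg Q_i)$ coefficients; then the per-factor cost becomes $\bigOsoft(i\deg Q_i)$ and the total $\bigOsoft(n)$. You correctly named the obstacle (the naive loop is quadratic in the multiplicity) but did not supply this trick.
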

\begin{proof}
See~\cite[Theorem 22.7]{MCA2003}.
\end{proof}
In contrast, the method of Horowitz and Ostrogradsky takes
$\bigO(n^\omega)$ operations in~$K$~\cite[\S\,22.2]{MCA2003}. Thus,
Hermite's method is quasi-optimal and asymptotically faster than the
former.

From now on, we fix $K=k(x)$ and analyse the complexity
of
Hermite reduction over~$k(x)$ in terms of
operations in~$k$.
To this end, we use an evaluation-interpolation approach.

\begin{subsection}{Output size estimates}
We derive an upper bound on the bidegrees of $g$ and~$r$
satisfying~\eqref{eq:ADP} by studying the linear system
in~\cite{Horowitz1971}.

Analysing Hermite reduction (under~(H))
shows the existence of $A,a \in k(x)[y]$
with $\deg_y(A)<d_y^-$, $\deg_y(a)<d_y^*$~and
\begin{equation}\label{eq:HOansatz}
\frac{P}{Q} = D_y\left(\frac{A}{Q^-}\right) + \frac{a}{Q^*}.
\end{equation}
In order to bound the bidegrees of $A$ and~$a$, we
reformulate~\eqref{eq:HOansatz} into the equivalent form
\begin{equation}\label{eq:HO}
P = Q^* D_y(A)-\left(\frac{Q^*D_y({Q^-})}{Q^-}\right) A + Q^- a,
\end{equation}
where $Q^*D_y(Q^-)/Q^-$ is a polynomial in $k[x, y]$ of bidegree at most $(d_x^*, d_y^*-1)$ by Fact~\ref{prop:deflation}.
Viewing $A$ and~$a$ as polynomials in $k(x)[y]$ with undetermined
coefficients, we form the following linear system, equivalent to~\eqref{eq:HO},
\begin{equation}\label{eq:HOsystem}
\begin{pmatrix}\HO_1&\HO_2\end{pmatrix}
\begin{pmatrix}\hat{A}\\\hat{a}\end{pmatrix} = \hat{P},
\end{equation}
where $\HO_1\in k[x]^{d_y \times d_y^-}_{\le d_x^*}$, $\HO_2\in
k[x]^{d_y \times d_y^*}_{\le d_x^-}$, and $\hat A$, $\hat a$, and
$\hat P$ are the coefficient vectors of $A$, $a$, and $P$ with sizes
$d_y^-$, $d_y^*$, and $d_y$, respectively.
Under the constraint of properness of $A/Q^-$ and~$a/Q^*$, $(A,a)$ is
unique by Lemma~\ref{le:unique}.
Then~\eqref{eq:HOsystem} has a unique solution, which leads to
the following lemma.
\begin{lemma}\label{le:HOsystem}
The matrix $\begin{pmatrix}\HO_1&\HO_2\end{pmatrix}$ is invertible over $k(x)$.
\end{lemma}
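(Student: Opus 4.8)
The statement to prove is Lemma~\ref{le:HOsystem}: the matrix $\begin{pmatrix}\HO_1&\HO_2\end{pmatrix}$ is invertible over $k(x)$. Let me sketch how I'd approach this.

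The matrix is square of size $d_y$ (since $d_y^- + d_y^* = d_y$), so invertibility over $k(x)$ is equivalent to injectivity of the associated $k(x)$-linear map. The plan is to show that the kernel is trivial by a direct argument on the level of rational functions. Suppose $(\hat A, \hat a)$ is a kernel vector, i.e. polynomials $A \in k(x)[y]$ with $\deg_y(A) < d_y^-$ and $a \in k(x)[y]$ with $\deg_y(a) < d_y^*$ satisfying the homogeneous version of~\eqref{eq:HO}, namely $Q^* D_y(A) - \bigl(Q^* D_y(Q^-)/Q^-\bigr) A + Q^- a = 0$. By reversing the equivalence between~\eqref{eq:HO} and~\eqref{eq:HOansatz}, this says exactly that $0 = D_y(A/Q^-) + a/Q^*$, i.e. the rational function $0$ admits the decomposition~\eqref{eq:ADP} with $g = A/Q^-$ and $r = a/Q^*$. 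Here $g$ is proper because $\deg_y(A) < d_y^- \le \deg_y(Q^-)$ (and $Q^-$ is a factor of $Q$, hence the denominator sits correctly), and $r$ is proper because $\deg_y(a) < d_y^* = \deg_y(Q^*)$ with $Q^*$ squarefree.

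Now the obvious competing decomposition of $0$ is the trivial one, $0 = D_y(0) + 0$, with both components proper. By Lemma~\ref{le:unique}, the proper pair $(g,r)$ satisfying~\eqref{eq:ADP} for the proper input $f = 0$ is unique, so $A/Q^- = 0$ and $a/Q^* = 0$, whence $A = 0$ and $a = 0$, i.e. $\hat A = 0$ and $\hat a = 0$. This shows the map is injective, hence the square matrix is invertible over $k(x)$.

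The one point requiring a little care — and the main (mild) obstacle — is the bookkeeping that makes $g = A/Q^-$ genuinely \emph{proper} in the sense of Lemma~\ref{le:unique}: one must check that $Q^-$ (with $\deg_y(Q^-) = d_y^-$) really is the denominator in lowest-terms form appearing in the decomposition, or at least that $A/Q^-$ has numerator degree strictly below denominator degree so that Lemma~\ref{le:unique} applies verbatim; this follows from the degree constraint $\deg_y(A) < d_y^-$ built into the ansatz, since properness is a condition on degrees alone and is preserved under passing to a common denominator. With that observed, the uniqueness lemma does all the work and no computation is needed. An alternative, slightly more computational route would be to invoke Lemma~\ref{le:unique} directly on the inhomogeneous system~\eqref{eq:HOsystem}: the existence of \emph{some} solution (from the analysis of Hermite reduction) together with uniqueness of the proper decomposition forces the solution to be unique, and a consistent square linear system with a unique solution has invertible matrix. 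I would present the homogeneous-kernel version as it is the cleanest.
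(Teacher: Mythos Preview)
Your proposal is correct and follows essentially the same approach as the paper: the paper argues (in the paragraph immediately preceding the lemma) that under the properness constraints the pair $(A,a)$ is unique by Lemma~\ref{le:unique}, so the square system~\eqref{eq:HOsystem} has a unique solution, whence the matrix is invertible. Your homogeneous-kernel formulation is just the contrapositive of this (applying Lemma~\ref{le:unique} to $f=0$ rather than to general~$f$), and indeed the alternative you sketch at the end is exactly the paper's argument.
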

As the matrix $\begin{pmatrix}\HO_1&\HO_2\end{pmatrix}$ is
uniquely defined by~$Q$, we call it the matrix \emph{associated\/}
with~$Q$, denoted by $\HO(Q)$. Let $\delta$ be its determinant, so
that $\deg_x(\delta)\leq \mu:=d_x^*d_y^-+d_x^-d_y^*$ by
Fact~\ref{le:polymatrix}\emph{(i)}.
For later use, we also define $\delta'$~as the determinant
of~$\HO({Q^*}^2)$, so that $\deg_x(\delta')\leq \mu':=2d_x^*d_y^*$ by
Fact~\ref{le:polymatrix}\emph{(i)} and since $({Q^*}^2)^-=Q^*$.
\begin{lemma} \label{le:HRsize}
There exist $B, b\in k[x, y]$ with
$\deg_y(B) < d_y^-$ and~$\deg_y(b) < d_y^*$, and such that:
\vspace{-0.15cm}
\begin{enumerate}
\item[(i)] $\frac{P}{Q} = D_y \left(\frac{B}{\delta Q^-}\right) + \frac{b}{\delta Q^*}$;
\vspace{-0.2cm}
\item[(ii)] $\deg_x(B) \le \mu - d_x^* + \deg_x(P)$
and~$\deg_x(b)\le \mu - d_x^- + \deg_x(P)$.
\end{enumerate}
\end{lemma}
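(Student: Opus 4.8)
The plan is to solve the linear system~\eqref{eq:HOsystem} by Cramer's rule and then track the degrees through the resulting cofactor expressions. First I would multiply the Hermite decomposition~\eqref{eq:HOansatz} by the determinant~$\delta$ of $\HO(Q)$: since $\delta\hat A$ and $\delta\hat a$ are, by Cramer's rule, obtained by replacing a column of $\HO(Q)$ by $\hat P$ and taking determinants, each entry of $\delta\hat A$ and $\delta\hat a$ is a $k[x]$-linear combination of the entries of~$\hat P$ with polynomial cofactors. Setting $B:=\delta A$ and $b:=\delta a$ then gives $B,b\in k[x,y]$ with $\deg_y(B)<d_y^-$ and $\deg_y(b)<d_y^*$ inherited from the column dimensions, and~\eqref{eq:HOansatz} immediately turns into part~\emph{(i)}.

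For part~\emph{(ii)} I would expand the relevant cofactor determinant. To bound $\deg_x(B)$, note that each coefficient of $B$ is (up to sign) the determinant of the matrix obtained from $\HO(Q)=\begin{pmatrix}\HO_1&\HO_2\end{pmatrix}$ by deleting one column of the $\HO_1$-block and inserting~$\hat P$ in its place. That matrix has the shape $\begin{pmatrix}N_1&N_2\end{pmatrix}$ with $N_1\in k[x]^{d_y\times(d_y^--1)}$ of column degrees $\le d_x^*$ together with one extra column of degree $\le\deg_x(P)$, and $N_2=\HO_2\in k[x]^{d_y\times d_y^*}$ of column degrees $\le d_x^-$. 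Applying the column-degree determinant bound of Fact~\ref{le:polymatrix}\emph{(i)} (in its obvious per-column form) gives $\deg_x(B)\le (d_y^--1)d_x^*+d_x^-d_y^*+\deg_x(P)=\mu-d_x^*+\deg_x(P)$. The bound on $b$ is identical with the roles of the two blocks swapped: deleting a column from the $\HO_2$-block costs $d_x^-$ instead of $d_x^*$, yielding $\deg_x(b)\le d_x^*d_y^-+(d_y^*-1)d_x^-+\deg_x(P)=\mu-d_x^-+\deg_x(P)$.

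The one point needing a little care, rather than being a genuine obstacle, is the justification that $\delta A$ and $\delta a$ are actually \emph{polynomials} in~$x$ and that they are given by these cofactor determinants: this is exactly Cramer's rule applied to~\eqref{eq:HOsystem}, which is legitimate because $\HO(Q)$ is invertible over $k(x)$ by Lemma~\ref{le:HOsystem}, and the uniqueness from Lemma~\ref{le:unique} guarantees that the solution produced this way is the $(A,a)$ of~\eqref{eq:HOansatz}. The degree bookkeeping itself is routine once the cofactor structure is in place; the only thing to remember is that the inserted column~$\hat P$ contributes $\deg_x(P)$ rather than $d_x^*$ or $d_x^-$, which is precisely what produces the $-d_x^*$ and $-d_x^-$ corrections in the stated bounds.
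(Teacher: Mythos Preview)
Your proposal is correct and follows exactly the same approach as the paper, which simply says ``Applying Cramer's rule to~\eqref{eq:HOsystem} leads to~\emph{(i)}; Assertion~\emph{(ii)} next follows by determinant expansions.'' You have merely supplied the details the paper leaves implicit, and your degree bookkeeping via the per-column form of Fact~\ref{le:polymatrix}\emph{(i)} is precisely the intended expansion.
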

\begin{proof}
Applying Cramer's rule to~\eqref{eq:HOsystem}
leads to~\emph{(i)}.
Assertion~\emph{(ii)\/} next follows by determinant expansions.
\end{proof}

In what follows, we shall encounter proper rational functions
with denominator~$Q$ satisfying~$Q={Q^*}^2$.
The following lemma is an easy corollary of Lemma~\ref{le:HRsize} for
such functions.
\begin{cor} \label{cor:specialQ}
Assuming $Q={Q^*}^2$ in addition to Hypothesis~(H), there exist $B,
b\in k[x, y]$ with $\deg_y(B)$ and~$\deg_y(b)$ less than~$d_y^*$, and
such that
\begin{enumerate}
\item[(i)] $\frac{P}{{Q^*}^2} = D_y\left(\frac{B}{\delta'Q^*}\right) + \frac{b}{\delta'Q^*}$;
\item[(ii)] $\deg_x(B)$ and~$\deg_x(b)$ are bounded
by $\mu'-d_x^*+\deg_x(P)$.
\end{enumerate}
\end{cor}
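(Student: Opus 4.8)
The plan is to specialise Lemma~\ref{le:HRsize} to the case $Q={Q^*}^2$ and to observe that every quantity appearing there collapses to the one named in the corollary. The key remark is that, since $Q^*$ is squarefree by definition, the extra hypothesis $Q={Q^*}^2$ forces the squarefree factorisation of~$Q$ w.r.t.~$y$ to reduce to $Q=Q_2^2$ with $Q_2=Q^*$; hence $Q^-=Q/Q^*=Q^*$, which is precisely the identity $({Q^*}^2)^-=Q^*$ already recorded when $\delta'$ was introduced. Consequently $(d_x^-,d_y^-)=(d_x^*,d_y^*)$, so that $\mu=d_x^*d_y^-+d_x^-d_y^*=2d_x^*d_y^*=\mu'$, and the matrix $\HO(Q)=\HO({Q^*}^2)$ is by definition the one whose determinant is~$\delta'$, whence $\delta=\delta'$ in this case.

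With these identifications in hand, I would then simply read off Lemma~\ref{le:HRsize} for this particular~$Q$. It produces $B,b\in k[x,y]$ with $\deg_y(B)<d_y^-=d_y^*$ and $\deg_y(b)<d_y^*$, which gives the degree-in-$y$ claims of the corollary. Part~(i) of the lemma, $P/Q=D_y(B/(\delta Q^-))+b/(\delta Q^*)$, becomes $P/{Q^*}^2=D_y(B/(\delta'Q^*))+b/(\delta'Q^*)$ after substituting $Q={Q^*}^2$, $Q^-=Q^*$ and $\delta=\delta'$, which is part~(i) of the corollary. Part~(ii) of the lemma, $\deg_x(B)\le\mu-d_x^*+\deg_x(P)$ and $\deg_x(b)\le\mu-d_x^-+\deg_x(P)$, turns in both cases into the bound $\mu'-d_x^*+\deg_x(P)$ once we use $\mu=\mu'$ and $d_x^-=d_x^*$, which is exactly part~(ii).

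There is no genuine obstacle here: the only content is the bookkeeping identity $({Q^*}^2)^-=Q^*$ together with the consequent equalities $\mu=\mu'$ and $\delta=\delta'$. Once these are noted, the corollary is a verbatim transcription of Lemma~\ref{le:HRsize}, which is precisely why it can be presented as an easy corollary.
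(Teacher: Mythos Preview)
Your proposal is correct and is precisely the approach the paper intends: the paper states the corollary as ``an easy corollary of Lemma~\ref{le:HRsize}'' without further details, and you have spelled out exactly the right specialisation, using the identity $({Q^*}^2)^-=Q^*$ (already noted in the paper when $\delta'$ and $\mu'$ are introduced) to collapse $Q^-$, $d_x^-$, $d_y^-$, $\mu$, $\delta$ to $Q^*$, $d_x^*$, $d_y^*$, $\mu'$, $\delta'$ respectively.
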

\end{subsection}
\begin{subsection}{Algorithm by evaluation and interpolation}
We observe that an asymptotically optimal complexity can be achieved
by evaluation and interpolation at each step of Hermite reduction
over $k(x)$. This inspires us to adapt Gerhard's modular
method~\cite{Gerhard2001, Gerhard2004} to~$k(x, y)$.
Recall that, by Hyp.~(H), $Q\in k[x, y]$~is nonzero and primitive over~$k[x]$.

\medskip\noindent {\bf Definition}
\emph{An element $x_0\in k$ is \emph{lucky\/} if\/
$\lc_y(Q)(x_0)\neq 0$ and $\deg_y(\gcd(Q(x_0, y), D_y(Q(x_0,
y))))=d_y^-$.}
\begin{lemma}\label{le:unlucky}
There are at most $d_x(2d_y^*-1)$ unlucky points.
\end{lemma}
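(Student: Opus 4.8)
The plan is to count the values $x_0 \in k$ for which at least one of the two luckiness conditions fails. A point is unlucky exactly when $\lc_y(Q)(x_0) = 0$, or when $\deg_y\big(\gcd(Q(x_0,y), D_y(Q(x_0,y)))\big) \neq d_y^-$. I will bound each contribution separately.

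First I would handle the leading-coefficient condition: $\lc_y(Q)$ is a nonzero polynomial in $k[x]$ (nonzero because $Q$ is primitive w.r.t.\ $y$ by Hyp.~(H), hence $\lc_y(Q)$ cannot be the zero polynomial; it has degree at most $d_x$), so it vanishes at no more than $d_x$ points. Next, for the gcd-degree condition, the key fact is that specialisation can only make a gcd degree jump \emph{up}, never down: since $\gcd_y(Q, D_yQ) = Q^-$ over $k(x)[y]$ and $\deg_y(Q^-) = d_y^-$, for any $x_0$ with $\lc_y(Q)(x_0) \neq 0$ (so that degrees in $y$ are preserved under specialisation) we have $\deg_y\big(\gcd(Q(x_0,y), D_yQ(x_0,y))\big) \geq d_y^-$, with equality failing only at the finitely many $x_0$ dividing a suitable resultant. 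Concretely, write $Q = Q^-\,Q^*$ with $\gcd_y(Q^-,Q^*)=1$; the extra drop happens precisely when $\mathrm{Res}_y(Q^-(x_0,y), Q^*(x_0,y)) = 0$, equivalently when $x_0$ is a root of $\mathrm{Res}_y(Q^-, Q^*) \in k[x]$.

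The remaining work is to bound $\deg_x\mathrm{Res}_y(Q^-, Q^*)$. Using the standard degree bound for resultants (the resultant of two polynomials of $y$-degrees $m$ and $n$, whose coefficients have $x$-degrees at most $a$ and $b$ respectively, has $x$-degree at most $na + mb$), together with $\bideg(Q^-)\le(d_x^-, d_y^-)$ and $\bideg(Q^*)\le(d_x^*, d_y^*)$, gives $\deg_x \mathrm{Res}_y(Q^-, Q^*) \le d_y^* d_x^- + d_y^- d_x^*$. The main obstacle is that this bound, $d_x^- d_y^* + d_x^* d_y^-$, does not visibly match the target $d_x(2d_y^* - 1)$; I expect one needs the inequalities $d_x^- + d_x^* \le d_x$ and $d_y^-, d_y^* \le d_y^*$ (the latter since $d_y^- \le d_y^* $ is false in general---rather $d_y^- = d_y - d_y^*$, and one uses $d_y^-\le d_y^*$ only when $m\le 2$, so more care is needed). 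A cleaner route, which I would ultimately follow, is to bound instead $\deg_x$ of the discriminant-type quantity $\mathrm{Res}_y(Q, D_yQ)/\lc_y(Q)^{\text{(something)}}$ or to argue directly: the unlucky $x_0$ (beyond the $\le d_x$ zeros of $\lc_y(Q)$) are roots of a subresultant coefficient of $Q$ and $D_yQ$, and the relevant subresultant has $x$-degree at most $(2d_y^*-1)d_x$ by expanding the corresponding determinant whose rows come from the coefficients of $Q$ (degree $\le d_x$ in $x$) and $D_yQ$ (degree $\le d_x$ in $x$), with a total of about $2d_y^* - 1$ such rows after accounting for the known gcd degree $d_y^-$. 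Summing the two contributions and observing that the $\le d_x$ zeros of $\lc_y(Q)$ are absorbed into the count yields the stated bound $d_x(2d_y^*-1)$.
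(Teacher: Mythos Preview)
Your eventual route---the $d_y^-$th subresultant of $Q$ and $D_y(Q)$---is exactly the paper's proof. The paper lets $\sigma$ be this subresultant, cites \cite[Cor.~5.5]{Gerhard2004} to assert that \emph{every} unlucky point (including those where $\lc_y(Q)$ vanishes) is a zero of~$\sigma$, and cites \cite[Cor.~3.2(ii)]{Gerhard2004} for $\deg_x(\sigma)\le d_x(2d_y^*-1)$. Your determinant-size heuristic is the right intuition for the latter: the relevant Sylvester-type minor has $(d_y)+(d_y-1)-2d_y^- = 2d_y^*-1$ rows, each with entries of $x$-degree at most~$d_x$. Note that no separate ``summing'' of the $\lc_y(Q)$ zeros is needed; they are already zeros of~$\sigma$.

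Your first attempted characterisation, however, is genuinely wrong and should be deleted rather than merely abandoned. You claim that (away from zeros of $\lc_y(Q)$) the gcd degree jumps precisely when $\mathrm{Res}_y(Q^-,Q^*)(x_0)=0$. This misses the case where $Q^*(x_0,y)$ itself acquires a repeated root. For instance take $Q=y^2-x$, so $Q^-=1$, $Q^*=Q$, $d_y^-=0$. At $x_0=0$ we get $\gcd(y^2,2y)=y$ of degree~$1\neq 0$, so $x_0=0$ is unlucky; yet $\mathrm{Res}_y(1,\,y^2-x)=1$ never vanishes. A correct version of this decomposition would need both $\mathrm{Res}_y(Q^-,Q^*)$ and $\mathrm{disc}_y(Q^*)$ (and care about leading coefficients of $Q^-$ and $Q^*$ individually), after which the degree bookkeeping becomes no simpler than the direct subresultant bound. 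That is why the bound you obtained from this route did not match the target.
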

\begin{proof}
Let $\sigma\in k[x]$ be the $d_y^-$th subresultant w.r.t.\ $y$ of $Q$
and~$D_y(Q)$.
By~\cite[Corollary 5.5]{Gerhard2004}, all
unlucky points are in the set $U=\{\,x_0\in k \mid \sigma(x_0)=0\,\}$.
By~\cite[Corollary 3.2\emph{(ii)}]{Gerhard2004}, $\deg_x(\sigma)\le
d_x(2d_y^*-1)$.
\end{proof}
\begin{lemma}\label{le:commutative}
Let $B$, $b$, and $\delta$ be the same
as in Lemma~\ref{le:HRsize}, and let $x_0\in k$ be lucky.
Then $\delta(x_0)\neq 0$ and $(B(x_0, y), b(x_0, y))$
is the unique pair such that
\begin{equation}\label{eq:ADk}
\frac{P(x_0, y)}{Q(x_0, y)} =
D_y\left(\frac{B(x_0, y)}{\delta(x_0)Q^-(x_0, y)}\right) +
\frac{b(x_0, y)}{\delta(x_0) Q^*(x_0, y)}.
\end{equation}
\end{lemma}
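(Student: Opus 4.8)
The plan is to transport the polynomial identity behind Lemma~\ref{le:HRsize} through the evaluation homomorphism $x\mapsto x_0$, and to establish $\delta(x_0)\neq 0$ by recognising the specialised matrix $\HO(Q)(x_0)$ as the matrix of a Hermite-reduction map over~$k$, so that the uniqueness Lemma~\ref{le:unique} can be invoked over~$k$. First I would extract what the two luckiness conditions provide. Since $\lc_y(Q)=\lc_y(Q^*)\lc_y(Q^-)$ and $\lc_y(Q)(x_0)\neq 0$, none of $Q$, $Q^*$, $Q^-$ loses $y$-degree under specialisation, so $\deg_y Q(x_0,y)=d_y$, $\deg_y Q^*(x_0,y)=d_y^*$, $\deg_y Q^-(x_0,y)=d_y^-$, and in particular $Q^*(x_0,y)$ and $Q^-(x_0,y)$ are nonzero. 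By Fact~\ref{prop:deflation}\emph{(ii)}, $Q^-$ divides $D_y(Q)$ in $k[x,y]$, so $Q^-(x_0,y)$ divides both $Q(x_0,y)$ and $D_y(Q(x_0,y))$; combined with the degree count and the luckiness condition $\deg_y\gcd(Q(x_0,y),D_y(Q(x_0,y)))=d_y^-$, this forces $\gcd(Q(x_0,y),D_y(Q(x_0,y)))$ and $Q^-(x_0,y)$ to coincide up to a nonzero scalar. Hence $Q^*(x_0,y)=Q(x_0,y)/Q^-(x_0,y)$ is a scalar multiple of the squarefree part of $Q(x_0,y)$; in particular $Q^*(x_0,y)$ is squarefree.

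Next I would transfer the identity. Clearing denominators in Lemma~\ref{le:HRsize}\emph{(i)} (multiply by $\delta Q$ and use Fact~\ref{prop:deflation}\emph{(i)}) gives the polynomial identity $\delta P=Q^*D_y(B)-EB+Q^-b$ in $k[x,y]$, where $E:=Q^*D_y(Q^-)/Q^-\in k[x,y]$ --- this is just \eqref{eq:HO} scaled by~$\delta$. Evaluating at~$x_0$ yields $\delta(x_0)P(x_0,y)=Q^*(x_0,y)D_y(B(x_0,y))-E(x_0,y)B(x_0,y)+Q^-(x_0,y)b(x_0,y)$, with $E(x_0,y)=Q^*(x_0,y)D_y(Q^-(x_0,y))/Q^-(x_0,y)$ still a polynomial (obtained by specialising $EQ^-=Q^*D_y(Q^-)$ and dividing by $Q^-(x_0,y)\neq 0$). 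A direct computation from $Q=Q^*Q^-$ then shows that, for any $C\in k[y]$, $Q^*(x_0,y)D_y(C)-E(x_0,y)C=Q(x_0,y)\,D_y\!\big(C/Q^-(x_0,y)\big)$; applying this with $C=B(x_0,y)$ and dividing by $\delta(x_0)Q(x_0,y)$ turns the specialised identity into exactly \eqref{eq:ADk} for the pair $(B(x_0,y),b(x_0,y))$, whose $y$-degrees are less than $d_y^-$ and $d_y^*$ because those of $B$ and $b$ are.

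The crux is $\delta(x_0)\neq 0$. By construction (see \eqref{eq:HO}--\eqref{eq:HOsystem}), $\HO(Q)$ is the matrix, in the monomial bases, of the $k(x)$-linear map sending $(A,a)$ to $Q^*D_y(A)-EA+Q^-a$, where $A$ ranges over polynomials of $y$-degree less than $d_y^-$, $a$ over those of $y$-degree less than $d_y^*$, and the image has $y$-degree less than $d_y$; since coefficient extraction in~$y$ commutes with evaluation in~$x$, the specialisation $\HO(Q)(x_0)$ is the matrix of the analogous $k$-linear map $\Phi\colon(\tilde B,\tilde a)\mapsto Q^*(x_0,y)D_y(\tilde B)-E(x_0,y)\tilde B+Q^-(x_0,y)\tilde a$, a square endomorphism of a space of dimension $d_y^-+d_y^*=d_y$, with $\delta(x_0)=\det\bigl(\HO(Q)(x_0)\bigr)$. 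Suppose $\delta(x_0)=0$: then $\Phi$ has a nonzero kernel element $(\tilde B,\tilde a)$, which by the displayed identity means $D_y\!\big(\tilde B/Q^-(x_0,y)\big)+\tilde a/Q^*(x_0,y)=0$, a decomposition of~$0$ with proper summands of the form \eqref{eq:ADP} --- the first is proper since $\deg_y\tilde B<d_y^-=\deg_y Q^-(x_0,y)$, and $Q^*(x_0,y)$ is squarefree. As $(0,0)$ is also such a decomposition, Lemma~\ref{le:unique} forces $\tilde B=\tilde a=0$, a contradiction; hence $\HO(Q)(x_0)$ is invertible and $\delta(x_0)\neq 0$.

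Finally, for uniqueness: if a pair $(\tilde B,\tilde b)$ with $\deg_y\tilde B<d_y^-$ and $\deg_y\tilde b<d_y^*$ also satisfies \eqref{eq:ADk}, subtracting the two instances of \eqref{eq:ADk} yields $D_y\!\big((B(x_0,y)-\tilde B)/(\delta(x_0)Q^-(x_0,y))\big)+(b(x_0,y)-\tilde b)/(\delta(x_0)Q^*(x_0,y))=0$, once more a decomposition of~$0$ of the form \eqref{eq:ADP}, with proper first summand and squarefree denominator $\delta(x_0)Q^*(x_0,y)$ (recall $\delta(x_0)\in k\setminus\{0\}$); Lemma~\ref{le:unique} then forces both summands to vanish, so $\tilde B=B(x_0,y)$ and $\tilde b=b(x_0,y)$. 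I expect the main obstacle to be the bookkeeping in the $\delta(x_0)\neq 0$ step: one must check carefully that luckiness prevents every relevant $y$-degree from dropping --- so that $\HO(Q)(x_0)$ genuinely is the matrix of~$\Phi$ and the properness claims hold --- and that it makes $Q^*(x_0,y)$ squarefree, which is exactly what lets Lemma~\ref{le:unique} be applied over~$k$.
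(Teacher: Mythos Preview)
Your proof is correct and follows essentially the same route as the paper's: both exploit that at a lucky point the specialised matrix $\HO(Q)(x_0)$ coincides with the Hermite--Ostrogradsky matrix $\HO(Q(x_0,y))$ of the specialised denominator, and then deduce $\delta(x_0)\neq0$ from its invertibility, after which evaluating Lemma~\ref{le:HRsize}\emph{(i)} gives~\eqref{eq:ADk} and uniqueness comes from Lemma~\ref{le:unique}. The only cosmetic difference is that the paper cites Lemma~\ref{le:HOsystem} (applied over~$k$ to $Q(x_0,y)$) for the invertibility step, whereas you unwind that lemma's proof inline by exhibiting a kernel element as a proper Hermite decomposition of~$0$ and invoking Lemma~\ref{le:unique} directly.
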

\begin{proof}
By the luckiness of~$x_0$, $\deg_y (Q(x_0, y)) = d_y$ and
$Q(x_0,y)^-=Q^-(x_0,y)$, so $Q(x_0, y)^*=Q^*(x_0,y)$.
This implies $\HO(Q)(x_0,y)=\HO(Q(x_0,y))$, which, by
Lemma~\ref{le:HOsystem}, is invertible over~$k(x)$.
Hence~$\delta(x_0) \neq 0$,
and the evaluation at~$x=x_0$ of the equality in
Lemma~\ref{le:HRsize}\emph{(i)\/} is well-defined.
Thus, $(B(x_0, y), b(x_0, y))$ is a solution of~\eqref{eq:ADk}.
Uniqueness follows from Lemma~\ref{le:unique}.
\end{proof}
\begin{theorem}\label{th:BHR}
Algorithm \textsf{HermiteEvalInterp} in Figure~\ref{fig:HREvaInter} is
correct and takes~$\bigOsoft(d_xd_y^2+\deg_x(P)d_y)$~ops.
\end{theorem}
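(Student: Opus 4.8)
The plan is to analyse Algorithm \textsf{HermiteEvalInterp} step by step, combining the structural results of \S\ref{sec:Hermite-reduction}. The algorithm presumably works as follows: pick $N$ points $x_0,\dots,x_{N-1}$ in $k$, discard unlucky ones (there are at most $d_x(2d_y^*-1)$ of them by Lemma~\ref{le:unlucky}), perform a univariate Hermite reduction over $k$ on each $P(x_i,y)/Q(x_i,y)$ to obtain a pair $(\tilde B_i,\tilde b_i)$ with $\tilde B_i=\delta(x_i)^{-1}B(x_i,y)$ and $\tilde b_i=\delta(x_i)^{-1}b(x_i,y)$ by Lemma~\ref{le:commutative}, and then interpolate in $x$ to recover $B/\delta$ and $b/\delta$ as rational functions of~$x$, exactly as in Lemma~\ref{le:HRsize}\emph{(i)}.

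For correctness, I would argue in three parts. First, luckiness: Lemma~\ref{le:unlucky} bounds the bad points, so choosing $N$ a fixed constant more than the number needed for interpolation guarantees enough lucky points; on each lucky point Lemma~\ref{le:commutative} certifies that the univariate reduction returns precisely the evaluation of the global pair of Lemma~\ref{le:HRsize}. Second, the interpolation degree: by Lemma~\ref{le:HRsize}\emph{(ii)} together with $\deg_x(\delta)\le\mu=d_x^*d_y^-+d_x^-d_y^*$, the numerators $B,b$ and the common denominator $\delta$ all have $x$-degree $\bigO(\mu+\deg_x(P))=\bigO(d_xd_y+\deg_x(P))$, so $N=\bigO(d_xd_y+\deg_x(P))$ sample points suffice to reconstruct the fractions $B/\delta$, $b/\delta$ by rational interpolation (Fact~\ref{EvaInter}\emph{(ii)}). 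Third, uniqueness of the reconstructed object follows from Lemma~\ref{le:unique}.

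For the complexity, I would account for each phase in ops over~$k$. Forming the $N$ specialised inputs $P(x_i,y),Q(x_i,y)$ is $N$ multipoint evaluations of the $\bigO(d_y)$ coefficients (each of $x$-degree $\bigO(d_x)$), costing $\bigOsoft(d_y(d_x+N))=\bigOsoft(d_xd_y^2+\deg_x(P)d_y)$ by Fact~\ref{EvaInter}\emph{(i)}. Each univariate Hermite reduction costs $\bigOsoft(d_y)$ by Lemma~\ref{le:ComplexityUHR}, so the $N$ of them cost $\bigOsoft(Nd_y)=\bigOsoft(d_xd_y^2+\deg_x(P)d_y)$. Finally, interpolating the $\bigO(d_y)$ coefficient sequences, each of length $N$, via Fact~\ref{EvaInter}\emph{(ii)} costs $\bigOsoft(d_y N)$, again within the stated bound. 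Summing gives $\bigOsoft(d_xd_y^2+\deg_x(P)d_y)$.

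The main obstacle is the bookkeeping of degrees: one must check that the number of sample points dictated by Lemma~\ref{le:HRsize}\emph{(ii)} is genuinely $\bigO(d_xd_y+\deg_x(P))$ and not larger (using $d_x^*\le d_x$, $d_y^-\le d_y$, etc.), and that rational reconstruction of $B/\delta$ with the \emph{correct} split of degrees between numerator and denominator is licit — i.e.\ that the degrees predicted by the Cramer-rule expressions match the partial-fraction shape required by Fact~\ref{EvaInter}\emph{(ii)}. A secondary point is to make sure unlucky points, once detected (e.g.\ by a drop in $\deg_y$ of the specialised $Q$ or its gcd), are simply skipped without affecting the final interpolation, which is legitimate precisely because $B/\delta$ and $b/\delta$ are fixed rational functions independent of the evaluation grid.
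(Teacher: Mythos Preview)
Your proposal is correct and follows essentially the same approach as the paper: correctness is argued via Lemmas~\ref{le:unlucky}, \ref{le:HRsize}, \ref{le:commutative}, and~\ref{le:unique}, and the complexity is obtained by bounding the number of sample points as $\bigO(d_xd_y+\deg_x(P))$ and costing evaluation, univariate Hermite reduction, and rational interpolation via Fact~\ref{EvaInter} and Lemma~\ref{le:ComplexityUHR}. The paper is slightly more explicit about two items you leave implicit---the gcd computation of~$Q^-$ in Step~1 ($\bigOsoft(d_xd_y)$~ops) and the $\bigOsoft(d_y)$ cost of testing luckiness for each of the $\bigO(d_xd_y+\deg_x(P))$ candidate points---but both are absorbed in your announced bound.
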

\begin{proof}
Set $\nu$ to~$d_x(2d_y^*-1)$. Lemma~\ref{le:unlucky} implies that
the $\lambda+1$ lucky points found in Step~3 are all less
than~$\lambda+\nu+1$.
By Lemmas \ref{le:unique} and~\ref{le:HRsize}\emph{(i)}, $A=B/\delta$
and~$a=b/\delta$.
By Lemma~\ref{le:commutative}, $A_0=B(x_0,y)/\delta(x_0)$ and
$a_0=b(x_0,y)/\delta(x_0)$.
By Lemma~\ref{le:HRsize}\emph{(ii)\/} and
since~$\deg_x(\delta)\leq\mu$, it suffices to rationally interpolate
$A$ and~$a$ from values at $\lambda+1$ lucky points.
This shows the correctness.
The dominant computation in Step~1 is the gcd, which takes
$\bigOsoft(d_xd_y)$~ops by \cite[Cor.~11.9]{MCA2003}.
For each integer $i\le \lambda +
\nu$, testing luckiness amounts to evaluations at~$x_0$ and
computing $\gcd(Q(x_0, y), D_y(Q(x_0, y)))$, which takes
$\bigOsoft(d_y)$ ops by Fact~\ref{EvaInter}\emph{(i)\/}
and~\cite[Cor.~11.6]{MCA2003}. Then,
generating~$S$ in Step~3 costs $\bigOsoft((\lambda + \nu+1)d_y)$ ops.
By Fact~\ref{EvaInter}\emph{(i)}, evaluations in Step~4 take
$\bigOsoft((\lambda + 1)d_y)$ ops. For each $x_0\in S$, the cost of
the Hermite reduction in Step~4 is $\bigOsoft(d_y)$ ops by
Lemma~\ref{le:ComplexityUHR}. Thus, the total cost of Step~4 is
$\bigOsoft((\lambda + 1)d_y)$ ops. By
Fact~\ref{EvaInter}\emph{(ii)}, Step~5
takes $\bigOsoft((\lambda + 1)d_y)$ ops.
Since $\lambda \le 2d_xd_y+
\deg_x(P)$ and $\nu \le 2d_xd_y$,
the total cost is as announced.
\end{proof}
%
\begin{figure}
\framebox[8.4cm]{
\begin{minipage}{8.1cm}
\rule[.3cm]{0cm}{0cm}
{\rm {Algorithm \textsf{HermiteEvalInterp}($P, Q$)}

\smallskip

\noindent \quad{\sc Input}:~$P, Q\in k[x, y]$ satisfying Hypothesis~(H).

\noindent \quad{\sc Output}:~$(A, a)\in k(x)[y]^2$
solving~\eqref{eq:HOansatz}.
\begin{enumerate}
\item Compute $Q^- := \gcd(Q, D_y(Q))$ and $Q^* := Q/Q^-$;
\item Set $\lambda := 2(d_x^*{d_y^-}+d_y^*d_x^-) + \deg_x(P)-\min\{d_x^-, d_x^*\}$;
\item Set $S$ to the set of $\lambda + 1$ smallest nonnegative integers that are lucky for~$Q$;
\item For each $x_0\in S$, compute $(A_0, a_0)\in k[y]^2$ such that
\[
\frac{P(x_0, y)}{Q(x_0, y)} = D_y\left(\frac{A_0}{Q^-(x_0,
y)}\right) + \frac{a_0}{Q^*(x_0, y)}
\]
using Hermite reduction over $k$;
\item Compute $(A, a)\in k(x)[y]$ by rational interpolation and return this pair.
\end{enumerate}}
\end{minipage}}
\caption{Hermite reduction over $k(x)$ via evaluation and
interpolation.} \label{fig:HREvaInter}
\vskip-10pt
\end{figure}

\end{subsection}

\vspace{-0.14cm}
As the generic output size of Hermite reduction is proportional
to~$\lambda d_y$, which is~$\bigO((d_xd_y+\deg_x(P))d_y)$, Algorithm
\textsf{HermiteEvalInterp} has quasi-optimal complexity.
\end{section}

\begin{section}{Minimal telescopers}\label{sec:minimal-order}
We analyse two algorithms for constructing minimal telescopers for
bivariate rational functions and their certificates.
\begin{subsection}{Hermite reduction approach}\label{HRA}
We design a new algorithm, presented in Figure~\ref{fig:HRTelescoping},
to compute minimal telescopers for rational functions by basing on
Hermite reduction. For~$f =P/Q \in k(x, y)$ and~$i \in \bN$, Hermite reduction
decomposes~$D_x^i(f)$ into
\begin{equation}\label{eq:ithHR}
D_x^i (f) = D_y (g_i) + r_i,
\end{equation}
where~$g_i, r_i \in k(x,y)$ are proper. Since the squarefree
part of the denominator of~$D_x^i(f)$ divides~$Q^*$, so does the
denominator of~$r_i$.
The following lemma shows that \eqref{eq:ithHR}~recombines into
telescopers and certificates;
next, Lemma~\ref{le:minimaltele} implies that the first pair obtained in
this way by Algorithm \textsf{HermiteTelescoping}
in Figure~\ref{fig:HRTelescoping}
yields a minimal telescoper.

\begin{lemma}\label{le:bound}
The rational functions $r_0, \dots, r_{d_y^*}$ are linearly dependent
over~$k(x)$.
\end{lemma}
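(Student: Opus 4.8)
The plan is to exhibit the $r_i$ as elements of a finite-dimensional $k(x)$-vector space and then count dimensions. By~\eqref{eq:ithHR} together with the remark immediately preceding the lemma, each $r_i$ is proper with denominator dividing~$Q^*$; hence $r_i = c_i/Q^*$ for some $c_i\in k(x)[y]$ with $\deg_y(c_i) < \deg_y(Q^*) = d_y^*$. First I would note that the set of such rational functions, namely $\{c/Q^* : c\in k(x)[y],\ \deg_y(c) < d_y^*\}$, is a $k(x)$-vector space of dimension exactly~$d_y^*$, with basis $1/Q^*, y/Q^*, \dots, y^{d_y^*-1}/Q^*$. The $d_y^*+1$ functions $r_0, r_1, \dots, r_{d_y^*}$ all lie in this space, so they must be linearly dependent over~$k(x)$. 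This is really the whole argument; the only thing that needs care is the justification that the denominator of~$r_i$ divides~$Q^*$ and that $r_i$ is proper in~$y$.

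To make that justification airtight, I would argue as follows. Writing $f = P/Q$ with $(P,Q)$ as in Hypothesis~(H), the iterated derivative $D_x^i(f)$ is a rational function whose denominator divides a power of~$Q$ — differentiating in~$x$ does not introduce new irreducible factors in the denominator, since $D_x$ acts on $k(x,y)$ and $Q$ is primitive in~$y$, so the denominator of $D_x^i(f)$ is of the form $Q^{e}$ up to a factor in $k[x]$, hence its squarefree part w.r.t.~$y$ divides~$Q^*$. Moreover $D_x^i(f)$ is still proper in~$y$ (the $y$-degree of the numerator stays below that of the denominator under $D_x$-differentiation of a proper fraction). Applying Hermite reduction in the form~\eqref{eq:ADP} over $K = k(x)$ to $D_x^i(f)$ produces $r_i = a_i/b_i$ with $b_i$ squarefree, $\deg_y(a_i) < \deg_y(b_i)$, and $b_i$ a squarefree factor of the denominator of $D_x^i(f)$; therefore $b_i \mid Q^*$. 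Clearing denominators, $r_i = (a_i \cdot (Q^*/b_i))/Q^*$, and since $\deg_y(a_i) + \deg_y(Q^*/b_i) < \deg_y(b_i) + \deg_y(Q^*/b_i) = d_y^*$, we get $r_i$ in the $d_y^*$-dimensional space described above.

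The main (mild) obstacle is the bookkeeping in the previous paragraph: one must be sure that Hermite reduction over $k(x)$ genuinely yields a remainder whose denominator is a \emph{squarefree factor of~$Q^*$} and not merely something whose squarefree part divides $Q^*$ — but this is exactly the content of the decomposition~\eqref{eq:ADP} (squarefreeness of~$b$) combined with the fact that squarefree factors of $Q^{e}$ divide $Q^*$. Once that is settled, the dimension count gives $d_y^*+1$ vectors in a space of dimension $d_y^*$, forcing the $k(x)$-linear dependence of $r_0,\dots,r_{d_y^*}$, which is the assertion of Lemma~\ref{le:bound}.
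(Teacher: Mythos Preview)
Your proposal is correct and follows exactly the paper's approach: the paper's proof is the one-line observation that $\deg_y(r_iQ^*)<d_y^*$ for all~$i$, forcing $d_y^*+1$ vectors to lie in a $k(x)$-space of dimension~$d_y^*$. Your second and third paragraphs merely expand on facts the paper already records in the sentence preceding the lemma (properness of~$r_i$ and divisibility of its denominator by~$Q^*$), so the added justification is sound but not new.
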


\begin{proof}
The constraints on~$r_i$ imply
$\deg_y(r_iQ^*)< d_y^*$ for all $i\in \bN$, from which
follows the existence of a nontrivial linear dependence among the
$r_i$'s over $k(x)$.
\end{proof}

\begin{lemma}\label{le:minimaltele}
An integer~$\rho$ is minimal such that $\sum_{i=0}^\rho \eta_i r_i=0$
for $\eta_0, \ldots, \eta_\rho \in k(x)$ not all zero
if and only if
$\sum_{i=0}^\rho \eta_i D_x^i$ is a minimal telescoper for~$f$ with
certificate $\sum_{i=0}^\rho \eta_i g_i$.
\end{lemma}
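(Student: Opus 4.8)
The plan is to prove both implications by exploiting the uniqueness of the Hermite decomposition (Lemma~\ref{le:unique}) together with the characterisation of telescopers via~\eqref{eq:CT}. First, suppose $\sum_{i=0}^\rho \eta_i r_i = 0$ with the $\eta_i\in k(x)$ not all zero. Applying $\sum_i \eta_i(x)D_x^i$ to the identities~\eqref{eq:ithHR} and summing yields
\[
\Bigl(\sum_{i=0}^\rho \eta_i D_x^i\Bigr)(f) \;=\; D_y\Bigl(\sum_{i=0}^\rho \eta_i g_i\Bigr) \;+\; \sum_{i=0}^\rho \eta_i r_i \;=\; D_y\Bigl(\sum_{i=0}^\rho \eta_i g_i\Bigr),
\]
since the $\eta_i$ are free of~$y$ and hence commute with $D_y$. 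Thus $L:=\sum_i \eta_i D_x^i$ is a telescoper for~$f$ with certificate $\sum_i \eta_i g_i$, and if $\rho$ is the minimal index for which such a relation exists then in particular $\eta_\rho\neq 0$, so $L$ has order exactly~$\rho$.

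For the converse and the minimality claim I would argue that no telescoper of order smaller than this~$\rho$ can exist. Suppose $\tilde L = \sum_{i=0}^{\tilde\rho}\tilde\eta_i D_x^i$ is any telescoper for~$f$ of order $\tilde\rho\le\rho$, say with certificate $\tilde g\in k(x,y)$, so that $\tilde L(f)=D_y(\tilde g)$. On the other hand, combining~\eqref{eq:ithHR} gives $\tilde L(f) = D_y(\sum_i \tilde\eta_i g_i) + \sum_i \tilde\eta_i r_i$, whence
\[
D_y\Bigl(\tilde g - \sum_{i=0}^{\tilde\rho}\tilde\eta_i g_i\Bigr) \;=\; \sum_{i=0}^{\tilde\rho}\tilde\eta_i r_i .
\]
The right-hand side is a proper rational function with squarefree denominator (dividing $Q^*$), being a $k(x)$-linear combination of the proper $r_i$'s whose denominators divide $Q^*$; the left-hand side is $D_y$ of a rational function. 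But the only rational function that is simultaneously proper with squarefree denominator and a $y$-derivative is~$0$: indeed, writing the right-hand side in partial fractions over $\overline{k(x)}$ as $\sum_j \alpha_j/(y-\beta_j)$ with the $\beta_j$ distinct, it has an antiderivative in $\overline{k(x)}(y)$ only if all $\alpha_j=0$ (logarithms otherwise). Hence $\sum_{i=0}^{\tilde\rho}\tilde\eta_i r_i = 0$, and by minimality of~$\rho$ this forces all $\tilde\eta_i=0$ unless $\tilde\rho=\rho$; so there is no telescoper of order $<\rho$, and every telescoper of order~$\rho$ arises from a dependence $\sum_{i=0}^\rho\eta_i r_i=0$. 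This proves the ``if'' direction (the given $L$ is \emph{minimal}) and the ``only if'' direction (a minimal telescoper has order~$\rho$ and, being determined up to a factor in $k(x)$, corresponds to such a dependence, with the associated certificate $\sum_i\eta_i g_i$ since $D_y$ is injective modulo constants on the relevant space and the certificate in~\eqref{eq:CT} is determined up to an additive element of $k(x)$).

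The main obstacle I anticipate is the clean justification of the step ``a proper rational function with squarefree denominator that is a $y$-derivative must vanish'': this is exactly the content underlying Lemma~\ref{le:unique}, and I would invoke that lemma directly rather than redo the partial-fraction argument — applying it to $f=0$, whose unique proper decomposition $0 = D_y(g)+r$ has $g=r=0$, and noting that $D_y(\tilde g - \sum_i\tilde\eta_i g_i)$ together with $\sum_i\tilde\eta_i r_i$ would give a second decomposition of the proper rational function $\sum_i\tilde\eta_i r_i + D_y(0)$ if one is slightly careful about properness of the antiderivative part. A minor subtlety is ensuring the antiderivative can be taken proper (its polynomial-in-$y$ part contributes a polynomial to the $r$-side, which must then vanish by degree reasons), after which Lemma~\ref{le:unique} applies verbatim. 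Everything else is bookkeeping with the $k(x)$-linearity of $D_x^i$ applied to~\eqref{eq:ithHR} and the commutation of multiplication by $k(x)$ with $D_y$.
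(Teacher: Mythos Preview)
Your proof is correct and follows essentially the same route as the paper: combine the Hermite decompositions~\eqref{eq:ithHR} linearly and invoke the uniqueness Lemma~\ref{le:unique}. The paper is simply more economical: it observes that $L(f)=D_y\bigl(\sum_i\eta_ig_i\bigr)+\sum_i\eta_ir_i$ is already a proper decomposition of the proper function $L(f)$ in the sense of~\eqref{eq:ADP}, so by Lemma~\ref{le:unique} the $r$-part vanishes if and only if $L(f)$ is a $D_y$-derivative of something proper---which is equivalent to $L$ being a telescoper. This sidesteps your detour through an arbitrary certificate~$\tilde g$ and the partial-fraction argument (which you yourself flag as redundant), though your handling of the properness of the antiderivative is fine.
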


\begin{proof}
Multiplying~\eqref{eq:ithHR} by $\eta_i$ before summing yields
\begin{equation*}
L(f) =
  D_y\biggl(\sum_{i=0}^\rho \eta_i g_i\biggr) + \sum_{i=0}^\rho \eta_ir_i
\quad\text{for}\quad
L := \sum_{i=0}^\rho \eta_i D_x^i ,
\end{equation*}
where the first two sums are proper.
Thus, by Lemma~\ref{le:unique}, $L$~is a telescoper of order~$\rho$ for~$f$
with certificate $\sum_{i=0}^\rho \eta_i g_i$
if and only if $\sum_{i=0}^\rho \eta_i r_i=0$ with~$\eta_\rho\neq0$.
The lemma follows.
\end{proof}

\begin{subsubsection}{Order bounds for minimal telescopers}\label{se:lowerbound}
Lemmas \ref{le:bound} and~\ref{le:minimaltele} combine into an upper bound
on the order of minimal telescopers for~$f$.
\begin{cor}\label{cor:upperbound}
Minimal telescopers have order at most~$d_y^*$.
\end{cor}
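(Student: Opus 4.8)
The proof of Corollary~\ref{cor:upperbound} is essentially immediate from the two preceding lemmas, so the plan is just to chain them together cleanly. First I would invoke Lemma~\ref{le:bound}, which guarantees that the $d_y^*+1$ rational functions $r_0,\dots,r_{d_y^*}$ are linearly dependent over $k(x)$; hence there exist $\eta_0,\dots,\eta_{d_y^*}\in k(x)$, not all zero, with $\sum_{i=0}^{d_y^*}\eta_i r_i=0$. In particular, the set of integers $\rho$ admitting such a nontrivial relation $\sum_{i=0}^\rho \eta_i r_i=0$ is nonempty and contains an element $\le d_y^*$, so its minimum $\rho_{\min}$ satisfies $\rho_{\min}\le d_y^*$.

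Next I would apply Lemma~\ref{le:minimaltele} with $\rho=\rho_{\min}$: since $\rho_{\min}$ is by construction the minimal integer for which such a relation exists, the lemma tells us that $L:=\sum_{i=0}^{\rho_{\min}}\eta_i D_x^i$ is a \emph{minimal} telescoper for $f$. Because a minimal telescoper is unique up to a factor in $k(x)$ (as recalled in the introduction), every minimal telescoper for $f$ has order exactly $\rho_{\min}\le d_y^*$, which is the claim. I would write this as two or three sentences, citing Lemmas~\ref{le:bound} and~\ref{le:minimaltele} and Corollary-free.

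There is essentially no obstacle here: the only thing to be slightly careful about is the logical equivalence in Lemma~\ref{le:minimaltele} — it characterizes the \emph{minimal} $\rho$ with a dependence relation as exactly the order of a minimal telescoper, so one must phrase the argument so as to use the ``only if'' direction (from minimal dependence to minimal telescoper) rather than accidentally claiming every dependence relation gives a minimal telescoper. As long as one passes through $\rho_{\min}$ explicitly, this is automatic. The statement as written is a one-line corollary, so the ``proof'' is really just the sentence ``Immediate from Lemmas~\ref{le:bound} and~\ref{le:minimaltele}.'', optionally spelled out as above.

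\begin{proof}
By Lemma~\ref{le:bound}, the $r_i$ for $0\le i\le d_y^*$ are linearly dependent over~$k(x)$, so the least integer~$\rho$ for which there exist $\eta_0,\dots,\eta_\rho\in k(x)$, not all zero, with $\sum_{i=0}^\rho \eta_i r_i=0$ satisfies $\rho\le d_y^*$. By Lemma~\ref{le:minimaltele}, for this~$\rho$ the operator $\sum_{i=0}^\rho \eta_i D_x^i$ is a minimal telescoper for~$f$; since minimal telescopers are unique up to a factor in~$k(x)$, every minimal telescoper for~$f$ has order~$\rho\le d_y^*$.
\end{proof}
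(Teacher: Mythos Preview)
Your proposal is correct and follows exactly the paper's intended approach: the paper states just before the corollary that ``Lemmas~\ref{le:bound} and~\ref{le:minimaltele} combine into an upper bound on the order of minimal telescopers for~$f$'' and gives no further proof. Your spelled-out version simply makes this combination explicit, and your caution about passing through the minimal~$\rho$ is well-placed.
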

The bound~$6d_y$ is shown in~\cite{BCLSS2007}
for rational functions of the form $yD_y(Q)/Q$ with $Q\in k[x, y]$.
Apagodu and Zeilberger~\cite{Apagodu2006} obtain a similar bound for a
class of nonrational hyperexponential functions, but their proof does
not seem to apply to rational functions, as it heavily relies on the
presence of a nontrivial exponential part.

We also derive a lower bound on the order of the minimal telescoper,
to be used as an optimisation at the end of \S\,\ref{sec:compl-estim}:
choosing a lucky~$x_0\in k$,
next applying Hermite reduction in~$k(y)$ to~$D_x^i(f)(x_0, y)$, yields
\begin{equation}\label{eq:ithUHR}
D_x^i (f)(x_0, y) = D_y (g_{0, i}) + r_{0, i},
\end{equation}
where $g_{0, i}, r_{0, i}\in k(y)$ are proper and the denominator
of~$r_{0, i}$ divides~$Q^*(x_0, y)$.
Let $\rho_0$ be the smallest integer such that $r_{0, 0}, \dots, r_{0,
\rho_0}$ are linearly dependent over~$k$.
\begin{lemma}\label{le:lowerbound}
A minimal telescoper has order at least $\rho_0$.
\end{lemma}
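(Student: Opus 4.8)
The plan is to show that if $L=\sum_{i=0}^\rho \eta_i D_x^i$ with $\eta_i\in k(x)$ is any telescoper for $f$, then $\rho\ge\rho_0$, which suffices since a minimal telescoper is in particular a telescoper. First I would clear denominators so that we may assume the $\eta_i$ lie in $k[x]$ and are not all divisible by a common factor; in particular not all $\eta_i(x_0)$ vanish after possibly shrinking the admissible set of $x_0$. The key idea is to specialise the telescoping relation at a lucky point $x_0$ and compare it with the Hermite decompositions~\eqref{eq:ithUHR}.

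The main steps are as follows. By Lemma~\ref{le:minimaltele} (or directly), $L$ being a telescoper of order $\rho$ is equivalent to $\sum_{i=0}^\rho \eta_i r_i = 0$, where the $r_i\in k(x,y)$ come from the Hermite decompositions~\eqref{eq:ithHR} of $D_x^i(f)$. Next I would argue that for a lucky $x_0$ the specialisation commutes with Hermite reduction: since the denominator of each $r_i$ divides $Q^*$ and $Q^*(x_0,y)=Q(x_0,y)^*$ by luckiness (as in the proof of Lemma~\ref{le:commutative}), evaluating~\eqref{eq:ithHR} at $x=x_0$ produces a decomposition of $D_x^i(f)(x_0,y)$ of the form~\eqref{eq:ADP} over $k(y)$; here one must also use that the $D_x$-derivatives and the specialisation $x\mapsto x_0$ commute on $f$ when $Q(x_0,y)$ has full $y$-degree, so that $(D_x^i f)(x_0,y)$ is indeed what~\eqref{eq:ithUHR} decomposes. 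By uniqueness of the proper decomposition (Lemma~\ref{le:unique}) applied over $k(y)$, we get $r_i(x_0,y)=r_{0,i}$ for each $i\le\rho$, provided $x_0$ avoids the finitely many poles of the $\eta_i$ and of the $r_i$'s coefficients. Specialising $\sum_{i=0}^\rho \eta_i r_i=0$ at such an $x_0$ then yields $\sum_{i=0}^\rho \eta_i(x_0)\, r_{0,i}=0$ in $k(y)$, a nontrivial $k$-linear relation among $r_{0,0},\dots,r_{0,\rho}$ since the $\eta_i(x_0)$ are not all zero. By minimality of $\rho_0$ this forces $\rho\ge\rho_0$.

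The hard part will be justifying the interchange of specialisation with both Hermite reduction and the iterated derivation $D_x^i$, and in particular ensuring that a single choice of $x_0$ is simultaneously lucky, avoids the poles of all $\eta_i$, and avoids the vanishing of $\delta$ (so that the coefficients of $r_i$ specialise without collapse): one must check these are finitely many bad points, hence a good $x_0\in k$ exists because $k$ has characteristic zero and is therefore infinite — or, more carefully, one enlarges $k$ if necessary, which does not affect the order of a minimal telescoper. A clean way to package the derivation-specialisation interchange is to note that $D_x$ and evaluation at $x_0$ commute on any rational function whose denominator does not vanish at $x_0$, and that luckiness guarantees $Q(x_0,y)\ne 0$ in $k(y)$; then an easy induction on $i$ gives $(D_x^i f)(x_0,y)=D_x^i\big(f(x_0,y)\big)$ evaluated in the appropriate sense. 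Once these technical points are in place, the linear-algebra conclusion is immediate.
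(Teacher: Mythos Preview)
Your plan is the paper's proof: establish $r_{0,i}=r_i(x_0,y)$ by commuting Hermite reduction with evaluation at a lucky point (the paper does this by observing via~\cite[Cor.~5.5]{Gerhard2004} that $x_0$ stays lucky for the denominator of each $D_x^i(f)$, then invoking Lemma~\ref{le:commutative}), and then pass from the $k(x)$-dependence $\sum_i\eta_i r_i=0$ of Lemma~\ref{le:minimaltele} to a $k$-dependence among the $r_{0,i}$.

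One point needs correcting. The lucky point $x_0$ is \emph{fixed in advance} and $\rho_0$ is defined relative to it; you may not ``shrink the admissible set'' or replace $x_0$ afterwards, and doing so would prove a weaker statement. Fortunately your extra precautions are all unnecessary for the given~$x_0$: once you take the $\eta_i\in k[x]$ coprime, they have \emph{no} common root whatsoever (as $k[x]$ is a PID), so the specialised relation is automatically nontrivial at $x_0$; and luckiness of $x_0$ already forces $\delta(x_0)\neq0$ by Lemma~\ref{le:commutative}, so each $r_i$ evaluates without poles. Finally, your remark that $D_x$ commutes with evaluation at $x_0$ is wrong as written---evaluating first kills the $x$-dependence, so $D_x\bigl(f(x_0,y)\bigr)=0$---and is not needed: all that matters is that $(D_x^i f)(x_0,y)$ is well-defined, which holds because $Q(x_0,y)\neq0$.
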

\begin{proof}
We first claim that $r_{0, i}=r_i(x_0, y)$, for $r_i$ as
in~\eqref{eq:ithHR}. Note that the squarefree part w.r.t.~$y$ of the
denominator of $D_x^i(f)$ divides $Q^*$ for all $i\in\bN$.
By~\cite[Cor.~5.5]{Gerhard2004}, $x_0$ is lucky for the
denominator of $D_x^i(f)$ for all $i\in \bN$.
Then, the claim on~$r_{0,i}$ follows from Lemma~\ref{le:commutative}
applied to~$D_x^i(f)$.
Let $\rho$ be the
minimal order of a telescoper, then $r_0, \dots, r_{\rho}$ are
linearly dependent over~$k(x)$ by Lemma~\ref{le:minimaltele}. Thus
$r_{0, 0}, \dots, r_{0, \rho}$ are linearly dependent over~$k$,
which implies $\rho_0\le \rho$.
\end{proof}
\end{subsubsection}

\begin{subsubsection}{Degree bounds for minimal telescopers}
To derive degree bounds for $g_i$ and $r_i$
in~\eqref{eq:ithHR},
let $\delta$, $\delta'$, $\mu$, and $\mu'$ be defined
as before Lemma~\ref{le:HRsize},
and set $\mu'' = \mu + \mu' - 1$.

\begin{lemma} \label{le:special}
Let~$W$ be in $k[x, y]$ with $\deg_y(W) < d_y^*$. Then, for all $i
\in \bN$, there exist~$B, b\in k[x, y]$ with both $\bideg(B)$ and
$\bideg(b)$ bounded by~$(\deg_x(W) + \mu'', d_y^*-1)$, such that
\[ D_x\left(\frac{W}{{\delta}^{i+1} {\delta'}^i Q^*}\right) = D_y \left( \frac{B}{\delta ^{i+2} {\delta'}^{i+1} Q^*} \right)
+ \frac{b}{\delta ^{i+2} {\delta'}^{i+1} Q^*}.\]
\end{lemma}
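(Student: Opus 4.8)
The plan is to differentiate the left-hand side, split off the summand that is already in $Q^*$-reduced form, and handle the remainder with Corollary~\ref{cor:specialQ}. Set $u := \delta^{i+1}{\delta'}^i\in k[x]$, so that the left-hand side equals $D_x(W/(uQ^*))$. Since $u$ is free of~$y$, the product rule gives
\[
D_x\!\left(\frac{W}{uQ^*}\right) \;=\; \frac{P_1}{u\,{Q^*}^2}\;-\;\frac{W\,D_x(u)}{u^2\,Q^*},
\]
where $P_1 := Q^*D_x(W)-W\,D_x(Q^*)\in k[x,y]$. First I would record the degrees of~$P_1$: from $\deg_y(W)<d_y^*$ and $\deg_y(Q^*)=d_y^*$ one gets $\deg_y(P_1)<2d_y^*=\deg_y({Q^*}^2)$, while a direct count gives $\deg_x(P_1)\le\deg_x(W)+d_x^*-1$.

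Next I would apply Corollary~\ref{cor:specialQ} to the proper fraction $P_1/{Q^*}^2$ (whose ``$Q^-$'' is $Q^*$, so that the relevant determinant is~$\delta'$ and the relevant degree parameter is~$\mu'$). This produces $B_1,b_1\in k[x,y]$ with $\deg_y(B_1),\deg_y(b_1)<d_y^*$ and, by part~(ii) of that corollary,
\[
\deg_x(B_1),\ \deg_x(b_1)\ \le\ \mu'-d_x^*+\deg_x(P_1)\ \le\ \deg_x(W)+\mu'-1,
\]
such that $P_1/{Q^*}^2 = D_y(B_1/(\delta'Q^*)) + b_1/(\delta'Q^*)$. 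Dividing this identity through by the $D_y$-constant~$u$ and then multiplying numerator and denominator by~$\delta$ turns its first summand into $D_y(B_1\delta/(\delta^{i+2}{\delta'}^{i+1}Q^*)) + b_1\delta/(\delta^{i+2}{\delta'}^{i+1}Q^*)$.

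For the second summand of the displayed identity I would use the power rule: for $i\ge1$, $D_x(u)=\delta^i{\delta'}^{i-1}e_i$ with $e_i:=(i+1)D_x(\delta)\,\delta'+i\,\delta\,D_x(\delta')\in k[x]$, whence $\deg_x(e_i)\le\mu+\mu'-1=\mu''$ and $W\,D_x(u)/(u^2Q^*)=W\,e_i/(\delta^{i+2}{\delta'}^{i+1}Q^*)$; for $i=0$ one clears an extra factor~$\delta'$ and the same formula applies with $e_0=D_x(\delta)\,\delta'$. Adding the two contributions yields the claimed identity with $B:=B_1\delta$ and $b:=b_1\delta-W\,e_i$, and the bidegree bounds follow at once: $\deg_y(B),\deg_y(b)\le d_y^*-1$ by construction, and $\deg_x(B),\deg_x(b)\le(\deg_x(W)+\mu'-1)+\mu=\deg_x(W)+\mu''$ since $\deg_x(\delta)\le\mu$. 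The degenerate cases $W=0$ and $P_1=0$ are immediate.

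The one point that needs care --- and which I expect to be the main obstacle --- is that Corollary~\ref{cor:specialQ} is stated under Hypothesis~(H), which requires $\gcd(P,Q)=1$, whereas $\gcd(P_1,{Q^*}^2)$ need not be trivial (for instance $Q^*=y(xy+1)$ and $W=1$ give $P_1=-y^2$). I would dispatch this by noting that the decomposition and the degree estimate behind Lemma~\ref{le:HRsize} and Corollary~\ref{cor:specialQ} come from Cramer's rule applied to the square linear system~\eqref{eq:HOsystem}, whose matrix $\HO({Q^*}^2)$ depends on~${Q^*}^2$ alone and is invertible by Lemma~\ref{le:HOsystem}; hence they are valid for \emph{every} numerator of $y$-degree less than~$2d_y^*$, coprime to~${Q^*}^2$ or not. (Alternatively, one could first reduce $P_1/{Q^*}^2$ to lowest terms, at the cost of more bookkeeping on its denominator.)
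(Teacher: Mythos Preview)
Your proof is correct and follows the same route as the paper's: differentiate, separate the already-reduced piece from the one with denominator~${Q^*}^2$, and apply Corollary~\ref{cor:specialQ} to the latter (the paper sends only $WD_x(Q^*)$ through the corollary and keeps $D_x(W)\delta\delta'$ in the already-reduced numerator~$\tilde W$, but this regrouping is cosmetic). Your remark on the coprimality hypothesis is well taken---the paper's own invocation of Corollary~\ref{cor:specialQ} on $WD_x(Q^*)/{Q^*}^2$ has the same issue and tacitly relies on the same fix via the invertibility of~$\HO({Q^*}^2)$.
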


\begin{proof}
A straightforward calculation leads to
\[  D_x \left(\frac{W}{{\delta}^{i+1}  {\delta'}^i Q^*}\right) = \frac{\tilde W}{{\delta}^{i+2} {\delta'}^{i+1} Q^*}
- \frac{1}{{\delta}^{i+1} {\delta'}^i} \frac{W D_x(Q^*)}{{Q^*}^2},
\]
where~$\bideg(\tilde{W}) \le (\deg_x(W) + \mu'', d_y^* -1)$.
By Corollary~\ref{cor:specialQ},
there exist $\tilde{B}, \tilde {b}\in k[x,y]$ such that
\[ \frac{1}{{\delta}^{i+1} {\delta'}^i} \frac{W D_x(Q^*)}{{Q^*}^2}
   = \frac{1}{{\delta}^{i+2} {\delta'}^{i+1}} \left( D_y \left(\frac{\delta  \tilde B }{Q^*} \right) + \frac{
{\delta} \tilde b }{Q^*} \right), \]
with $\bideg(\tilde{B})$ and $\bideg(\tilde{b})$ bounded by
$({\deg_x(W) + \mu'-1},$ ${d_y^*-1})$.
Setting~$(B, b) = (-{\delta} \tilde B, \tilde{W}- \delta \tilde b)$
ends the proof.
\end{proof}

\begin{lemma} \label{le:ithsize}
For~$i\in\bN$, there exist~$B_i, b_i \in k[x,y]$ such that
\begin{equation} \label{EQ:ten}
D_x^i(f) = D_y \left(\frac{B_i}{\delta^{i+1}{\delta'}^i {Q^*}^i Q^-}\right) + \frac{b_i}{\delta^{i+1}{\delta'}^i Q^*}.
\end{equation}
Moreover, $\bideg(B_i)\le (\deg_x(P) + \mu + i\mu'' + (i-1)
d_x^*, id_y^*+d_y^- -1)$ and $\bideg(b_i)\le(\deg_x(P) + \mu
+ i\mu''- d_x^-, d_y^* - 1)$.
\end{lemma}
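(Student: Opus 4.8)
The plan is to prove this by induction on~$i$, using Lemma~\ref{le:special} as the engine for the induction step and Lemma~\ref{le:HRsize} for the base case.

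\medskip\noindent\textbf{Base case.}
For~$i=0$, equation~\eqref{EQ:ten} reads $f = D_y(B_0/(\delta Q^-)) + b_0/(\delta Q^*)$, which is exactly Lemma~\ref{le:HRsize}\emph{(i)\/} with $(B_0,b_0)=(B,b)$. The degree bounds from Lemma~\ref{le:HRsize}\emph{(ii)\/} give $\deg_x(B_0)\le\mu-d_x^*+\deg_x(P)$ and $\deg_x(b_0)\le\mu-d_x^-+\deg_x(P)$, together with $\deg_y(B_0)<d_y^-$ and $\deg_y(b_0)<d_y^*$. Checking that these match the claimed bounds at~$i=0$: the bound on~$\bideg(B_0)$ claims $\deg_x\le\deg_x(P)+\mu-d_x^*$ (since the terms $i\mu''$ and $(i-1)d_x^*$ contribute $0$ and $-d_x^*$) and $\deg_y\le d_y^--1$; the bound on~$\bideg(b_0)$ claims $\deg_x\le\deg_x(P)+\mu-d_x^-$ and $\deg_y\le d_y^*-1$. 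So the base case is in place.

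\medskip\noindent\textbf{Induction step.}
Assume~\eqref{EQ:ten} holds for some~$i$ with the stated bounds. Apply~$D_x$ to both sides. The term $D_x(D_y(B_i/(\delta^{i+1}{\delta'}^i{Q^*}^iQ^-)))$ can be rewritten as $D_y$ of a proper rational function (since $D_x$ and $D_y$ commute, it equals $D_y(D_x(\cdots))$), and a direct computation expresses it with denominator $\delta^{i+2}{\delta'}^{i+1}{Q^*}^{i+1}Q^-$; I would track its numerator bidegree by the product/quotient rule, noting the extra factors of~$\delta$, $\delta'$, and~$Q^*$ each contribute their bidegrees. For the second term $D_x(b_i/(\delta^{i+1}{\delta'}^iQ^*))$, since $\deg_y(b_i)<d_y^*$, I invoke Lemma~\ref{le:special} with $W=b_i$: it yields $B,b\in k[x,y]$ with $\bideg(B),\bideg(b)\le(\deg_x(b_i)+\mu'',d_y^*-1)$ and
\[
D_x\!\left(\frac{b_i}{\delta^{i+1}{\delta'}^iQ^*}\right)=D_y\!\left(\frac{B}{\delta^{i+2}{\delta'}^{i+1}Q^*}\right)+\frac{b}{\delta^{i+2}{\delta'}^{i+1}Q^*}.
\]
Adding the two contributions and clearing to a common denominator $\delta^{i+2}{\delta'}^{i+1}{Q^*}^{i+1}Q^-$ gives $B_{i+1}$ as a sum of the new $D_y$-numerator from the first term plus $B\cdot{Q^*}^iQ^-$, and $b_{i+1}=b$ (up to the purely differential part, which folds into the $D_y$-term and hence into~$B_{i+1}$ only). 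Properness is preserved because each application produces proper pieces (Lemma~\ref{le:unique} / the construction in Lemma~\ref{le:special}).

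\medskip\noindent\textbf{Degree bookkeeping.}
The remaining work is arithmetic: verify that the degree bounds propagate correctly. For~$\deg_x(b_{i+1})$, from $b_{i+1}=b$ and Lemma~\ref{le:special} we get $\deg_x(b_{i+1})\le\deg_x(b_i)+\mu''\le(\deg_x(P)+\mu+i\mu''-d_x^-)+\mu''=\deg_x(P)+\mu+(i+1)\mu''-d_x^-$, as required; the $y$-degree bound $d_y^*-1$ is immediate. For~$\deg_x(B_{i+1})$ one adds the contributions of the two terms after clearing denominators and takes the max; using $\mu''=\mu+\mu'-1$ and $\deg_x(\delta)\le\mu$, $\deg_x(\delta')\le\mu'$, $\deg_x(Q^*)=d_x^*$, a routine comparison shows the bound $\deg_x(P)+\mu+(i+1)\mu''+id_x^*$ dominates; the $y$-degree bound $(i+1)d_y^*+d_y^--1$ follows by adding $d_y^*$ (from one extra $Q^*$ factor) to the inductive $y$-degree and from $\deg_y(B)\le d_y^*-1$. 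The main obstacle is precisely this last piece of bookkeeping: one must be careful that the "first term" (the $D_x$ of the $D_y$-part) does not blow up the $x$-degree beyond the advertised bound, which requires using the specific exponents on~$\delta$, $\delta'$ and the fact that differentiating in~$x$ raises the numerator degree by at most $\deg_x$ of the denominator minus one. Once that is checked, taking the maximum over the two summands and simplifying with $\mu''=\mu+\mu'-1$ closes the induction.
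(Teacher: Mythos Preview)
Your overall strategy---induction with Lemma~\ref{le:HRsize} for the base case and Lemma~\ref{le:special} for the $D_x$ of the reduced part---is exactly the paper's. The base case and the treatment of the second summand (yielding $b_{i+1}$ and its degree bounds) are correct and match the paper line by line.

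There is, however, a genuine gap in your handling of the first summand. You assert that a ``direct computation'' writes $D_x\bigl(B_i/(\delta^{i+1}{\delta'}^i{Q^*}^iQ^-)\bigr)$ over the denominator $\delta^{i+2}{\delta'}^{i+1}{Q^*}^{i+1}Q^-$, but the plain quotient rule does \emph{not} give this: the logarithmic-derivative term $D_x(Q^-)/Q^-$ contributes an extra factor of~$Q^-$ downstairs, so naively one lands on~$(Q^-)^2$, which would break the form of~\eqref{EQ:ten}. What rescues the claimed denominator is the identity $Q^*D_x(Q^-)/Q^-\in k[x,y]$ from Fact~\ref{prop:deflation}\emph{(i)\/} (the formula is purely algebraic in the squarefree factorisation and holds with~$D_x$ just as with~$D_y$), which replaces that extra~$Q^-$ by a single extra~$Q^*$. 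The paper invokes this explicitly, writing $D_x(1/Q^-)=-\tilde Q/Q$ with $\bideg(\tilde Q)\le(d_x^*-1,d_y^*)$; you mention only $\delta$, $\delta'$, and~$Q^*$ as contributing ``extra factors'', so this step is genuinely missing. Once the identity is in place, the numerator~$\tilde B_i$ of the first summand satisfies $\deg_x(\tilde B_i)\le\deg_x(B_i)+\mu''+d_x^*$, and combining with $\deg_x(B\,{Q^*}^iQ^-)\le\deg_x(b_i)+\mu''+id_x^*+d_x^-$ yields exactly the target $\deg_x(P)+\mu+(i+1)\mu''+id_x^*$ for~$B_{i+1}$; your final paragraph gestures at this computation but does not carry it out.
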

\begin{proof}
We proceed by induction on~$i$.
For~$i=0$, the claim follows from
Lemma~\ref{le:HRsize}.
Assume that~$i>0$ and that the claim holds
for the values less than~$i$.
For brevity, we set $\gamma= \deg_x(P) + \mu$,
\ $F_{i-1} = B_{i-1}/({\delta}^i{\delta'}^{i-1} {Q^*}^{i-1} Q^-)$,
and $G_{i-1} = b_{i-1}/({\delta}^i{\delta'}^{i-1} Q^*)$.
The induction hypothesis implies
\[D_x^i(f) = D_y D_x (F_{i-1}) + D_x (G_{i-1}),\]
with bidegree bounds on $B_{i-1}$ and~$b_{i-1}$.
Fact~\ref{prop:deflation}\emph{(i)\/} implies that $\tilde{Q} :=
Q^*D_x(Q^-)/Q^-$ is in $k[x, y]$, with
$\bideg(\tilde{Q}) \le (d_x^*-1, d_y^*)$.
Hence $D_x(1/Q^-)=-\tilde{Q}/Q$.
This observation and an easy calculation imply that
\[D_x (F_{i-1}) = \frac{\tilde{B}_{i-1}}{\delta^{i+1}{\delta'}^i {Q^*}^i Q^-},\]
where~$\tilde{B}_{i-1} \in k[x, y]$ and~$\deg_x(
\tilde{B}_{i-1}) \le \deg_x(B_{i-1}) + \mu'' + d_x^*$.
Furthermore, by Lemma~\ref{le:special}
there are~$\bar B_i, \bar b_i \in k[x, y]$ with bidegrees at most
$(\deg_x(b_{i-1}) + \mu'', d_y^*-1)$, such that
\[ D_x(G_{i-1}) = D_y \left( \frac{\bar B_i}{{\delta}^{i+1}{\delta'}^i
Q^*} \right) + \frac{\bar b_i}{{\delta}^{i+1}{\delta'}^i Q^*}. \]
Setting~$B_i=\tilde{B}_{i-1}+\bar B_i{Q^*}^{i-1}Q^-$ and~$b_i=\bar b_i$, we
arrive at~\eqref{EQ:ten}.
It remains to verify the degree bounds.
The induction hypothesis implies that
both $\deg_x(\bar B_i)$ and~$\deg_x(b_i)$ are bounded by~$\gamma + i \mu''  -
d_x^-$.
It follows that $\deg_x(\bar B_i{Q^*}^{i-1}Q^-)$ is
bounded by~$\gamma + i \mu'' + (i-1)d_x^*$.
Similarly,~$\deg_x(\tilde{B}_{i-1})$ is bounded by~$\gamma + i \mu'' +
(i-1)d_x^*$, and so is~$\deg_x (B_i)$.
The bounds on degrees in~$y$ are obvious.
\end{proof}

We next derive degree bounds for the minimal telescopers
obtained at an intermediate stage of \textsf{HermiteTelescoping};
refined bounds on the output will be given by Theorem~\ref{th:AZtelesize}.

\begin{lemma}
Under~(H'), Step~2(c) of Algorithm \textsf{Hermite\-Telescoping} computes
a minimal telescoper $L \in k[x]\langle D_x \rangle$ with order~$\rho$
and a certificate $g \in k(x, y)$ for $P/Q$ with $\deg_x(L)\in
\bigO(d_xd_y\rho^2) $ and $\bideg(g) \in \bigO(d_xd_y\rho^2) \times
\bigO(d_y\rho)$.
\end{lemma}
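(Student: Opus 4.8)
The plan is to combine the structural results already in place with the order bound of Corollary~\ref{cor:upperbound} and the explicit degree estimates of Lemma~\ref{le:ithsize}. First I would recall that, by Lemma~\ref{le:minimaltele}, Step~2(c) of \textsf{HermiteTelescoping} produces $L=\sum_{i=0}^\rho \eta_i D_x^i$ and $g=\sum_{i=0}^\rho \eta_i g_i$, where $\rho$ is the least integer for which $r_0,\dots,r_\rho$ are linearly dependent over $k(x)$ and the $\eta_i\in k(x)$ are the coefficients of such a dependence; by Lemma~\ref{le:bound} and Corollary~\ref{cor:upperbound} we have $\rho\le d_y^*\le d_y$. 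The task is then to bound $\deg_x$ of the $\eta_i$, and hence of $L$, and to bound $\bideg(g)$.

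The key step is to turn the decompositions of Lemma~\ref{le:ithsize} into a linear system over $k[x]$ for the $\eta_i$. Writing each $r_i$ in \eqref{eq:ithHR} as $r_i = b_i/(\delta^{i+1}{\delta'}^i Q^*)$ with $\bideg(b_i)\le(\deg_x(P)+\mu+i\mu''-d_x^-,\,d_y^*-1)$ from Lemma~\ref{le:ithsize}, one clears denominators: $\sum_{i=0}^\rho \eta_i r_i=0$ is equivalent to $\sum_{i=0}^\rho \tilde\eta_i\, \delta^{\rho-i}{\delta'}^{\rho-i} b_i = 0$ after multiplying through by $\delta^{\rho+1}{\delta'}^\rho Q^*$ and absorbing denominators of the $\eta_i$, so that the unknowns $\tilde\eta_i\in k[x]$ are the numerators. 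Expanding in the basis $1,y,\dots,y^{d_y^*-1}$ gives a homogeneous polynomial-matrix system $M\tilde\eta=0$ with $M\in k[x]^{d_y^*\times(\rho+1)}$, whose entries have $x$-degree bounded by the degree of $\delta^{\rho}{\delta'}^{\rho} b_i$, i.e.\ $\bigO(\rho\mu + \rho\mu' + \rho\mu'' + \deg_x(P))$. Under Hypothesis~(H') one has $\deg_x(P)\le d_x$, and $\mu,\mu',\mu''$ are all $\bigO(d_x d_y)$ (since $\mu=d_x^*d_y^-+d_x^-d_y^*$, $\mu'=2d_x^*d_y^*$, $\mu''=\mu+\mu'-1$, with $d_x^*,d_x^-\le d_x$ and $d_y^*,d_y^-\le d_y$), so the entry degree is $\bigO(d_xd_y\rho)$. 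Since $M$ has rank exactly $\rho$ (by minimality of $\rho$ the first $\rho$ columns are independent), Fact~\ref{le:polymatrix}\emph{(ii)}, applied to the rank-deficient matrix formed by all $\rho+1$ columns split as $\rho+1$ columns, yields a nonzero kernel vector with entries of $x$-degree at most $\rho\cdot\bigO(d_xd_y\rho)=\bigO(d_xd_y\rho^2)$. This bounds $\deg_x(\tilde\eta_i)$, hence $\deg_x(L)=\max_i\deg_x(\eta_i)\in\bigO(d_xd_y\rho^2)$.

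For the certificate, $g=\sum_{i=0}^\rho \eta_i g_i$ where, by Lemma~\ref{le:ithsize}, $g_i = B_i/(\delta^{i+1}{\delta'}^i {Q^*}^i Q^-)$ with $\bideg(B_i)\le(\deg_x(P)+\mu+i\mu''+(i-1)d_x^*,\,id_y^*+d_y^--1)$. Putting the $g_i$ over the common denominator $\delta^{\rho+1}{\delta'}^\rho {Q^*}^\rho Q^-$ and multiplying each numerator by the appropriate power of $\delta{\delta'}Q^*$ keeps the $x$-degree of each numerator in $\bigO(\deg_x(P)+\rho\mu+\rho\mu'+\rho\mu''+\rho d_x^*)=\bigO(d_xd_y\rho)$ and the $y$-degree in $\bigO(\rho d_y^*)=\bigO(d_y\rho)$. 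Multiplying by $\eta_i$, of $x$-degree $\bigO(d_xd_y\rho^2)$, and summing $\rho+1$ such terms, the numerator of $g$ has $x$-degree $\bigO(d_xd_y\rho^2)$ and $y$-degree $\bigO(d_y\rho)$; the denominator $\delta^{\rho+1}{\delta'}^\rho{Q^*}^\rho Q^-$ has $x$-degree $\bigO(\rho\mu+\rho\mu'+\rho d_x^*)=\bigO(d_xd_y\rho)\subseteq\bigO(d_xd_y\rho^2)$ and $y$-degree $\bigO(\rho d_y^*)=\bigO(d_y\rho)$. Hence $\bideg(g)\in\bigO(d_xd_y\rho^2)\times\bigO(d_y\rho)$, as claimed, and $L$ is minimal by Lemma~\ref{le:minimaltele}.

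I expect the main obstacle to be the bookkeeping of the common-denominator reduction: one must be careful that multiplying numerators $b_i$ (resp.\ $B_i$) by $\delta^{\rho-i}{\delta'}^{\rho-i}$ (resp.\ the corresponding powers of $\delta{\delta'}Q^*$) does not inflate degrees beyond the stated order, and that the application of Fact~\ref{le:polymatrix}\emph{(ii)} to the resulting matrix $M$ gives the $\rho$-fold (not $(\rho+1)$-fold) multiple of the entry degree. The $y$-degree bounds are immediate throughout; it is only the $x$-degree accounting, and the uniform use of $\mu,\mu',\mu''=\bigO(d_xd_y)$ under~(H'), that requires care. Everything else follows formally from Lemmas~\ref{le:minimaltele} and~\ref{le:ithsize} and Corollary~\ref{cor:upperbound}.
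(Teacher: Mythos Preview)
Your proposal is correct and follows essentially the same approach as the paper: use Lemma~\ref{le:ithsize} to obtain polynomial numerators~$b_i$ with $\deg_x(b_i)\in\bigO(\rho\,d_xd_y)$, set up a polynomial matrix of size $(\rho+1)\times d_y^*$, and apply Fact~\ref{le:polymatrix}\emph{(ii)} to get polynomial coefficients~$\eta_i$ of degree $\bigO(d_xd_y\rho^2)$; the certificate bound then follows from the bounds on~$B_i$ in Lemma~\ref{le:ithsize}. The only cosmetic difference is that the paper takes the columns of~$M$ to be the $y$-coefficients of the~$b_i$ themselves (with uniform degree bound $\sigma=d_x+\mu+\rho\mu''-d_x^-$), absorbing the factors $\delta^{i+1}{\delta'}^i$ into a rescaling of the unknowns, whereas you first multiply through by $\delta^{\rho+1}{\delta'}^\rho Q^*$ so that the columns become $\delta^{\rho-i}{\delta'}^{\rho-i}b_i$; both choices give column degrees in $\bigO(\rho\,d_xd_y)$ and hence the same final estimate.
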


\begin{proof}
By Lemma~\ref{le:minimaltele}, we exhibit a minimal telescoper
by considering the first nontrivial linear dependence among the
$a_i$'s in~\eqref{EQ:ten}.
Let $M$ be the coefficient matrix of the system in~$(\eta_i)$
obtained from $\sum_{i=0}^{\rho}\eta_i a_i=0$.
By Lemma~\ref{le:ithsize}, $M$~is of size at most
$(\rho+1)\times d_y^*$ and with coefficients of degree at most
$\sigma := d_x+\mu+\rho \mu''-d_x^-$ in $x$.
Hence, there exists a solution
$(\eta_0, \dots, \eta_{\rho})\in k[x]^{\rho+1}$ of degree at
most~$\sigma\rho$ in~$x$ by
Fact~\ref{le:polymatrix}\emph{(ii)}.
Since $\mu, \mu''\in
\bigO(d_xd_y)$ and $d_y^*\le d_y$, the degree estimates of
$L$ and $g$ are as announced.
\end{proof}
\end{subsubsection}

\begin{figure}
\framebox[8.4cm]{
\begin{minipage}{8.1cm}
\rule[.3cm]{0cm}{0cm}
{\rm {Algorithm \textsf{HermiteTelescoping}($f$)}

\smallskip

\noindent \quad{\sc Input}: $f=P/Q\in k(x, y)$
satisfying Hypothesis~(H).

\noindent \quad{\sc Output}:
A minimal telescoper $L\in k[x]\langle D_x \rangle$ with certificate $g\in k(x, y)$.
\begin{enumerate}
\item Apply~\textsf{HermiteEvalInterp} to $f$ to get $(g_0, a_0)$ such that $f = D_y(g_0) + a_0/Q^*$.
If $a_0=0$, return $(1, g_0)$.
\item For $i$ from 1 to $\deg_y(Q^*)$ do
\begin{enumerate}
\item Apply~\textsf{HermiteEvalInterp} to $-a_{i-1}D_x(Q^*)/{Q^*}^2$
  to express it as $D_y(\tilde{g}_{i})+ \tilde{a}_{i}/Q^*$.
\item Set $g_i = D_x(g_{i-1}) + \tilde{g}_i$ and $a_i = D_x(a_{i-1}) + \tilde{a}_i$.
\item Solve $\sum_{j=0}^i \eta_j a_j =0$ for $\eta_j\in k(x)$
  using \cite{Storjohann2005}.
If there exists a
nontrivial solution, then set $(L,g) := \bigl(\sum_{j=0}^i \eta_j D_x^j,
\sum_{j=0}^i \eta_jg_j\bigr)$, and break.
\end{enumerate}
\item Compute the content~$c$ of~$L$ and return
$(c^{-1}L, c^{-1}g)$.
\end{enumerate}}
\end{minipage}}
\caption{Creative telescoping by Hermite reduction}
\label{fig:HRTelescoping}
\vskip-10pt
\end{figure}

\begin{subsubsection}{Complexity estimates}\label{sec:compl-estim}
We proceed to analyse the complexity of the algorithm in
Figure~\ref{fig:HRTelescoping} and of an optimisation.

\begin{theorem}
Under Hyp.~(H'), Algorithm
\textsf{HermiteTelescoping} in Figure~\ref{fig:HRTelescoping} is correct
and takes $\bigOsoft(\rho^{\omega+1}d_x d_y^2)$ ops,
where $\rho$ is the order of the minimal telescoper.
\end{theorem}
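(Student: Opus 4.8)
The plan is to decompose the cost of Algorithm \textsf{HermiteTelescoping} step by step, bounding the size of the objects manipulated at iteration~$i$ and summing over $i$ from~$0$ up to the order~$\rho$ of the minimal telescoper, at which point the loop breaks by Lemma~\ref{le:minimaltele}. Correctness itself is almost immediate: Step~1 realises the base case $i=0$ of~\eqref{eq:ithHR} via Theorem~\ref{th:BHR}; Step~2(a)--(b) realises the recursion $D_x^i(f)=D_y(g_i)+a_i/Q^*$ from $D_x^{i-1}(f)=D_y(g_{i-1})+a_{i-1}/Q^*$, exactly as in the proof of Lemma~\ref{le:ithsize} (differentiating the certificate part is free, and the new polar part $-a_{i-1}D_x(Q^*)/{Q^*}^2$ is re-reduced by \textsf{HermiteEvalInterp}); and Step~2(c) detects the first $k(x)$-linear dependence among $a_0,\dots,a_i$, which by Lemma~\ref{le:minimaltele} yields a minimal telescoper, while Step~3 merely normalises by the content.

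For the complexity, first I would pin down the sizes. By Lemma~\ref{le:ithsize}, at iteration~$i\le\rho$ the numerator $a_i$ has $\deg_y(a_i)<d_y^*\le d_y$ and $\deg_x(a_i)\in\bigO(d_x+\mu+i\mu'')=\bigO(\rho d_x d_y)$, since $\mu,\mu''\in\bigO(d_xd_y)$. Hence the rational function $-a_{i-1}D_x(Q^*)/{Q^*}^2$ fed to \textsf{HermiteEvalInterp} in Step~2(a) has $y$-degree $\bigO(d_y)$ and $x$-degree $\bigO(\rho d_x d_y)$, so by Theorem~\ref{th:BHR} (with $\deg_x(P)$ replaced by this $\bigO(\rho d_x d_y)$) one call costs $\bigOsoft(d_xd_y^2+\rho d_x d_y\cdot d_y)=\bigOsoft(\rho d_x d_y^2)$ ops. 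Over the $\rho$ iterations the Hermite-reduction calls thus total $\bigOsoft(\rho^2 d_x d_y^2)$. The differentiations in Step~2(b) are dominated by this. The linear-algebra step~2(c) is the interesting one: at iteration~$i$ the matrix $M$ has size $\bigO(i)\times d_y^*$ with entries in $k[x]$ of degree $\sigma=\bigO(i d_x d_y)$; by Fact~\ref{le:polymatrix}\emph{(iii)} computing its rank and null-space basis costs $\bigOsoft(i d_y^* r^{\omega-2}\sigma)$ with $r\le i$, i.e.\ $\bigOsoft(i^{\omega-1} d_y\cdot i d_x d_y)=\bigOsoft(i^{\omega}d_xd_y^2)$ ops; summing over $i\le\rho$ gives $\bigOsoft(\rho^{\omega+1}d_xd_y^2)$. (Alternatively one only runs Step~2(c) once, at $i=\rho$, after a probabilistic rank check à la \cite{Storjohann2005}, and a single call already costs $\bigOsoft(\rho^{\omega+1}d_xd_y^2)$.)

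Combining the three contributions, the total is $\bigOsoft(\rho^2 d_x d_y^2+\rho^{\omega+1}d_xd_y^2)=\bigOsoft(\rho^{\omega+1}d_xd_y^2)$ since $\omega\ge2$, which is the announced bound. The main obstacle is the bookkeeping in the linear-algebra step: one must verify that the entries of $M$ really have degree only $\bigO(\rho d_x d_y)$ in $x$ (this is exactly the content of Lemma~\ref{le:ithsize}, but one must be careful that the common denominator $\delta^{i+1}{\delta'}^i$ is cleared consistently across all rows so that the $a_i$'s live in a common polynomial space before setting up $\sum\eta_i a_i=0$), and that the Storjohann--Villard bound of Fact~\ref{le:polymatrix}\emph{(iii)} is applied with the right parameters $n=d_y^*$, $m=\bigO(\rho)$, $r\le\rho$, $d=\sigma$. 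Everything else is a routine summation of geometric-type series dominated by their last term.
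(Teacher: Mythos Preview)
Your proposal is essentially the paper's own argument: correctness via the loop invariant $D_x^i(f)=D_y(g_i)+a_i/Q^*$ and Lemma~\ref{le:minimaltele}, size control via Lemma~\ref{le:ithsize}, Hermite calls costed by Theorem~\ref{th:BHR}, and the null-space step costed by Fact~\ref{le:polymatrix}\emph{(iii)}, all summed to $\bigOsoft(\rho^{\omega+1}d_xd_y^2)$.

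One small slip: you claim that Step~2(b) is ``dominated by'' the Hermite reductions, i.e.\ by $\bigOsoft(\rho^2 d_xd_y^2)$. But $g_{i-1}$ has $y$-degree $\bigO(id_y)$ (not $\bigO(d_y)$) by Lemma~\ref{le:ithsize}, so its total size is $\bigO(i^2 d_xd_y^2)$ and differentiating it costs that much; summed over $i\le\rho$ this gives $\bigOsoft(\rho^3 d_xd_y^2)$, which is \emph{not} bounded by your $\rho^2$ term. This is exactly what the paper records for Step~2(b). Fortunately $\rho^3\le\rho^{\omega+1}$ since $\omega\ge2$, so the final bound is unaffected; just adjust your ``three contributions'' line to read $\bigOsoft(\rho^3 d_xd_y^2+\rho^{\omega+1}d_xd_y^2)$. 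You also omit Step~3 (content removal), which the paper bounds by $\bigOsoft(d_xd_y\rho^3)$ and is likewise absorbed.
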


\begin{proof}
The formulas in Step~2(a) create the loop invariant $D_x^i(f) =
D_y(g_i) + a_i/Q^*$.
Correctness then follows from Lemmas~\ref{le:bound} and~\ref{le:costlowerbound}.
Step~1 takes $\bigOsoft(d_xd_y^2)$ ops by Theorem~\ref{th:BHR} under~(H').
By Lemma~\ref{le:ithsize}, $\deg_x(-a_{i-1}D_x(Q^*))\in\bigO(id_xd_y)$.
So the cost for performing Hermite reduction on
$-a_{i-1}D_x(Q^*)/{Q^*}^2$ in Step~2(a) is~$\bigOsoft(id_xd_y^2)$
ops by Theorem~\ref{th:BHR}. The bidegrees of $g_i$ and $a_i$ in
Step~2(b) are in $\bigO(id_xd_y) \times \bigO(id_y)$ by
Lemma~\ref{le:ithsize}. Since adding and differentiating have linear
complexity, Step~2(b) takes $\bigOsoft(i^2d_xd_y^2)$ ops.
For each~$i$, the coefficient matrix of $\sum_{j=0}^i \eta_j a_j =0$
in Step~2(c) is of size at most $(i+1)\times d_y^*$ and with
coefficients of degree at most $\deg_x(a_i)\in \bigO(id_xd_y)$.
Moreover, the rank of this matrix is either $i$ or~$i+1$.
Then, Step~2(c) takes $\bigOsoft(i^{\omega}d_x d_y^2)$ ops by
Fact~\ref{le:polymatrix}\emph{(iii)}.
Computing the content and divisions in Step~3 has
complexity $\bigOsoft(d_xd_y\rho^3)$.
If the algorithm returns when
$i=\rho$, then the total cost is in
\begin{equation}\label{eq:analyis-without-rho0}
\sum_{i=0}^{\rho} \bigOsoft(i^2d_xd_y^2) +
\sum_{i=1}^{\rho} \bigOsoft(i^{\omega} d_x d_y^2)
\subset \bigOsoft(\rho^{\omega+1}d_x d_y^2)~\text{ops},
\end{equation}
which is as announced.
\end{proof}

An optimisation, based on Lemma~\ref{le:lowerbound},
consists in guessing the order~$\rho$ so as to perform
Step~2(c) a few times only:
As a preprocessing step, choose $x_0\in k$ lucky for~$Q$, then detect
linear dependence of~$\{r_{0,0},\dots,r_{0,j}\}$ in~\eqref{eq:ithUHR}.
The minimal~$j$ for dependence is a lower bound~$\rho_0$ on~$\rho$.
So Step~2(c) is then performed only when~$i\ge\rho_0$.
In practice, the lower bound~$\rho_0$ computed in this way almost always
coincides with the actual order~$\rho$.
So normalising the $g_i$'s becomes the dominant step, as observed
in experiments.
We analyse this optimisation by first estimating the cost for
computing~$\rho_0$.

\begin{lemma}\label{le:costlowerbound}
Under Hypothesis~(H'), computing a lower order bound~$\rho_0$ for minimal
telescopers takes $\bigOsoft(d_xd_y\rho_0^3)$~ops.
\end{lemma}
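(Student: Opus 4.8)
The plan is to mimic the analysis of Algorithm~\textsf{HermiteTelescoping} given in the preceding theorem, but working over~$k(y)$ instead of~$k(x,y)$ and tracking only the degree in~$y$. Concretely, computing~$\rho_0$ means: first pick a point~$x_0\in k$ that is lucky for~$Q$; then, for $i=0,1,2,\dots$, apply Hermite reduction in~$k(y)$ to $D_x^i(f)(x_0,y)$ to produce the pair $(g_{0,i},r_{0,i})$ of~\eqref{eq:ithUHR}; and after each step test whether $r_{0,0},\dots,r_{0,i}$ are $k$-linearly dependent, stopping at the first~$i$ for which they are. I would organise the cost into three contributions: (a)~finding a lucky~$x_0$; (b)~the $\rho_0+1$ Hermite reductions and the differentiations producing the $D_x^i(f)(x_0,y)$; and (c)~the $\rho_0+1$ rank tests on the matrix whose rows are the coefficient vectors of the $r_{0,i}Q^*(x_0,y)$.

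For~(a): by Lemma~\ref{le:unlucky} there are at most $d_x(2d_y^*-1)\in\bigO(d_xd_y)$ unlucky points, so among $\bigO(d_xd_y)$ candidates one is lucky, and each luckiness test costs $\bigOsoft(d_y)$ ops (an evaluation at~$x_0$ by Fact~\ref{EvaInter}\emph{(i)\/} and a gcd in~$k[y]$), for a total of $\bigOsoft(d_xd_y^2)$ ops; this is absorbed into the claimed bound since $\rho_0\ge1$ whenever $a_0\neq0$, and the $\rho_0=0$ case is trivial. For~(b): the key point is that, by Lemma~\ref{le:ithsize} specialised at~$x=x_0$ together with the claim $r_{0,i}=r_i(x_0,y)$ established in the proof of Lemma~\ref{le:lowerbound}, the relevant rational functions have degree in~$y$ bounded by $\bigO(id_y)$, hence each iterate $D_x^i(f)(x_0,y)$, obtained from the previous data by one differentiation in~$x$ of a rational function of $y$-degree $\bigO(id_y)$ with numerator a polynomial in~$x$ of the appropriate degree, can be formed and then Hermite-reduced in $\bigOsoft(id_y)$ ops by Lemma~\ref{le:ComplexityUHR}; summing over $i\le\rho_0$ gives $\bigOsoft(\rho_0^2d_y)$, which is dominated by the target. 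For~(c): the matrix at step~$i$ has at most $i+1$ rows and $d_y^*\le d_y$ columns, entries in~$k$, and rank $i$ or $i+1$, so a rank computation by Fact~\ref{le:polymatrix}\emph{(iii)\/} (with $d$ replaced by~$1$, over the field~$k$, hence effectively Gaussian elimination) costs $\bigOsoft(i^{\omega-1}d_y)$ ops; summing over $i\le\rho_0$ yields $\bigOsoft(\rho_0^{\omega}d_y)$ ops. Adding the three contributions gives a total in $\bigOsoft(d_xd_y^2+\rho_0^2d_y+\rho_0^{\omega}d_y)$ ops; since $\rho_0\le d_y^*\le d_y$ and $\omega\le3$, each summand is in $\bigOsoft(d_xd_y\rho_0^3)$, as announced.

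The main obstacle I anticipate is not the arithmetic but pinning down precisely the degree in~$y$ (and the nature) of the rational function that gets fed to Hermite reduction at each step, so as to legitimately invoke Lemma~\ref{le:ComplexityUHR} with the right~$n$: one must note that $D_x^i(f)(x_0,y)$ has the same $y$-degree profile as $D_x^i(f)$ viewed in~$k(x)(y)$, which follows because $x_0$ is lucky for the denominator of every $D_x^i(f)$ (by~\cite[Cor.~5.5]{Gerhard2004}, as used in Lemma~\ref{le:lowerbound}), so that no spurious cancellation of the $y$-leading term occurs under specialisation. A secondary, more bookkeeping-level point is to make sure the dominant $\bigOsoft(d_xd_y^2)$ term genuinely comes from step~(a) and is not an artifact of a loose bound on steps (b)--(c); this is where the hypothesis~(H') is used, to control $\deg_x(P)$ and hence the cost of the initial luckiness search.
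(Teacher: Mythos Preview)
Your argument has two gaps.

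First, in step~(b) you claim that each $D_x^i(f)(x_0,y)$ can be formed in $\bigOsoft(id_y)$ ops by ``one differentiation in~$x$ of a rational function of $y$-degree $\bigO(id_y)$''. But that rational function is bivariate: $D_x^{i-1}(f)$ has bidegree in $\bigO(id_x)\times\bigO(id_y)$, so differentiating it and then evaluating at~$x_0$ costs $\bigOsoft(i^2d_xd_y)$ ops, not $\bigOsoft(id_y)$. Knowing only the specialisation $D_x^{i-1}(f)(x_0,y)$ is not enough to recover $D_x^i(f)(x_0,y)$, since the $x$-derivative requires the function near~$x_0$, not just at~$x_0$. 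This $\bigOsoft(i^2d_xd_y)$ term is exactly the dominant cost in the paper's proof, and summing it over $i\le\rho_0$ is what produces the announced $\bigOsoft(d_xd_y\rho_0^3)$; your bound $\bigOsoft(\rho_0^2d_y)$ for~(b) is too optimistic.

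Second, your final inequality is incorrect: $d_xd_y^2$ is \emph{not} in $\bigOsoft(d_xd_y\rho_0^3)$ when $\rho_0$ is small compared to~$d_y$ (take $\rho_0=1$ and $d_y$ large). The paper sidesteps this by not charging the search for a lucky~$x_0$ to Lemma~\ref{le:costlowerbound} at all: that search is a preprocessing step whose $\bigOsoft(d_xd_y^2)$ cost is subsumed by Step~1 of \textsf{HermiteTelescoping} and accounted for separately in the subsequent corollary. The lemma only bounds the work once~$x_0$ is in hand.
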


\begin{proof}
Since differentiating has linear complexity, the derivative
$D_x^i(f)$ takes $\bigOsoft(i^2d_xd_y)$ ops.
By Fact~\ref{EvaInter}\emph{(i)}, the evaluation $D_x^i(f)(x_0, y)$
takes as much.
The cost of Hermite reduction
on $D_x^i(f)(x_0, y)$ is $\bigOsoft(id_y)$ ops by
Lemma~\ref{le:ComplexityUHR}.
By Fact~\ref{le:polymatrix}\emph{(iii)\/} with~$d=1$,
computing the rank of the coefficient
matrix of~$\sum_{j=0}^i\eta_j r_{0, j}$, with $r_{0, j}$ as
in~\eqref{eq:ithUHR}, takes $\bigOsoft(d_y i^{\omega-1})$ ops.
Thus, the total cost for computing a lower bound on~$\rho_0$
is $\sum_{i=0}^{\rho_0} \bigOsoft(i^2 d_xd_y)\in
\bigOsoft(d_xd_y\rho_0^3)$~ops.
\end{proof}

\begin{cor}
For runs such that $\rho_0=\rho-\bigO(1)$, the previous optimisation
of~\textsf{HermiteTelescoping} takes
$\bigOsoft(\rho^3d_xd_y^2)$~ops.
\end{cor}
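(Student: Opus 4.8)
The plan is to revisit the complexity analysis of the unoptimised \textsf{HermiteTelescoping} and track how the optimisation changes each contribution. First I would recall that the optimised algorithm runs in two phases: a preprocessing phase computing the lower bound~$\rho_0$ from~\eqref{eq:ithUHR}, then a run of \textsf{HermiteTelescoping} in which the expensive linear-algebra Step~2(c) is skipped for all~$i<\rho_0$. By Lemma~\ref{le:costlowerbound}, the preprocessing costs $\bigOsoft(d_xd_y\rho_0^3)$ ops; since $\rho_0=\rho-\bigO(1)$ we have $\rho_0^3=\bigO(\rho^3)$, so this is $\bigOsoft(d_xd_y\rho^3)$, hence already within the target bound.

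Next I would bound the run itself. Steps~1, 2(a), and~2(b) are untouched, so their costs are exactly as in the proof of the unoptimised theorem: Step~1 is $\bigOsoft(d_xd_y^2)$, summing the per-iteration costs of Steps~2(a)--2(b) over $i=0,\dots,\rho$ gives $\sum_i\bigOsoft(i^2d_xd_y^2)\subseteq\bigOsoft(\rho^3d_xd_y^2)$, and the content computation and divisions in Step~3 cost $\bigOsoft(d_xd_y\rho^3)$. The key point is Step~2(c): under the hypothesis $\rho_0=\rho-\bigO(1)$, it is now executed only for the $\bigO(1)$ indices~$i$ in the range~$[\rho_0,\rho]$, and each execution still costs at most $\bigOsoft(\rho^\omega d_xd_y^2)$ by Fact~\ref{le:polymatrix}\emph{(iii)}. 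So the total contribution of Step~2(c) drops from $\bigOsoft(\rho^{\omega+1}d_xd_y^2)$ to $\bigOsoft(\rho^\omega d_xd_y^2)$, which lies in $\bigOsoft(\rho^3d_xd_y^2)$ because $\omega\le3$.

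Finally I would add the three contributions — preprocessing, the unchanged steps, and the trimmed Step~2(c) — to conclude a total of $\bigOsoft(\rho^3d_xd_y^2)$ ops. There is no real obstacle here: the argument is a bookkeeping exercise on the earlier analysis. The one thing to be careful about is that the gain comes entirely from replacing the sum $\sum_{i=1}^\rho i^\omega=\Theta(\rho^{\omega+1})$ by $\bigO(1)$ terms each of size $\bigO(\rho^\omega)$, and that the surviving cubic terms $\rho^3d_xd_y$ and $\rho^3d_xd_y^2$ were already present in the unoptimised cost, so after this replacement they become dominant; one should also double-check that $\rho_0=\rho-\bigO(1)$ is used only in the mild form $\rho_0^3=\bigO(\rho^3)$ and $\rho-\rho_0+1=\bigO(1)$, both of which are immediate.
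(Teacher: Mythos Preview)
Your proposal is correct and follows essentially the same approach as the paper: add the preprocessing cost from Lemma~\ref{le:costlowerbound}, keep the $\sum_i \bigOsoft(i^2 d_xd_y^2)$ term from Steps~2(a)--2(b), and restrict the Step~2(c) sum to the $\bigO(1)$ indices in $[\rho_0,\rho]$, using $\omega\le3$ to absorb the resulting $\bigOsoft(\rho^\omega d_xd_y^2)$ into the cubic bound. The paper's proof is more compressed but identical in substance.
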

\begin{proof}
In view of Lemma~\ref{le:costlowerbound},
the estimate \eqref{eq:analyis-without-rho0} becomes
$\bigOsoft(d_xd_y\rho_0^3) +
\sum_{i=0}^{\rho} \bigOsoft(i^2d_xd_y^2) +
\sum_{i=\rho_0}^{\rho} \bigOsoft(i^{\omega} d_x d_y^2)$,
which is $\bigOsoft(\rho^3 d_x d_y^2) +
\bigOsoft((\rho-\rho_0)\rho^\omega d_x d_y^2)$~ops,
whence the result.
\end{proof}

\end{subsubsection}
\end{subsection}

\begin{subsection}{Almkvist and Zeilberger's approach}\label{AZA}
We analyse the complexity of Almkvist and Zeilberger's
algorithm~\cite{Almkvist1990} when restricted to bivariate rational
functions.
In order to get a telescoper whose order~$\rho$ is minimal, the resulting
algorithm, denoted \textsf{RatAZ}, solves~\eqref{eq:CT} for
increasing, prescribed values of~$\rho$
until it gets a solution $(\eta_0, \dots, \eta_{\rho}, g)\in
k(x)^{\rho+1}\times k(x, y)$ with the $\eta_i$'s not all zero.
For the analysis, we start by studying the parameterisation of the
differential Gosper algorithm of~\cite{Almkvist1990}
under the same restriction to~$k(x,y)$.

\begin{definition}[\cite{Gerhard2004}]
Let $K$ be a field and $a, b\in K[y]$ be non\-zero polynomials. A
triple $(p, q, r)\in K[y]^3$ is said to be a \emph{differential Gosper
form\/} of the rational function $a/b$ if
\[\frac{a}{b} = \frac{D_y(p)}{p} + \frac{q}{r}~\text{and $\gcd(r,
  q-\tau D_y(r))=1$ for all $\tau\in \bN$}.\]
\end{definition}

For hyperexponential~$f$, a key step in~\cite{Almkvist1990} is to
\emph{compute\/} a differential Gosper form of the logarithmic derivative of
$F=\sum_{i=0}^{\rho} \eta_iD_x^i(f)$, where the~$\eta_i$'s are
undetermined from~$k(x)$.
In the analogue \textsf{RatAZ}, this form is \emph{predicted\/} by
Lemma~\ref{le:dGf-for-F'/F} below, which is a technical
generalisation of a result by Le~\cite{Le2000} on~$F$ when~$f$ has a
squarefree denominator.

Write $Q=t(y)T(x,y)$, splitting content and primitive part
w.r.t.~$x$.
By an easy induction, $D_x^i(f)=N_i/(Q{T^*}^i)$ for $N_i\in k[x, y]$.
For this section, set $F=\sum_{i=0}^{\rho} \eta_iD_x^i(f)$, $N=\sum_{i=0}^{\rho}\eta_iN_i{T^*}^{\rho-i}$, and
$H=-D_y(Q)/Q^--\rho t^*D_y(T^*)$.

\begin{lemma}\label{le:dGf-for-F'/F}
If $F$~is nonzero,
the triple $(N, H, Q^*)$ is a differential Gosper form of $D_y(F)/F$.
\end{lemma}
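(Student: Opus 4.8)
The plan is to verify the two defining conditions of a differential Gosper form directly: first that $D_y(F)/F = D_y(N)/N + H/Q^*$, and second that $\gcd(Q^*, H - \tau D_y(Q^*)) = 1$ for every $\tau \in \bN$. For the first identity, I would start from the factorisation $F = N/(Q{T^*}^\rho)$, which follows from the stated formulas $D_x^i(f) = N_i/(Q{T^*}^i)$ after clearing denominators: indeed $F = \sum_i \eta_i N_i/(Q{T^*}^i) = \bigl(\sum_i \eta_i N_i {T^*}^{\rho-i}\bigr)/(Q{T^*}^\rho) = N/(Q{T^*}^\rho)$. Taking logarithmic derivatives gives $D_y(F)/F = D_y(N)/N - D_y(Q)/Q - \rho\, D_y(T^*)/T^*$. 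So it remains to identify $-D_y(Q)/Q - \rho\, D_y(T^*)/T^*$ with $H/Q^*$. Writing $Q = Q^- Q^*$ one has $D_y(Q)/Q = D_y(Q^-)/Q^- + D_y(Q^*)/Q^*$, and by Fact~\ref{prop:deflation}\emph{(i)} the quantity $Q^* D_y(Q^-)/Q^-$ is a polynomial; dividing the relation $D_y(Q)/Q = D_y(Q^-)/Q^- + D_y(Q^*)/Q^*$ through and multiplying by $Q^*$ gives $-Q^* D_y(Q)/Q = -Q^* D_y(Q^-)/Q^- - D_y(Q^*)$. Here I must be careful that $t(y)$ (the content of $Q$ w.r.t.~$x$) and $t^*$, $T^*$ interact correctly: since $Q = t\,T$ with $t$ depending only on $y$, we have $Q^* = t^* T^*$ and $Q^- = t^- T^-$ in the obvious notation, and $D_y(T^*)/T^* $ combined with the $t$-part should reassemble exactly into $-D_y(Q)/Q^- - \rho t^* D_y(T^*)$ after multiplication by $Q^*$. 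Checking this bookkeeping — that $H = -D_y(Q)/Q^- - \rho t^* D_y(T^*)$ is precisely $Q^*\bigl(-D_y(Q)/Q - \rho D_y(T^*)/T^*\bigr)$ — is the routine but slightly delicate calculation; it hinges on Fact~\ref{prop:deflation}\emph{(i)} guaranteeing polynomiality and on the content/primitive-part split being compatible with the squarefree decomposition.

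For the coprimality condition, I would argue as follows. Fix $\tau \in \bN$ and suppose a monic irreducible factor $p \in k[x,y]$ divides both $Q^*$ and $H - \tau D_y(Q^*)$. Since $Q^*$ is squarefree and $p \mid Q^*$, we have $p \nmid D_y(Q^*)$ (a squarefree polynomial is coprime to its derivative in $k(x)[y]$ up to leading-coefficient issues handled by primitivity under~(H)). Now $H = -D_y(Q)/Q^- - \rho t^* D_y(T^*)$; using Fact~\ref{prop:deflation}\emph{(ii)}, $D_y(Q)/Q^- = \sum_{i=1}^m i\,\hat Q_i D_y(Q_i)$, so modulo the particular factor $Q_j$ of $Q^*$ containing $p$, all terms $\hat Q_i D_y(Q_i)$ with $i \neq j$ vanish (since $Q_j \mid \hat Q_i$ for $i\neq j$), leaving $D_y(Q)/Q^- \equiv j\,\hat Q_j D_y(Q_j) \pmod{Q_j}$. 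Hence $p \mid H - \tau D_y(Q^*)$ forces $p$ to divide $j\,\hat Q_j D_y(Q_j) + (\text{contribution of } t^*) + \tau D_y(Q^*)$ for an appropriate recombination; since $p \nmid \hat Q_j$ and $p \nmid D_y(Q_j)$, and the $t^*$-part is coprime to $T^*$, one derives a contradiction with $p$ being a genuine factor — essentially $p$ would have to divide a nonzero constant multiple of $\hat Q_j D_y(Q_j)$. The cleanest route is probably to reduce everything modulo the single squarefree factor $Q_j$ that $p$ divides, apply both parts of Fact~\ref{prop:deflation}, and observe that the combination collapses to $c\,\hat Q_j D_y(Q_j) \pmod{Q_j}$ for a nonzero integer $c$ depending on $j$, $\tau$, $\rho$, which is not divisible by $p$.

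The main obstacle is the coprimality verification: unlike the first identity, which is a pure logarithmic-derivative computation, the gcd condition requires understanding how $H$ interacts with each squarefree factor $Q_i$ of $Q^*$ separately and why no integer shift $\tau$ can create an unwanted common factor. The key insight making this work is that Fact~\ref{prop:deflation}\emph{(ii)} expresses $D_y(Q)/Q^-$ factor-by-factor as $\sum i\,\hat Q_i D_y(Q_i)$, so modulo any one $Q_j$ the whole of $H$ (and $H - \tau D_y(Q^*)$) reduces to a single term proportional to $\hat Q_j D_y(Q_j)$, which is coprime to $Q_j$ by squarefreeness — so the "for all $\tau$" is handled uniformly because the $\tau D_y(Q^*)$ term also reduces to a multiple of $D_y(Q_j)$ times $\hat Q_j$ modulo $Q_j$, merely shifting the nonzero integer coefficient. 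One should double-check the edge case $m = 1$ (where $Q^- = 1$ up to content, $\hat Q_1 = 1$) and confirm the content polynomial $t(y)$ does not spoil primitivity assumptions; these are minor and covered by Hypothesis~(H).
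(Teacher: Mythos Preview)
Your proposal is correct and takes essentially the same approach as the paper: verify $D_y(F)/F = D_y(N)/N + H/Q^*$ from $F = N/(Q{T^*}^\rho)$ together with $Q^- Q^* = Q$ and $t^* T^* = Q^*$ (so your bookkeeping in the first part can be shortened to one line), then reduce $Z := H - \tau D_y(Q^*)$ modulo each~$Q_j$ using Fact~\ref{prop:deflation}\emph{(ii)} to obtain a nonzero integer multiple of~$\hat Q_j D_y(Q_j)$. The paper's only addition to your outline is to make your constant~$c$ explicit via a two-case split on whether $Q_j \mid t^*$, giving $c = -(j+\tau)$ if so and $c = -(j+\tau+\rho)$ otherwise---this is precisely the step you flagged as needing care with the $t^*$ contribution.
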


\begin{proof}
First, observe $F=N/(Q{T^*}^\rho)$ and~$Q^*=t^*T^*$.
Next, $D_y(F)/F = D_y(N)/N - D_y(Q)/Q - \rho D_y(T^*)/T^*$ is
$D_y(N)/N + H/Q^*$.
There remains to prove $\gcd(Q^*, H - \tau D_y(Q^*)) = 1$, for
any~$\tau \in \bN$.
Recall that the squarefree part~$Q^*$ of~$Q$ is the product~$Q_1Q_2\cdots Q_m$
and that $\hat{Q}_i$~denotes~$Q^*/Q_i$.
By Fact~\ref{prop:deflation}\emph{(ii)},
\begin{equation*}
Z := H - \tau D_y(Q^*)
  = -\rho t^*D_y(T^*)-\sum_{i=1}^m(i+\tau)\hat Q_iD_y(Q_i) .
\end{equation*}
If $Q_j$~divides~$t^*$, $Z$~reduces to~$-(j+\tau)\hat Q_jD_y(Q_j)$ modulo~$Q_j$.
If not, it reduces to $-(j+\tau)\hat Q_jD_y(Q_j) -\rho
t^*(D_y(Q_j)T^*/Q_j)$,
which rewrites to~$-(j+\tau+\rho)\hat Q_jD_y(Q_j)$ modulo~$Q_j$.
In both cases, $Z$~is coprime with~$Q^*$, as $j>0$, \ $\tau\geq0$,
and~$\rho\geq0$.
\end{proof}

By another induction, we observe
$\bideg(N_i) \le (\deg_x(P)+i \deg_x(T^*)-i, d_y+i \deg_y(T^*)-1)$,
so that
$\bideg(N) \le (\deg_x(P)+\rho \deg_x(T^*)-\rho, d_y+\rho \deg_y(T^*)-1)$.

The next step in \textsf{RatAZ} is, for fixed~$\rho$, to reduce~\eqref{eq:CT}
by the change of unknown $g=z/(Q^-{T^*}^\rho)$, so as to determine all
$(\eta_i)\in k(x)^{\rho+1}$ for which the differential equation in~$z$
\begin{equation}\label{eq:pdga}
\sum_{i=0}^{\rho}\eta_iN_i{T^*}^{\rho-i}=Q^*D_y(z)+\left(D_y(Q^*)+H\right)z
\end{equation}
has a polynomial solution in $k(x)[y]$.
For later use, we recall the following consequence
of~\cite[Corollary 9.6]{Gerhard2004}.

\begin{lemma}\label{le:polysol}
Let $a, b\in K[y]$ be such that $\beta=-\lc_y(b)/\lc_y(a)$ is a
nonnegative integer and $\deg_y(b)=\deg_y(a)-1$.
Let $c\in K[y]$ be such that $\beta\ge \deg_y(c)-\deg_y(a)+1$.
If $u$~is a polynomial solution of $aD_y(z)+bz=c$, then $\deg_y(u)\le
\beta$.
\end{lemma}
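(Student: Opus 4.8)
The plan is to give a direct leading-coefficient argument; alternatively, the statement follows at once from \cite[Corollary~9.6]{Gerhard2004}, which provides the same degree bound in greater generality, so one option is simply to cite it. I sketch the self-contained version below.

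For the direct proof, write $d=\deg_y(a)$ and $\alpha=\lc_y(a)$, so that the hypotheses become $\deg_y(b)=d-1$ and $\lc_y(b)=-\beta\alpha$. Let $u$ be a polynomial solution of $aD_y(z)+bz=c$ and put $n=\deg_y(u)$; the case $n=0$ is trivial since $\beta\ge 0$, so I would assume $n\ge 1$. The first computation is the top-degree part of $aD_y(u)+bu$: since $K$ has characteristic zero, $D_y(u)$ has degree $n-1$ with leading coefficient $n\,\lc_y(u)\neq 0$, so both $aD_y(u)$ and $bu$ have degree $d+n-1$, with leading coefficients $n\alpha\lc_y(u)$ and $-\beta\alpha\lc_y(u)$ respectively, and hence the coefficient of $y^{d+n-1}$ in their sum equals $(n-\beta)\,\alpha\,\lc_y(u)$.

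The key step is then to argue by contradiction, assuming $n>\beta$: in that case $n-\beta$ is a positive integer, hence nonzero in $K$, so the coefficient computed above is nonzero and $\deg_y(aD_y(u)+bu)=d+n-1$. On the other hand, the hypothesis $\beta\ge\deg_y(c)-d+1$ rewrites as $\deg_y(c)\le\beta+d-1<n+d-1$, which contradicts $aD_y(u)+bu=c$. Therefore $n\le\beta$, which is the assertion.

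I do not expect any real obstacle in this argument; the only delicate point is ensuring that the two leading terms of $aD_y(u)+bu$ do not cancel, and this is precisely what the hypotheses $\deg_y(b)=\deg_y(a)-1$, $\beta=-\lc_y(b)/\lc_y(a)$, and $\operatorname{char}K=0$ guarantee once $n>\beta$ is assumed.
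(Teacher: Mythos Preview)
Your proposal is correct. The paper itself does not give an independent proof of this lemma: it simply records it as a consequence of \cite[Corollary~9.6]{Gerhard2004}, exactly the citation you mention as one option. Your self-contained leading-coefficient argument is a valid and more elementary alternative; it makes explicit the mechanism behind Gerhard's bound in this special case, namely that under the degree hypothesis $\deg_y(b)=\deg_y(a)-1$ the top coefficients of $aD_y(u)$ and $bu$ can only cancel when $\deg_y(u)=\beta$, so any larger degree would force $\deg_y(aD_y(u)+bu)>\deg_y(c)$. One minor remark: since $\deg_y(b)=d-1$ forces $\lc_y(b)\neq 0$, the hypothesis that $\beta$ be a nonnegative integer in fact makes $\beta\ge 1$, so your treatment of the case $n=0$ is even more immediate than you state.
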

The following lemma generalises~\cite[Lemma~2]{Le2000}
to present a degree bound for~$z$.


\begin{figure}
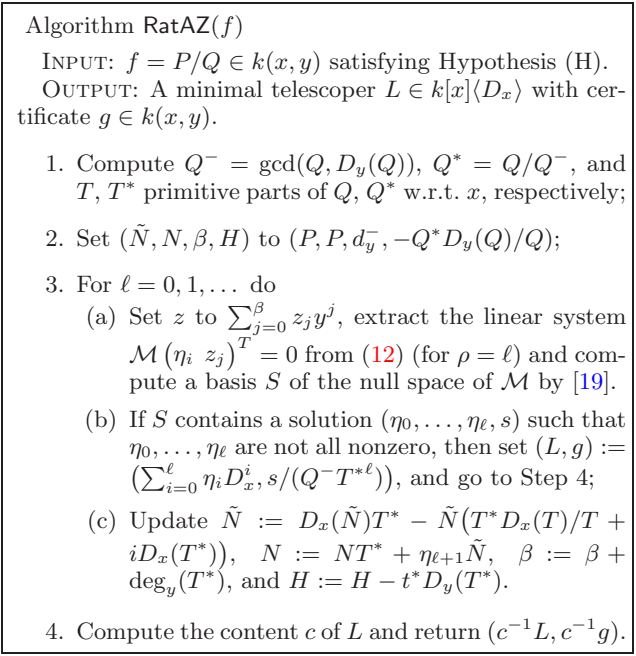

\framebox[8.4cm]{
\begin{minipage}{8.1cm}
\rule[.3cm]{0cm}{0cm}
{\rm {Algorithm \textsf{RatAZ}($f$)}

\smallskip

\noindent \quad{\sc Input}:
$f=P/Q\in k(x, y)$ satisfying Hypothesis~(H).

\noindent \quad{\sc Output}:
A minimal telescoper $L\in k[x]\langle D_x \rangle$ with certificate $g\in k(x, y)$.
\begin{enumerate}
\item Compute $Q^-=\gcd(Q, D_y(Q))$, $Q^*=Q/Q^-$,
and $T$, $T^*$ primitive parts of $Q$, $Q^*$ w.r.t.~$x$,
respectively;
\item Set $(\tilde N, N, \beta, H)$ to $(P, P, d_y^-, -Q^*D_y(Q)/Q)$;
\item For $\ell=0, 1, \dots$ do
\vspace{-0.2cm}
\begin{enumerate}
\item Set $z$ to~$\sum_{j=0}^{\beta} z_jy^j$, extract the linear system
$\cM\begin{pmatrix}\eta_i&z_j\end{pmatrix}^T=0$
from~\eqref{eq:pdga} (for~$\rho=\ell$) and
compute a basis~$S$ of the null space of~$\cM$ by~\cite{Storjohann2005}.
\item If $S$ contains a solution $(\eta_0, \dots,
\eta_{\ell}, s)$ such that $\eta_0, \dots, \eta_{\ell}$ are not all
nonzero, then set $(L,g) := \bigl(\sum_{i=0}^{\ell} \eta_iD_x^i,
s/(Q^-{T^*}^\ell)\bigr)$, and go to Step~4;
\item Update $\tilde N := D_x(\tilde N)T^*
- \tilde{N}\bigl(T^*D_x(T)/T+iD_x(T^*)\bigr)$, \ $N := NT^* +
\eta_{\ell+1}\tilde{N}$, \ $\beta := \beta + \deg_y(T^*)$, and $H := H
- t^*D_y(T^*)$.
\end{enumerate}
\item Compute the content~$c$ of~$L$ and return
$(c^{-1}L, c^{-1}g)$.
\end{enumerate}
}
\end{minipage}}
\caption{Improved Almkvist--Zeilberger algorithm}
\label{fig:RatAZ}
\vskip-10pt
\end{figure}

%
\begin{lemma}\label{le:degb}
If $u\in k(x)[y]$ is a solution of~\eqref{eq:pdga} for $(\eta_i)\in
k(x)^{\rho+1}$, then\/ $\deg_y(u)$ is bounded by $\beta = d_y^- + \rho
\deg_y(T^*)$.
\end{lemma}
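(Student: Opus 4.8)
The plan is to read~\eqref{eq:pdga} as a first-order inhomogeneous linear differential equation for~$z$ over the polynomial ring $k(x)[y]$, namely $a\,D_y(z)+b\,z=c$ with
\[
a=Q^*,\qquad b=D_y(Q^*)+H,\qquad c=N=\sum_{i=0}^{\rho}\eta_iN_i{T^*}^{\rho-i},
\]
and then to verify the three hypotheses of Lemma~\ref{le:polysol} with the exponent $\beta=d_y^-+\rho\deg_y(T^*)$ announced in the statement. Since $\deg_y(a)=d_y^*\ge1$ and $\lc_y(a)=\lc_y(Q^*)$, everything hinges on the behaviour of~$b$ at its top $y$-degree.

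Recalling $H=-D_y(Q)/Q^--\rho\,t^*D_y(T^*)$, with $D_y(Q)/Q^-\in k[x,y]$ by Fact~\ref{prop:deflation}\emph{(ii)}, I would compute the $y$-leading part of each of the three summands of
\[
b=D_y(Q^*)-\frac{D_y(Q)}{Q^-}-\rho\,t^*D_y(T^*).
\]
Using $Q=Q^*Q^-$ (so $\lc_y(Q)=\lc_y(Q^*)\lc_y(Q^-)$) and $Q^*=t^*T^*$ (so $\lc_y(Q^*)=\lc_y(t^*)\lc_y(T^*)$), each summand has $y$-degree at most $d_y^*-1$, with coefficient of $y^{d_y^*-1}$ equal, respectively, to $d_y^*\lc_y(Q^*)$, to $d_y\,\lc_y(Q^*)$, and to $\rho\deg_y(T^*)\,\lc_y(Q^*)$. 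Summing, the coefficient of $y^{d_y^*-1}$ in~$b$ equals $\bigl(d_y^*-d_y-\rho\deg_y(T^*)\bigr)\lc_y(Q^*)=-\beta\,\lc_y(Q^*)$.

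When $\beta>0$ this coefficient is nonzero (characteristic zero), so $\deg_y(b)=d_y^*-1=\deg_y(a)-1$ and $-\lc_y(b)/\lc_y(a)=\beta\in\bN$; furthermore $\deg_y(c)=\deg_y(N)\le d_y+\rho\deg_y(T^*)-1$ by the bidegree bound on~$N$ recorded before the statement, whence $\beta\ge\deg_y(c)-\deg_y(a)+1$. All hypotheses of Lemma~\ref{le:polysol} are then met, giving $\deg_y(u)\le\beta$. In the remaining case $\beta=0$ one has $d_y^-=0$ (so $Q$ is squarefree, $Q^-=1$, $Q^*=Q$) and $\rho\deg_y(T^*)=0$ (so $\rho\,t^*D_y(T^*)=0$), whence $b=0$, and~\eqref{eq:pdga} reduces to $Q^*D_y(z)=N$ with $\deg_y(N)\le d_y^*-1<\deg_y(Q^*)$, forcing $D_y(z)=0$ and $\deg_y(u)\le0=\beta$ directly. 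The one delicate point is the leading-coefficient bookkeeping in the middle step: pinning down the exact $y$-degree and $y$-leading coefficient of the polynomial quotient $D_y(Q)/Q^-$ and of the product $t^*D_y(T^*)$, and checking that no cancellation drops $\deg_y(b)$ below $d_y^*-1$ — which is precisely what the $\beta=0$ case isolates. Everything else is a direct application of Lemma~\ref{le:polysol}.
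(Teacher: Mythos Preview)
Your proof is correct and follows essentially the same route as the paper: set $a=Q^*$, $b=D_y(Q^*)+H$, $c=N$, identify the coefficient of $y^{d_y^*-1}$ in~$b$ as $-\beta\,\lc_y(Q^*)$, and invoke Lemma~\ref{le:polysol}. The only differences are cosmetic: the paper first rewrites $b$ as $-Q^*D_y(Q^-)/Q^--\rho\,t^*D_y(T^*)$ via the product rule and then appeals to Fact~\ref{prop:deflation}\emph{(i)} for the leading coefficient, whereas you compute the three leading terms directly; and you explicitly dispose of the degenerate case $\beta=0$ (where $b=0$ and Lemma~\ref{le:polysol} does not literally apply), which the paper's proof passes over in silence.
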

\begin{proof}
Let $a=Q^*$ and $b=D_y(Q^*)+ H$.
By the definition of~$H$,
$b =-Q^*D_y(Q^-)/Q^- - \rho t^*D_y(T^*)$.
Fact~\ref{prop:deflation}\emph{(i)\/} implies that
$\lc_y(b)=-(d_y^- + \rho \deg_y(T^*))\lc_y(a)$.
Therefore, $\beta =
-\lc_y(b)/\lc_y(a)= d_y^- + \rho \deg_y(T^*)$.
As $\deg_y(N)< d_y
+ \rho\deg_y(T^*)$ and $d_y = d_y^* + d_y^-$, \ $\beta\ge
\deg_y(N)-d_y^*+1$.
The lemma holds by Lemma~\ref{le:polysol}.
\end{proof}

We end the present section using the approach of Almkvist and Zeilberger
to provide tight degree bounds on the outputs from Algorithms
\textsf{HermiteTelescoping} and \textsf{RatAZ}.

\begin{theorem}\label{th:AZtelesize}
Under Hypothesis~(H'), there exists a minimal telescoper $L \in
k[x]\langle D_x \rangle$ with certificate $g \in k(x, y)$ with
$\deg_x(L)\in \bigO(d_xd_yd_y^*) $ and $\bideg(g) \in \bigO(d_xd_yd_y^*)
\times \bigO(d_yd_y^*)$.
\end{theorem}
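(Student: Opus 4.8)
The plan is to run Almkvist and Zeilberger's reduction ``in reverse'', exploiting the tight control it gives on denominators and degrees, and then to extract a small solution from the associated linear system by Fact~\ref{le:polymatrix}\emph{(ii)}. By Corollary~\ref{cor:upperbound}, the minimal order~$\rho$ of a telescoper for~$f$ satisfies $\rho\le d_y^*$; fix this~$\rho$. By Lemma~\ref{le:dGf-for-F'/F}, $(N,H,Q^*)$ is a differential Gosper form of~$D_y(F)/F$, and one checks by direct substitution (as in the derivation of~\eqref{eq:pdga}) that for $(\eta_0,\dots,\eta_\rho)\in k(x)^{\rho+1}$ and $z\in k(x)[y]$ solving~\eqref{eq:pdga}, the operator $\sum_i\eta_iD_x^i$ is a telescoper for~$f$ with certificate $g=z/(Q^-{T^*}^\rho)$; conversely, by the correctness of the AZ reduction, a minimal telescoper of order~$\rho$ admits a certificate of exactly this shape. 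Applying the latter to any minimal telescoper and invoking Lemma~\ref{le:degb} to bound the corresponding~$z$ by $\deg_y(z)\le\beta:=d_y^-+\rho\deg_y(T^*)$, we obtain a nonzero solution — with the $\eta_i$ not all zero — of the $k(x)$-linear system $\cM\,(\eta_0,\dots,\eta_\rho,z_0,\dots,z_\beta)^{T}=0$ gotten by writing $z=\sum_{j=0}^\beta z_jy^j$ and equating coefficients of powers of~$y$ in~\eqref{eq:pdga}.

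Next I would bound the shape and $x$-degrees of~$\cM$. It has $n_1:=\rho+1$ columns attached to the~$\eta_i$ and $n_2:=\beta+1$ columns attached to the~$z_j$. The coefficient of~$\eta_i$ in~\eqref{eq:pdga} is $N_i{T^*}^{\rho-i}$, of $x$-degree at most $\deg_x(P)+\rho\deg_x(T^*)$ by the bidegree bound recalled for the~$N_i$ before Lemma~\ref{le:polysol}; under~(H') and since $\deg_x(T^*)\le d_x^*\le d_x$, this is $\bigO(d_xd_y^*)=:d_1$. The coefficients of the~$z_j$ come from $Q^*$ and from $D_y(Q^*)+H$; using Fact~\ref{prop:deflation} to see that $D_y(Q)/Q^-$ and $t^*D_y(T^*)$ have $x$-degree at most~$d_x^*$, all these coefficients have $x$-degree at most $d_x^*=:d_2=\bigO(d_x)$. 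Since~$\cM$ is not of full column rank, Fact~\ref{le:polymatrix}\emph{(ii)\/} with the split into the $n_1$ ``$\eta$-columns'' and the $n_2$ ``$z$-columns'' produces a nonzero kernel vector of $x$-degree at most $n_1d_1+n_2d_2$. As $\rho\le d_y^*$ one has $n_1d_1=\bigO(d_x(d_y^*)^2)$, and as $\beta=\bigO(d_yd_y^*)$ one has $n_2d_2=\bigO(d_xd_yd_y^*)$; using $d_y^*\le d_y$, the total is $\bigO(d_xd_yd_y^*)$.

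The remaining subtlety, which I expect to be the main obstacle, is that Fact~\ref{le:polymatrix}\emph{(ii)\/} returns \emph{some\/} small kernel vector, whose $\eta$-part could a priori vanish (indeed $Q^-{T^*}^\rho$ is itself a nonzero polynomial solution of the homogeneous part of~\eqref{eq:pdga}, so the $z$-block of~$\cM$ alone has a kernel). To force a nonvanishing $\eta$-part I would first discard all $z$-columns except a maximal $k(x)$-independent subset of them (of cardinality $s\le n_2$): the resulting submatrix still lacks full column rank, because the minimal-telescoper solution exhibits a dependency involving an $\eta$-column; Fact~\ref{le:polymatrix}\emph{(ii)\/} then yields a nonzero kernel vector whose $\eta$-part cannot vanish (else the chosen $z$-columns would be dependent) and whose degree is still at most $n_1d_1+n_2d_2$; padding with zeros gives a kernel vector $(\eta_0,\dots,\eta_\rho,z_0,\dots,z_\beta)$ of~$\cM$ with the $\eta_i$ not all zero and of $x$-degree $\bigO(d_xd_yd_y^*)$. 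Then $L:=\sum_{i=0}^\rho\eta_iD_x^i$ is a nonzero telescoper of order $\le\rho$, hence of order exactly~$\rho$ by minimality, so it is a minimal telescoper with certificate $g=z/(Q^-{T^*}^\rho)$; finally $\deg_x(L)=\max_i\deg_x(\eta_i)=\bigO(d_xd_yd_y^*)$, $\deg_x(g)\le\max\bigl(\deg_x(z),\,d_x^-+\rho\deg_x(T^*)\bigr)=\bigO(d_xd_yd_y^*)$, and $\deg_y(g)\le\max\bigl(\deg_y(z),\,d_y^-+\rho\deg_y(T^*)\bigr)=\beta=\bigO(d_yd_y^*)$, as claimed. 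Apart from this rank bookkeeping, the estimates are routine, the decisive quantitative inputs being $\rho\le d_y^*$ (Corollary~\ref{cor:upperbound}) and $\beta=d_y^-+\rho\deg_y(T^*)$ (Lemma~\ref{le:degb}).
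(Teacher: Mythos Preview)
Your proposal is correct and follows essentially the same route as the paper: reduce the telescoper equation to~\eqref{eq:pdga}, assemble the block matrix~$\cM=\begin{pmatrix}\cM_1&\cM_2\end{pmatrix}$ with the $\eta$-columns in~$\cM_1$ and the $z$-columns in~$\cM_2$, bound the $x$-degrees in each block by~$n_x\in\bigO(\rho d_x)$ and~$d_x$ respectively, and then invoke Fact~\ref{le:polymatrix}\emph{(ii)} together with $\rho\le d_y^*$ and $\beta\in\bigO(\rho d_y)$ to obtain a polynomial kernel vector of $x$-degree $\bigO(d_xd_yd_y^*)$.

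The one substantive difference is your handling of the null-space dimension.  The paper dispatches this in one line, asserting that ``by the minimality of~$\rho$, the dimension of the null space of~$\cM$ is~$1$'', so that \emph{any} small kernel vector produced by Fact~\ref{le:polymatrix}\emph{(ii)} automatically has nonvanishing $\eta$-part.  You instead notice that $z=Q^-{T^*}^\rho$ solves the homogeneous part of~\eqref{eq:pdga} and has $\deg_y z=\beta$, so $(0,Q^-{T^*}^\rho)$ lies in the kernel of~$\cM$; combined with the minimal-telescoper solution this makes the null space $2$-dimensional, not~$1$.  Your remedy---pass to a maximal independent subset of the $z$-columns before applying Fact~\ref{le:polymatrix}\emph{(ii)}, so that any resulting kernel vector must involve an $\eta$-column---is a clean way to force a nonzero $\eta$-part while keeping the degree bound $n_1d_1+n_2d_2$ intact.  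In effect your argument patches a small oversight in the paper's version; otherwise the two proofs coincide.
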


\begin{proof}
By Corollary~\ref{cor:upperbound}, there exists a
smallest $\rho\in \bN$ at most $d_y^*$, for which \eqref{eq:CT}~has a
solution with the $\eta_i$'s not all zero.
For this $\rho$, we estimate the size of the polynomial
matrix~$\cM$ derived from~\eqref{eq:pdga} by undetermined
coefficients.
By the remark on~$N$ after
Lemma~\ref{le:dGf-for-F'/F}, we have $\bideg(N) \le (n_x,n_y)$ where
$n_x := d_x + \rho \deg_x(T^*)-\rho\in \bigO(\rho d_x)$ and
$n_y := d_y + \rho \deg_y(T^*)-1\in \bigO(\rho d_y)$.
The matrix~$\cM$ contains two blocks
$\cM_1\in k[x]_{\leq n_x}^{(n_y+1)\times(\rho+1)}$ and $\cM_2\in k[x]_{\leq
d_x}^{(n_y+1)\times(\beta+1)}$, where $\beta\in \bigO(\rho d_y)$ is the
same as in Lemma~\ref{le:degb}.
By the minimality of $\rho$, the dimension of the null space of~$\cM$
is~1. So there exists $u\in k[x]^{n_y+1}$ with coefficients of
degree at most $n_x(\rho+1)+d_x(\beta+1)\in \bigO(d_x d_yd_y^*)$ in~$x$
such that $\cM\begin{pmatrix}\eta&z\end{pmatrix}^T=0$, which implies degree bounds in $x$ for $L$ and~$g$.
The degree bound in $y$ for $g$ is
obvious.
\end{proof}

We now analyse the complexity of the algorithm in
Fig.~\ref{fig:RatAZ}.

\begin{theorem}
Under Hypothesis~(H'),
Algorithm~\textsf{RatAZ} in Figure~\ref{fig:RatAZ} is correct and takes
$\bigOsoft(d_xd_y^{\omega}\rho^{\omega+2})$ ops, where $\rho$ is the
order of the minimal telescoper.
\end{theorem}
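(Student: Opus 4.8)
The plan is to mirror the structure of the complexity proof for \textsf{HermiteTelescoping}: account for the cost of the one-time preprocessing in Steps~1--2, then bound the cost of the $\ell$-th pass through the loop in Step~3, and finally sum over $\ell=0,\dots,\rho$ (Corollary~\ref{cor:upperbound} guarantees $\rho\le d_y^*\le d_y$, so the loop terminates). Correctness follows from Lemma~\ref{le:dGf-for-F'/F}, which certifies that $(N,H,Q^*)$ is a differential Gosper form of $D_y(F)/F$ and hence that the substitution $g=z/(Q^-{T^*}^\rho)$ reduces~\eqref{eq:CT} to~\eqref{eq:pdga}; together with Lemma~\ref{le:degb}, which bounds $\deg_y(z)$ by $\beta=d_y^-+\rho\deg_y(T^*)\in\bigO(\rho d_y)$, this shows that the finite linear system $\cM(\eta_i\ z_j)^T=0$ extracted in Step~3(a) captures exactly the telescopers of order~$\ell$. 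Minimality of the returned $L$ is immediate, since we iterate $\ell=0,1,\dots$ and stop at the first $\ell$ admitting a nontrivial $(\eta_i)$.

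For the complexity, the dominant work is solving the linear system in Step~3(a) for $\ell=\rho$. Using the bidegree bound $\bideg(N)\le(n_x,n_y)$ with $n_x\in\bigO(\rho d_x)$ and $n_y\in\bigO(\rho d_y)$ from the remark after Lemma~\ref{le:dGf-for-F'/F}, and $\beta\in\bigO(\rho d_y)$ from Lemma~\ref{le:degb}, the matrix $\cM$ has size $(n_y+1)\times(\rho+1+\beta+1)$, i.e.\ $\bigO(\rho d_y)\times\bigO(\rho d_y)$ rows and columns (here $\rho\le d_y$ absorbs the $\rho+1$ block into the $\beta+1$ one), with entries in $k[x]$ of degree $\bigO(\rho d_x)$. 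By minimality of~$\rho$ its null space has dimension~$1$, so Fact~\ref{le:polymatrix}\emph{(iii)} with $m,n\in\bigO(\rho d_y)$, $r=\bigO(\rho d_y)$, and $d\in\bigO(\rho d_x)$ computes a basis in $\bigOsoft(nmr^{\omega-2}d)=\bigOsoft((\rho d_y)^{\omega}\rho d_x)=\bigOsoft(d_x d_y^{\omega}\rho^{\omega+1})$ ops. The update in Step~3(c) of $\tilde N$, $N$, $H$, and $\beta$ involves only polynomial multiplications, differentiations, and additions on objects of bidegree $\bigO(\ell d_x)\times\bigO(\ell d_y)$, costing $\bigOsoft(\ell^2 d_x d_y)$ per pass, which is dominated by Step~3(a). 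Summing the Step~3(a) cost over $\ell=0,\dots,\rho$ gives $\sum_{\ell=0}^{\rho}\bigOsoft(d_x d_y^{\omega}\ell^{\omega+1})\subset\bigOsoft(d_x d_y^{\omega}\rho^{\omega+2})$ ops. The preprocessing in Steps~1--2 (gcd for $Q^-$, content extractions for $T,T^*$) costs $\bigOsoft(d_x d_y)$ by \cite[Cor.~11.9]{MCA2003} and is negligible, and the content computation and divisions in Step~4 are of lower order as in \textsf{HermiteTelescoping}. This yields the announced bound.

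The main obstacle I anticipate is the bookkeeping needed to justify that the running variables $\tilde N$, $N$, $\beta$, and $H$ maintained across iterations in Step~3(c) really coincide, at the start of pass~$\ell$, with the quantities $N=\sum_{i=0}^{\ell}\eta_iN_i{T^*}^{\ell-i}$, $\beta=d_y^-+\ell\deg_y(T^*)$, and $H=-D_y(Q)/Q^--\ell t^*D_y(T^*)$ appearing in Lemma~\ref{le:dGf-for-F'/F} and Lemma~\ref{le:degb}; this is the inductive "loop invariant" analogous to $D_x^i(f)=D_y(g_i)+a_i/Q^*$ in the Hermite case. Once this invariant is in place, the three remaining ingredients—reduction to~\eqref{eq:pdga} via Lemma~\ref{le:dGf-for-F'/F}, the $y$-degree bound via Lemma~\ref{le:degb}, and the size/rank estimate for $\cM$ feeding Fact~\ref{le:polymatrix}\emph{(iii)}—slot together routinely, and the geometric-type sum over $\ell$ produces the stated $\bigOsoft(d_x d_y^{\omega}\rho^{\omega+2})$ complexity.
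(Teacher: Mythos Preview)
Your proposal is correct and follows essentially the same line as the paper's proof: termination and correctness via Corollary~\ref{cor:upperbound}, Lemma~\ref{le:dGf-for-F'/F}, and Lemma~\ref{le:degb}; sizing the matrix~$\cM$ as $\bigO(\ell d_y)\times\bigO(\ell d_y)$ with entries of $x$-degree $\bigO(\ell d_x)$; invoking Fact~\ref{le:polymatrix}\emph{(iii)} to get $\bigOsoft(d_xd_y^{\omega}\ell^{\omega+1})$ per pass; and summing over $\ell\le\rho$. The only minor deviations are that the paper bounds the rank uniformly as $\ell+\beta+1$ or $\ell+\beta+2$ for every pass (rather than arguing dimension~$1$ only at $\ell=\rho$), and it records $\bigOsoft(d_xd_y^2)$ rather than $\bigOsoft(d_xd_y)$ for Steps~1--2; neither affects the final estimate.
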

\begin{proof}
By the existence of a telescoper, Corollary~\ref{cor:upperbound}, and
Lemma~\ref{le:degb}, the algorithm
always terminates and returns a minimal telescoper~$L$,
of order~$\rho$ at most~$d_y^*$.
Gcd computations dominate the cost of Steps 1 and~2, which
take $\bigOsoft(d_xd_y^2)$ ops.
For each $\ell \in \bN$, the dominating cost in Step~3 is
computing the null space of~$\cM$.
Let $n_y = d_y + \ell \deg_y(T^*)-1\in \bigO(\ell d_y)$ and $n_x=
d_x + \ell \deg_x(T^*)\in \bigO(\ell d_x)$. By the same argument as
in the proof of Theorem~\ref{th:AZtelesize}, the matrix $\cM$ is of
size at most $(n_y+1)\times (\ell + \beta + 2)$ and with coefficients of
degree at most~$n_x$.
Let $r$ be the rank of~$\cM$, which is either $\ell+\beta+2$ or
$\ell+\beta+1$ by construction.
Thus, a basis of the null space of~$\cM$ can
be computed within $\bigOsoft(n_x(n_y+1)(\ell+\beta+2)r^{\omega-2})$ ops
by Fact~\ref{le:polymatrix}\emph{(iii)}. Since $\beta\in \bigO(\ell
d_y)$, $\bigOsoft(n_x (n_y+1)(\ell+\beta+2)r^{\omega-2})$ is included in
$\bigOsoft(d_xd_y^{\omega}{\ell}^{\omega+1})$.
Since Step~3
terminates at $\ell=\rho$, the total cost of the algorithm is
$\sum_{\ell=0}^{\rho} d_xd_y^{\omega}{\ell}^{\omega+1}$ ops.
This is within the announced complexity,
$\bigOsoft(d_xd_y^{\omega}\rho^{\omega+2})$~ops.
\end{proof}

\begin{cor}
Algorithms \textsf{HermiteTelescoping} and \textsf{RatAZ} in
Fig.\ \ref{fig:HRTelescoping} and~\ref{fig:RatAZ} both output the
primitive minimal telescoper~$L$ together with its certificate~$g$,
which satisfy $\deg_{D_x}(L)\leq d_y^*$,
\ $\deg_x(L),\deg_x(g)\in\bigO(d_xd_yd_y^*)$, and
$\deg_y(g)\in\bigO(d_yd_y^*)$.
\end{cor}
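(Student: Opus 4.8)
The argument combines the correctness of the two algorithms, the uniqueness of minimal telescopers up to a left factor in~$k(x)^*$, and the existential degree bounds already proved in Theorem~\ref{th:AZtelesize}. First, by the correctness parts of the complexity theorems attached to Figures~\ref{fig:HRTelescoping} and~\ref{fig:RatAZ}, each of \textsf{HermiteTelescoping} and \textsf{RatAZ} returns a minimal telescoper~$L$ together with a certificate~$g$. Corollary~\ref{cor:upperbound} bounds the order of any minimal telescoper by~$d_y^*$, so the common order $\rho=\deg_{D_x}(L)$ satisfies $\rho\le d_y^*$, which is the first claimed bound.

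Next I check that both algorithms return \emph{the same} pair $(L,g)$, up to an inessential constant. Since any two minimal telescopers differ by a factor in~$k(x)^*$, the primitive part of a minimal telescoper is again a minimal telescoper and is unique up to a nonzero constant of~$k$; the last step of each algorithm divides the telescoper it has computed by its content, so both return this primitive minimal telescoper~$L$. The certificate is then pinned down by~$L$, up to a constant of integration in~$k(x)$: the output of \textsf{HermiteTelescoping} is $g=\sum_{i=0}^{\rho}\eta_ig_i$, a sum of proper rational functions and hence proper, while the output $s/(Q^-{T^*}^\rho)$ of \textsf{RatAZ} satisfies $\deg_y(s)\le\beta=\deg_y(Q^-{T^*}^\rho)$ by Lemma~\ref{le:degb}, so it differs from the proper certificate only by an element of~$k(x)$; by Lemma~\ref{le:unique} the proper parts coincide up to the above constant.

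It remains to bound $\deg_x(L)$, $\deg_x(g)$, and $\deg_y(g)$ under Hypothesis~(H'). Theorem~\ref{th:AZtelesize} exhibits a minimal telescoper $L_0\in k[x]\langle D_x\rangle$ with certificate $g_0\in k(x,y)$ such that $\deg_x(L_0)\in\bigO(d_xd_yd_y^*)$ and $\bideg(g_0)\in\bigO(d_xd_yd_y^*)\times\bigO(d_yd_y^*)$, and moreover $g_0$ has the explicit shape $z_0/(Q^-{T^*}^\rho)$ with $\bideg(z_0)\in\bigO(d_xd_yd_y^*)\times\bigO(d_yd_y^*)$ while $\bideg(Q^-{T^*}^\rho)\in\bigO(d_xd_y^*)\times\bigO(d_yd_y^*)$. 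Write $L_0=cL_0'$ with $c\in k[x]$ the content and $L_0'$ primitive; then $L_0'$ is again a primitive minimal telescoper, so $L_0'=\lambda L$ for some $\lambda\in k^\times$, whence $\deg_x(L)=\deg_x(L_0')\le\deg_x(L_0)\in\bigO(d_xd_yd_y^*)$, since passing to the primitive part cannot raise the degree in~$x$. For the certificate, multiplication by the scalar $c\lambda\in k(x)$ commutes with~$D_y$, so $c\lambda g$ is a certificate of~$L_0$ and $g_0=c\lambda g+\xi$ for some $\xi\in k(x)$; extracting the $y$-proper part of both sides and using the explicit form of~$g_0$ yields $\deg_y(g)\le\deg_y(g_0)\in\bigO(d_yd_y^*)$ and $\deg_x(g)\in\bigO(d_xd_yd_y^*)$, the polynomial-in-$y$ remainder $\xi$ having $x$-degree bounded by $\deg_x(z_0)+\deg_x\!\bigl(\lc_y(Q^-{T^*}^\rho)\bigr)\in\bigO(d_xd_yd_y^*)$. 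Combining these bounds with $\rho\le d_y^*$ gives the statement.

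The step I expect to require the most care is precisely this last transfer: Theorem~\ref{th:AZtelesize} guarantees only the existence of \emph{some} small minimal telescoper, whereas the corollary concerns the canonical primitive one actually returned and its associated certificate. The observation that makes the transfer painless is that taking the primitive part of a polynomial operator, or the $y$-proper part of a rational certificate, can only decrease the degree in~$x$ (and leaves the degree in~$y$ essentially unchanged), so the existential bounds are automatically inherited by the outputs of both algorithms.
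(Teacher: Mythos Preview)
Your proof is correct and follows essentially the same approach as the paper: invoke Corollary~\ref{cor:upperbound} for the order bound and Theorem~\ref{th:AZtelesize} for the degree bounds, noting that content removal in the final step of each algorithm yields the primitive minimal telescoper. The paper's own proof is a two-sentence sketch that simply asserts the bounds follow from these two results; you have filled in the transfer step (from the \emph{existential} bounds of Theorem~\ref{th:AZtelesize} to the \emph{actual} outputs) that the paper leaves implicit, via the observation that passing to the primitive part of~$L$ and to the proper part of~$g$ cannot increase the relevant degrees.

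One small point worth tightening: when you deduce $\deg_x(g)\in\bigO(d_xd_yd_y^*)$ from $c\lambda g=g_0-\xi$, division by $c\in k[x]$ can in principle raise the $x$-degree of a rational function by up to $\deg_x(c)$; you should note explicitly that $\deg_x(c)\le\deg_x(L_0)\in\bigO(d_xd_yd_y^*)$, so the bound is preserved. With that remark added, your argument is complete.
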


\begin{proof}
Both algorithms output the primitive minimal telescoper, as they
compute a minimal telescoper at an intermediate step, and
owing to their last step of content removal.
Bounds follow from Corollary~\ref{cor:upperbound} and
Theorem~\ref{th:AZtelesize}.
\end{proof}

\end{subsection}

\end{section}

\begin{section}{Nonminimal telescopers}\label{sec:nonminimal-order}
Here, we discard Hypothesis~(H) and trade the minimality of telescopers for smaller total output sizes.
To this end, we adapt and slightly extend the arguments
in~\cite{Lipshitz1988} and~\cite[\S\,3]{BCLSS2007}.

\medskip
Given $f=P/Q\in k(x, y)$ of bidegree~$(d_x, d_y)$, our goal is to
find a (possibly nonminimal) telescoper for~$f$.
It is sufficient to
find a nonzero differential operator $A(x,D_x,D_y)$ that annihilates~$f$.
Indeed, any $A\in k[x]\langle D_x, D_y\rangle \setminus \{0\}$ such that
$A(f)=0$ can be written $A= D_y^r(L + D_y R)$, where $L$ is
nonzero in $k[x]\langle D_x \rangle$ and $R\in k[x]\langle D_x, D_y\rangle$.
If $r=0$, then clearly $L$ is a telescoper for $f$; otherwise, $A(f)=0$ yields
$L(f) = D_y(-R(f)-\sum_{i=0}^{r-1}\frac{a_i}{i+1}y^{i+1})$
for some $a_i\in k(x)$,
which implies that $L$ is again a telescoper for~$f$. Moreover,
in both cases, $\deg_x(L)\le \deg_x(A)$ and $\deg_{D_x}(L)\le \deg_{D_x}(A)$.
Furthermore, for any $(i, j, \ell) \in \bN^3$, a direct calculation yields
\begin{equation}\label{eq:nonminimal}
x^iD_x^jD_y^{\ell}(f)=\frac{H_{i,j,\ell}}{Q^{j+\ell+1}},
\end{equation}
where $H_{i, j, \ell}\in k[x, y]$ and $\deg_x(H_{i,j,\ell})\leq
(j+\ell+1)d_x+i-j$ and $\deg_y(H_{i,j,\ell})\leq
(j+\ell+1)d_y-\ell$.  From these inequalities, we derive
the size and complexity estimates in Figure~\ref{fig:complexity} (bottom half), using
two different filtrations of $k[x]\langle D_x, D_y\rangle$.

\medskip\noindent{\bf Lipshitz's filtration (\cite{Lipshitz1988}).}
Let $F_\nu$ be the $k$-vector space
of dimension ${\sf f}_\nu:=\binom{\nu+3}{3}$
spanned by $\{\,x^iD_x^jD_y^{\ell} \mid i+j+\ell
\leq \nu\,\}$.
By~\eqref{eq:nonminimal}, $F_\nu(f)$ is contained
in the vector space of dimension
${\sf g}_\nu:=\left((\nu+1) d_x+\nu+1\right)\left((\nu+1) d_y+1\right)$
spanned by
$\bigl\{\,\frac{x^i y^j}{Q^{\nu+1}} \mid i \le (\nu+1) d_x+\nu, \ j \le (\nu+1) d_y\,\bigr\}$.
Choosing
$\nu=6(d_x+1)(d_y+1)$ yields ${\sf f}_\nu > {\sf g}_\nu$;
therefore, there exists $A$ in $k\langle x, D_x, D_y\rangle
\setminus \{0\}$ with total degree at most
$6(d_x+1)(d_y+1)$ in $x$,
$D_x$, and $D_y$ that annihilates~$f$.
Moreover, $A$~is found by
linear algebra in dimension
$\bigO((d_x d_y)^3)$.

\smallskip\noindent{\bf A better filtration (\cite{BCLSS2007}).}
Instead of taking total
degree, set $F_{\kappa, \nu}$ to the $k$-vector space of dimension
${\sf f}_{\kappa,\nu}:=(\kappa+1)\binom{\nu+2}{2}$
generated by
$\{\, x^iD_x^jD_y^{\ell} \mid i\le \kappa, \ j+\ell
\leq \nu\,\}$.
By~\eqref{eq:nonminimal},  $F_{\kappa, \nu}(f)$ is contained
in the vector space of dimension
\hbox{${\sf g}_{\kappa,\nu}:=((\nu+1) d_x+\kappa+1)((\nu+1) d_y+1)$ spanned by}

\noindent
$\bigl\{\,\frac{x^i y^j}{Q^{\nu+1}} \mid i \le (\nu+1) d_x+\kappa, \ j \le (\nu+1) d_y\,\bigr\}$.
Choosing $\kappa=3d_xd_y$ and $\nu=6d_y$ results in ${\sf f}_{\kappa,\nu} > {\sf g}_{\kappa,\nu}$.
This implies the existence of $A$ in $k\langle x, D_x, D_y\rangle
\setminus \{0\}$ with total degree at most $6d_y$ in $D_x$ and $D_y$
and degree at most $3d_x d_y$ in~$x$ that annihilates $f$.
Again, $A$ is found by linear algebra over~$k$, but in smaller dimension
$\bigO(d_x d_y^3)$.

\end{section}

\begin{section}{Implementation and timings}\label{sec:implementation}

We implemented in Maple~13 all the algorithms described;
as we used Maple's generic solver \verb+SolveTools:-Linear+,
all of our implementations are deterministic.

The evaluation-interpolation algorithm \textsf{HermiteEvalInterp}
for Hermite reduction (Fig.~\ref{fig:HREvaInter}) does not
perform well, mainly because Maple's rational interpolation routines
are far too slow.
We thus implemented Algorithm~\textsf{HermiteReduce} (original
version) in~\cite[\S\,2.2]{BronsteinBook} (carefully avoiding
redundant extended gcd calculations), and noted that it performs
better.

We then implemented a variant of Algorithm \textsf{HermiteTelescoping}
in Figure~\ref{fig:HRTelescoping}, using \textsf{HermiteReduce} in
place of \textsf{HermiteEvalInterp}, and including the optimisation at
the end of \S\,\ref{sec:compl-estim},
refined by additional modular calculations.

For a rational function,
Algorithm \textsf{HermiteTelescoping} returns the minimal telescoper~$L$ and the certificate~$g$.
The algorithm separates the computation for~$L$
from that for~$g$.  Indeed, $g$ is formed by
the coefficients of~$L$,~$g_0$, the $\tilde g_i$ and their derivatives given in Figure~\ref{fig:HRTelescoping}.
This feature enables us to either return the certificate~$g$ as a sum of unnormalised rational functions,
or a normalised rational function.

A selection of timings by this implementation and others are given in
Table~\ref{tab:random};
our code, the full table, as well as the random inputs
are given in~\cite{OurSoft}.
For our experiments, we exhaustively
considered all 49 bidegree patterns in factorisations of denominators
$Q_1\cdots Q_m^m$ ($m\le5$) that add up to bidegree~(5,5), and
generated corresponding random denominators, imposing the
integers of the expanded forms to have around 26 digits.
Numerators were generated as random bidegree-(5,5) polynomials with
coefficients of 26 digits.

\begin{table}
\begin{scriptsize}
\tabcolsep2pt
\begin{center}
\begin{tabular}{r|rrrrrrrr}
No.& \tt AZ & \tt Abr & \tt RAZ & \tt H1 & \tt H2 & \tt HO & \tt EI & \tt MG \\
\hline
29 & 44 & 72 & 32 & 28 & 36 & 20 & 608 & 528 \\
43 & 52 & 76 & 36 & 20 & 24 & 32 & 652 & 584 \\
46 & 4268 & 1436 & 784 & 492 & 1288 & 752 & 343413 & 18945 \\
49 & 474269 & 34694 & 20977 & 10336 & 36254 & 22417 & $\infty$ & 652968
\end{tabular}
\end{center}
\end{scriptsize}
\vskip-12pt
\caption{Creative telescoping on random instances}\label{tab:random}
\begin{small}
Timings in ms for algorithms in Table~\ref{tab:algos} (stopped after 30~min).
\end{small}
\vskip-10pt
\end{table}

\smallskip\noindent {\bf Application to diagonals.}
The diagonal of a formal power series $f= \sum_{i, j\ge 0}f_{i,j}
x^i y^j$ in $k[[x, y]]$ is defined to be the power series $\Delta(f)
:= \sum_{i=0}^\infty f_{i,i} x^i$.
For a D-finite power series~$f$, it is known to be
D-finite~\cite{Lipshitz1988}, and it is even algebraic
for a bivariate rational function $f \in k(x, y) \cap
k[[x,y]]$~\cite[\S\,6.3]{Stanley1999}.
A linear differential operator $L\in k(x)\langle D_x\rangle$ that
annihilates~$\Delta(f)$ can then be computed via rational-function
telescoping, owing to the following classical lemma from~\cite{Lipshitz1988}.

\vspace{-0.2cm}
\begin{lemma}
Any telescoper for $f(y, \frac{x}{y})/y$ annihilates $\Delta(f)$.
\end{lemma}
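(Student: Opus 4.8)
The plan is to express the diagonal as a ``residue'' in $y$ and to exploit that this operation both commutes with $L$ and annihilates every $D_y$-image. Set $g:=f(y,x/y)/y$. Substituting $x\mapsto y$, $y\mapsto x/y$ in $f=\sum_{i,j\ge0}f_{i,j}x^iy^j$ gives, formally, $g=\sum_{i,j\ge0}f_{i,j}\,x^j y^{i-j-1}$. The first task is to fix the ring in which this makes sense: grouping by powers of~$x$, the coefficient of~$x^j$ in~$g$ is $y^{-j-1}\sum_{i\ge0}f_{i,j}y^i$, a genuine Laurent series in~$y$, so $g$ lives naturally in the iterated Laurent-series field $k((y))((x))$ (indeed in $k((y))[[x]]$). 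Since $k((y))((x))$ contains $k(x,y)$ compatibly with $D_x$ and $D_y$, it also houses any certificate $G\in k(x,y)$ and the telescoping identity $L(g)=D_y(G)$, now read inside $k((y))((x))$.

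Next I would introduce the $k$-linear ``coefficient of $y^{-1}$'' map sending $\sum_j c_j(y)x^j\in k((y))((x))$ to $\sum_j\bigl([y^{-1}]c_j(y)\bigr)x^j\in k((x))$, and check three easy facts: (i) from the expansion of~$g$ above, $[y^{-1}]$ selects exactly the terms with $i=j$, hence $[y^{-1}]g=\sum_j f_{j,j}x^j=\Delta(f)$; (ii) because $L=\sum_i\eta_i(x)D_x^i$ involves neither $y$ nor $D_y$, it acts only on the $x$-coefficients and so commutes with $[y^{-1}]$; (iii) the coefficient of~$y^{-1}$ in a derivative $D_y c_j(y)$ of a Laurent series is always~$0$, so $[y^{-1}]$ kills $D_y$-images. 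Granting these, applying $[y^{-1}]$ to $L(g)=D_y(G)$ gives $L(\Delta(f))=L\bigl([y^{-1}]g\bigr)=[y^{-1}]\bigl(L(g)\bigr)=[y^{-1}]\bigl(D_y(G)\bigr)=0$, i.e.\ $L$ annihilates~$\Delta(f)$.

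The only delicate point is the formal set-up in the first step. After the substitution, $f(y,x/y)$ has, once summed over all powers of~$x$, unboundedly many negative powers of~$y$, so it does not lie in $k((x))((y))$; one must expand $y$-adically first and $x$-adically second, which forces $k((y))((x))$ (equivalently $k((y))[[x]]$), and one has to make sure the embedding $k(x,y)\hookrightarrow k((y))((x))$ used for the certificate agrees with the power-series substitution used for~$g$ (it does: both amount to treating $x/y$ as $x$-adically small). Once that is settled, facts (i)--(iii) are routine and the conclusion is immediate. (Analytically this is just the familiar identity $\Delta(f)(x)=\frac{1}{2\pi i}\oint f(y,x/y)\,\frac{dy}{y}$ over a circle $|y|=\varepsilon$ with $|x|\ll\varepsilon$, together with the vanishing of the contour integral of $D_y(G)$.)
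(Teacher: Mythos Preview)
The paper does not actually prove this lemma: it is stated without proof and attributed to Lipshitz~\cite{Lipshitz1988} as a classical fact. So there is no proof in the paper to compare against.

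Your argument is correct and is precisely the standard formal-residue proof underlying the cited result. You have identified the right ambient ring $k((y))[[x]]$, the three key observations (extraction of the diagonal as $[y^{-1}]g$, commutation of $[y^{-1}]$ with operators in $k[x]\langle D_x\rangle$, and vanishing of $[y^{-1}]$ on $D_y$-images), and the one genuine subtlety, namely that the rational-function embedding $k(x,y)\hookrightarrow k((y))((x))$ must agree with the power-series substitution defining~$g$. Your justification of that point is a bit terse but accurate: since $f\in k(x,y)\cap k[[x,y]]$ forces $Q(0,0)\neq0$, the denominator $Q(y,x/y)$ has nonzero constant term $Q(y,0)$ in~$x$ and is therefore a unit in $k(y)[[x]]$, so both descriptions of~$g$ land in $k((y))[[x]]$ and coincide (the substitution $x\mapsto y$, $y\mapsto x/y$ being a well-defined $k$-algebra map $k[[x,y]]\to k((y))[[x]]$ that restricts to the rational substitution). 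With that settled, the rest is indeed routine. The analytic paraphrase via the contour integral you mention at the end is exactly the classical picture.
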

By this lemma, it suffices to compute a telescoper without its certificate to get an annihilator.
Algorithm \textsf{HermiteTelescoping} is suitable for
this task, since it separates computation of telescopers and
certificates.
Alternatively, for $f=P/Q$, we can compute an annihilator of
$\Delta(f)$ either as the differential resolvent of the resultant
$\textrm{Res}_y(Q,P-\tau D_yQ)$, or simply \emph{guess\/} it from the
first terms of the series expansion of~$\Delta(f)$.

We compare the various algorithms
on an example borrowed from~\cite{Flaxman-2004-SMM}
(timings of execution are given in Table~\ref{tab:diags}):
\begin{equation}\label{eq:diags}
f = \frac{1}{1-x-y-xy(1-x^d)}, \ \text{where} \ d \in \bN.
\end{equation}

All computer calculations have been performed on a Quad-Core Intel
Xeon X5482 processor at 3.20GHz, with 3GB of RAM, using up to 6.5GB of
memory allocated by Maple.

\begin{table}
\begin{scriptsize}
\tabcolsep2pt
\begin{center}
\begin{tabular}{r|rrrrrrrrrr}
$d$ & \tt AZ & \tt Abr & \tt RAZ & \tt H1 & \tt H2 & \tt HO & \tt RR & \tt GHP \\
\hline
4 & 176 & 136 & 100 & 116 & 208 & 108 & 220 & 956 \\
8 & 3032 & 4244 & 4380 & 1976 & 5344 & 4396 & 10336 & 154409 \\
10 & 11740 & 12816 & 7108 & 7448 & 24565 & 7076 & 46882 & 1118313 \\
\hline
4 & 184 & 168 & 120 & 120 & 220 & 116 & 224 & 1340 \\
8 & 3540 & 3704 & 2540 & 2092 & 6976 & 2516 & 10348 & 271480 \\
10 & 16817 & 17013 & 9200 & 8068 & 32218 & 9092 & 46750 & $\infty$
\end{tabular}
\end{center}
\end{scriptsize}
\vskip-12pt
\caption{Computation of the diagonals of~\eqref{eq:diags}}\label{tab:diags}
\begin{small}
Timings in ms by creative telescoping of $f(y,x/y)/y$ (upper half) or
$f(y/x,x)/x$ (second half).  Algorithms listed in Table~\ref{tab:algos}.
\end{small}
\end{table}

\begin{table}
\vskip-3pt
\begin{scriptsize}
\begin{list}{}{\itemsep-3pt}
\item[\tt AZ] \verb+DETools[Zeilberger]+
\item[\tt Abr] \verb+AZ+ with Abramov's denominator bound by option \verb+gosper_free+
\item[\tt RAZ] Algorithm \textsf{RatAZ} of Fig.~\ref{fig:RatAZ}, with
  lower-bound prediction
\item[\tt H1] our Hermite-based approach, without certificate normalisation
\item[\tt H2] \verb+H1+, but with normalised certificate
\item[\tt HO] \textsf{RAZ}, solving~\eqref{eq:CT} by Horowitz--Ostrogradsky
\item[\tt EI] \verb+H1+ with evaluation and interpolation for calculations over~$k(x)$
\item[\tt MG] \verb+Mgfun+'s creative telescoping for general D-finite functions
\item[\tt RR] telescoper computation by resultant and differential resolvent
\item[\tt GHP] telescoper guessing by diagonal expansion and
  Hermite--Pad\'e
\end{list}
\end{scriptsize}
\vskip-15pt
\caption{List of the algorithms for the experiments}\label{tab:algos}
\vskip-10pt
\end{table}

\end{section}

{\scriptsize

\bibliographystyle{abbrv}

}

\end{document}